\newtheorem{theorem}{Theorem}[section]
\newtheorem{proposition}[theorem]{Proposition}
\newtheorem{observation}[theorem]{Observation}
\newtheorem{lemma}[theorem]{Lemma}
\newtheorem{corollary}[theorem]{Corollary}
\newtheorem{pseudo-theorem}[theorem]{Pseudo-Theorem}
\newcommand{\F}{\mathcal{F}}
\newcommand{\C}{\mathcal{C}}
\newcommand{\I}{\mathcal{I}}
\newcommand{\ct}{C_{\text{tight}}}
\newcommand{\cs}{C_{\text{slack}}}
\renewcommand{\S}{\mathcal{S}}
\renewcommand{\P}{\mathcal{P}}
\newcommand{\bE}{\ensuremath{\mathbf{E}}}
\newcommand{\qf}{Q_{\text{f}}}
\newcommand{\qp}{Q_{\text{p}}}
\begin{document}

\title{Approximation algorithms for stochastic clustering\thanks{Research supported in part by NSF Awards CNS-1010789, CCF-1422569, CCF-1749864, CCF-1566356, CCF-1717134, a gift from Google, Inc., and by research awards from Adobe, Inc.\ and Amazon, Inc.}}

\author{  
David G. Harris\thanks{Department of Computer Science, University of Maryland,
College Park, MD 20742. Email:
\texttt{davidgharris29@gmail.com}}
\and
Shi Li\thanks{University at Buffalo, Buffalo, NY. Email: \texttt{shil@buffalo.edu}}
\and
Thomas Pensyl\thanks{Bandwidth, Inc. Raleigh, NC. Email: \texttt{tpensyl@bandwidth.com}}
\and
Aravind Srinivasan\thanks{Department of Computer Science and Institute for Advanced Computer Studies, University of Maryland, College Park, MD 20742.  Email: 
\texttt{srin@cs.umd.edu}} 
\and Khoa Trinh\thanks{Google, Mountain View, CA 94043. Email: \texttt{khoatrinh@google.com}}}

\date{}

\maketitle

\begin{abstract}
We consider stochastic settings for clustering, and develop provably-good approximation algorithms for a number of these notions. These algorithms yield better approximation ratios compared to the usual deterministic clustering setting. Additionally, they offer a number of advantages including clustering which is fairer and has better long-term behavior for each user. In particular, they ensure that \emph{every user} is guaranteed to get good service (on average). We also complement some of these with impossibility results. 

KEYWORDS: clustering, $k$-center, $k$-median, lottery, approximation algorithms
\end{abstract}

This is an extended version of a paper that appeared in the \emph{Conference on Neural Information Processing Systems} (NeurIPS) 2018.

\section{Introduction}
\emph{Clustering} is a fundamental problem in machine learning and data science. A general clustering task is to partition the given datapoints such that points inside the same cluster are ``similar'' to each other.  More formally, consider a set of datapoints $\C$ and a set of ``potential cluster centers'' $\F$, with a metric $d$ on $\C \cup \F$. We define $n := |\C \cup \F|$.  Given any set $\S \subseteq \F$, each $j \in \C$ is associated with the key statistic $d(j, \S) = \min_{i \in \S} d(i, j)$. The typical clustering task is to select a set $\S \subseteq \F$ which has a small size and which minimizes the values of $d(j, \S)$.

The size of the set $\S$ is often fixed to a value $k$, and we typically ``boil down'' the large collection of values $d(j, \S)$ into a single overall objective function. A variety of objective functions and assumptions on sets $\C$ and $\F$ are used. The most popular problems include\footnote{In the original version of the $k$-means problem, $\C$ is a subset of $\mathbb{R}^\ell$ and $\F = \mathbb{R}^\ell$ and $d$ is the Euclidean metric. By standard discretization techniques (see, e.g., \cite{matousek, feldman}), the set $\F$ can be reduced to a polynomially-bounded set with only a small loss in the value of the objective function.}
\begin{itemize}
	\item the $k$-center problem: minimize the value $\max_{j \in \C} d(j, \S)$ given that $\F = \C$.
	\item the $k$-supplier problem: minimize the value $\max_{j \in \C} d(j, \S)$ (where $\F$ and $\C$ may be unrelated); 
	\item the $k$-median problem: minimize the summed value $\sum_{j \in \C} d(j, \S)$; and 
	
	\item the $k$-means problem: minimize the summed square value $\sum_{j \in \C} d(j, \S)^2$.	
\end{itemize}

An important special case is when $\C = \F$ (e.g. the $k$-center problem); since this often occurs in the context of data clustering, we refer to this as the \textbf{self-contained clustering (SCC)} setting.

These classic NP-hard problems have been studied intensively for the past few decades. There is an alternative interpretation from the viewpoint of operations research:  the sets $\F$ and $\C$ can be thought of as ``facilities'' and ``clients'', respectively. We say that $i \in \F$ is \emph{open} if $i$ is placed into the solution set $\S$. For a set $\S \subseteq \F$ of open facilities, $d(j, \S)$ can then be interpreted as the connection cost of client $j$. This terminology has historically been used for clustering problems, and we adopt it throughout for consistency. However, our focus is on the case in which $\C$ and $\F$ are arbitrary abstract sets in the data-clustering setting.

Since these problems are NP-hard, much effort has been paid on algorithms with ``small'' provable \emph{approximation ratios/guarantees}: i.e., polynomial-time algorithms that produce solutions of cost at most $\alpha$ times the optimal. 
The current best-known approximation ratio for $k$-median is $2.675$ by Byrka et.\ al.\ \cite{byrka} and it is NP-hard to approximate this problem to within a factor of $1+2/e \approx 1.735$ \cite{jms}. The recent breakthrough by Ahmadian et.\ al.\ \cite{svensson} gives a $6.357$-approximation algorithm for $k$-means, improving on the previous approximation guarantee of $9+\epsilon$  based on local search \cite{kanungo}. Finally, the $k$-supplier problem is ``easier'' than both $k$-median and $k$-means in the sense that a simple $3$-approximation algorithm \cite{hochbaum} is known, as is a $2$-approximation for $k$-center problem: we cannot do better than these approximation ratios unless $\text{P} = \text{NP}$ \cite{hochbaum}. 

While optimal approximation algorithms for the center-type problems are well-known, one can easily demonstrate instances where such algorithms return a worst-possible solution: (i) all clusters have the same worst-possible radius ($2T$ for $k$-center and $3T$ for $k$-supplier where $T$ is the optimal radius) and (ii) almost all data points are on the circumference of the resulting clusters. Although it is NP-hard to improve these approximation ratios, our new randomized algorithms  provide significantly better ``per-point'' guarantees. For example, we achieve a new ``per-point'' guarantee $\bE[d(j,\S)] \leq (1+2/e)T \approx 1.736T$, while respecting the usual guarantee $d(j,\S) \leq 3T$ with probability one. \emph{Thus, while maintaining good global quality with probability one, we also provide superior stochastic guarantees for each user.}

The general problem we study in this paper is to develop approximation algorithms for center-type problems where $\S$ is drawn from a probability distribution over $k$-element subsets of $\F$; we refer to these as \emph{$k$-lotteries.} We aim to construct a $k$-lottery  $\Omega$ achieving certain guarantees on the distributional properties of $d(j,\S)$. The classical $k$-center problem can be viewed as the special case where the distribution $\Omega$ is deterministic, that is, it is supported on a single point.
 Our goal is to find an \emph{approximating distribution} $\tilde \Omega$ which matches the \emph{target distribution} $\Omega$ as closely as possible for each client $j$.
 
Stochastic solutions can circumvent the approximation hardness of a number of classical center-type problems. There are a number of additional applications where stochasticity can be beneficial. We summarize three here: smoothing the integrality constraints of clustering, solving repeated problem instances, and achieving fair solutions.

\smallskip \noindent \textbf{Stochasticity as interpolation.} In practice, \emph{robustness} of the solution is often more important than achieving the absolute optimal value for the objective function.  One potential problem with the (deterministic) center measure is that it can be highly non-robust. As an extreme example, consider $k$-center with $k$ points, each at distance $1$ from each other. This clearly has value $0$ (choosing $\S = \C$). However, if a single new point at distance $1$ to all other points is added, then the solution jumps to $1$. Stochasticity alleviates this discontinuity: by choosing $k$ facilities uniformly at random among the full set of $k+1$,  we can ensure that $\bE[d(j,\S)] = \frac{1}{k+1}$ for every point $j$, a much smoother transition. 

\smallskip \noindent \textbf{Repeated clustering problems.}
Consider clustering problems where the choice of $\mathcal{S}$ can be changed periodically: e.g., $\mathcal{S}$ could be the set of $k$ locations in the cloud chosen by a service-provider. This set $\mathcal{S}$ can be shuffled periodically in a manner transparent to end-users. For any user $j \in \C$, the statistic $d(j,\S)$ represents the latency of the service $j$ receives (from its closest service-point in $\mathcal{S}$). If we aim for a fair or minmax service allocation, then our $k$-center stochastic approximation results ensure that, with high probability, \emph{every client} $j$ gets long-term average service of at most around $1.736 T$. The average here is taken over the periodic re-provisioning of $\mathcal{S}$. (Furthermore, we have the risk-avoidance guarantee that in no individual provisioning of $\mathcal{S}$ will any client have service greater than $3T$.) 

\smallskip \noindent \textbf{Fairness in clustering.} The classical clustering problems combine the needs of many different points (elements of $\C$) into one metric. However, clustering (and indeed many other ML problems) are increasingly driven by inputs from parties with diverse interests. Fairness in these contexts has taken on greater importance in the current environment where decisions are increasingly made by algorithms and machine learning. Some examples  of recent concerns include 
the accusations of, and fixes for, possible racial bias in Airbnb rentals 
\cite{wpost:Airbnb} and 
the finding that setting the gender to ``female" in Google's \emph{Ad Settings} resulted in
getting fewer ads for high-paying jobs \cite{ad-privacy}. 
Starting with older work such as \cite{doi:10.1056/NEJM199902253400806}, there have been highly-publicized
works on bias in allocating scarce resources -- e.g., racial discrimination in hiring applicants who have very similar 
resum\'{e}s \cite{bertrand-mullainathan}. 
Additional work discusses the possibility of bias in electronic marketplaces, whether human-mediated or not \cite{RAND:RAND12115,wpost:Airbnb}.

A fair allocation should provide good service guarantees \emph{to each user individually}. In data clustering settings where a user corresponds to a datapoint, this means that every point $j \in \C$ should be guaranteed a good value of $d(j, \S)$. This is essentially the goal of $k$-center type problems, but the stochastic setting broadens the meaning of good per-user service.

Consider the following scenarios. Each user, either explicitly or implicitly, submits their data (corresponding to a point in $\C$) to an aggregator such as an e-commerce site. A small number $k$ of users are then chosen as ``influencer'' nodes; for instance, the aggregator may give them a free product sample to influence the whole population in aggregate, as in \cite{DBLP:journals/toc/KempeKT15}, or the aggregator may use them as a sparse ``sketch'', so that each user gets relevant recommendations from a influencer which is similar to them. Each point $j$ would like to be in a cluster that is ``high quality'' \emph{from its perspective}, with $d(j, \S)$ being a good proxy for such quality. Indeed, there is increasing emphasis on the fact that organizations monetize their user data, and that users need to be compensated for this (see, e.g., \cite{lanier-book,ibarra-etal:data}). This is a transition from viewing data as capital to viewing \emph{data as labor}. A concrete way for users (i.e., the data points $j \in \C$) to be compensated in our context is for each user to get a guarantee on their solution quality: i.e., bounds on $d(j, \S)$.

\subsection{Our contributions and overview}
In Section~\ref{chance-sec}, we encounter the first clustering problem which we refer to as \emph{chance $k$-coverage}: namely, every client $j$ has a distance demand $r_j$ and probability demand $p_j$, and we wish to find a distribution satisfying $\Pr[d(j,\S) \leq r_j] \geq p_j$. We show how to obtain an approximation algorithm to find an approximating distribution $\tilde \Omega$ with\footnote{Notation such as ``$S \sim \tilde \Omega$" indicates that the random set $S$ is drawn from the distribution $\tilde \Omega$.} 
$$
\Pr_{\S \sim \tilde \Omega}[d(j, \S) \leq 9 r_j] \geq p_j.
$$
In a number of special cases, such as when all the values of $p_j$ or $r_j$ are the same, the distance factor $9$ can be improved to $3$, \emph{which is optimal}; it is an interesting question to determine whether this factor  can also be improved in the general case.

In Section~\ref{sec2}, we consider a special case of chance $k$-coverage, in which $p_j = 1$ for all clients $j$. This is equivalent to the classical (deterministic) $k$-supplier problem. Allowing the approximating distribution $\tilde \Omega$ to be stochastic yields significantly better distance guarantees than are possible for $k$-supplier or $k$-center. For instance, we find an approximating distribution $\tilde \Omega$ with 
$$
\forall j \in \mathcal C \qquad \bE_{\S \sim \tilde \Omega}[d(j, \S)] \leq 1.592 T ~\mbox{and}~ \Pr[d(j, \S) \leq 3T] = 1
$$
where $T$ is the optimal solution to the (deterministic) $k$-center problem. By contrast, deterministic polynomial-time algorithms cannot guarantee $d(j, \S) < 2 T$ for all $j$, unless $\text{P} = \text{NP}$ \cite{hochbaum}. 

In Section~\ref{lb-sec}, we show a variety of lower bounds on the approximation factors achievable by efficient algorithms (assuming $\text{P} \neq \text{NP}$). For instance, we show that our approximation algorithm for  chance $k$-coverage  with equal $p_j$ or $r_j$ has the optimal distance approximation factor 3, that our approximation algorithm for $k$-supplier has optimal approximation factor $1 + 2/e$, and that the approximation factor $1.592$ for $k$-center cannot be improved below $1 + 1/e$.

In Section~\ref{approx-alg}, we consider a different type of stochastic approximation problem based on expected distances: namely, every client has a demand $t_j$, and we seek a $k$-lottery $\Omega$ with $\bE[d(j,\S)] \leq t_j$. We show that we can leverage \emph{any given} $\alpha$-approximation algorithm for $k$-median to produce a $k$-lottery $\tilde \Omega$ with 
$\bE[d(j,\S)] \leq \alpha t_j$. (Recall that the current-best $\alpha$ here is $2.675$ as shown in \cite{byrka}.) 

In Section~\ref{determinism-sec}, we consider the converse problem to Section~\ref{sec2}: if we are given a $k$-lottery $\Omega$ with $\bE[d(j,\S)] \leq t_j$, can we produce a single deterministic set $\S$ so that $d(j, \S) \approx t_j$ and $|\S| \approx k$? We refer to this as a \emph{determinization} of $\Omega$. We show a variety of determinization algorithms. For instance, we are able to find a set $\S$ with $|\S| \leq 3 k$ and $d(j, \S) \leq 3 t_j$. We also show a number of nearly-matching lower bounds.

\subsection{Related work}
With algorithms increasingly running our world, there has been substantial recent interest on incorporating fairness systematically into algorithms and machine learning. One important notion is \emph{disparate impact}: in addition to requiring that \emph{protected attributes} such as gender or race not be used (explicitly) in decisions, this asks that decisions not be disproportionately different for diverse protected classes \cite{DBLP:conf/kdd/FeldmanFMSV15}. This is developed further in the context of clustering in the work of \cite{DBLP:conf/nips/Chierichetti0LV17}. Such notions of \emph{group fairness} are considered along with \emph{individual fairness} -- treating similar individuals similarly -- in \cite{DBLP:conf/icml/ZemelWSPD13}. See \cite{DBLP:conf/innovations/DworkHPRZ12} for earlier work that developed foundations and connections for several such notions of fairness. 

In the context of the location and sizing of services, there have been several studies indicating that proactive on-site provision of healthcare improves health outcomes significantly: e.g., mobile mammography for older women \cite{reuben2002randomized}, mobile healthcare for reproductive health in immigrant women \cite{guruge2010immigrant}, and the use of a community mobile-health van for increased access to prenatal care \cite{edgerley2007use}. Studies also indicate the impact of distance to the closest facility on health outcomes: see, e.g., \cite{mccarthy2007veterans,mooney2000travel,schmitt2003influence}. 
Such works naturally suggest tradeoffs between resource allocation (provision of such services, including sizing -- e.g., the number $k$ of centers) and expected health outcomes.

While much analysis for facility-location problems has focused on the static case, other works have examined a similar lottery model for center-type problems. In~\cite{outliers, srdr}, Harris et.\ 
al.\ analyzed models similar to chance $k$-coverage and minimization of $\bE[d(j,\S)]$, but applied to knapsack center and matroid center problems; they also considered robust versions (in which a small subset of clients may be denied service). While the overall model was similar to the ones we explore here, the techniques are somewhat different. Furthermore, these works focus only on the case where the target distribution is itself deterministic.

Similar stochastic approximation guarantees have appeared in the context of approximation algorithms for static problems, particularly $k$-median problems. In \cite{charikar-li}, Charikar \& Li discussed a randomized procedure for converting a linear-programming relaxation in which a client has \emph{fractional} distance $t_j$, into a distribution $\Omega$ satisfying $\bE_{\S \sim \Omega}[d(j, \S)] \leq 3.25 t_j$. This property can be used, among other things, to achieve a $3.25$-approximation for $k$-median. However, many other randomized rounding algorithms for $k$-median only seek to preserve the \emph{aggregate} value $\sum_j \bE[d(j,\S)]$, without our type of per-point guarantee.

We also contrast our approach with a different stochastic $k$-center problem considered in works such as \cite{huang, alipour}. These consider a model with a fixed, deterministic set $\S$ of open facilities, while the client set is determined stochastically; this model is almost precisely opposite to ours.

\subsection{Publicly verifying the distributions}
Our approximation algorithms will have the following structure: given some target distribution $\Omega$, we construct a randomized procedure $\mathcal A$ which returns some random set $\S$ with good probabilistic guarantees matching $\Omega$. Thus the algorithm $\mathcal A$ is \emph{itself} the approximating distribution $\tilde \Omega$. 

In a number of cases, we can convert the randomized algorithm $\mathcal A$ into a distribution $\tilde \Omega$ which has a sparse support (set of points to which it assigns nonzero probability), and which can be enumerated directly. This may cause a small loss in approximation ratio. The distribution $\tilde \Omega$ can be publicly verified, and the users can then draw from $\tilde \Omega$ as desired.

Recall that one of our main motivations is fairness in clustering; the ability for the users to verify that they are being treated fairly in a stochastic sense (although not necessarily in any one particular run of the algorithm) is particularly important.

\subsection{Notation}
We define $\binom{\F}{k}$ to be the collection of $k$-element subsets of $\F$. We assume throughout that $\F$ can be made arbitrarily large by duplicating its elements; thus, whenever we have an expression like $\binom{\F}{k}$, we assume without loss of generality that $|\F| \geq k$. 

We will let $[t]$ denote the set $\{1, 2, \ldots, t\}$.  For any vector $a = (a_1, \dots, a_t)$ and a subset $X \subseteq [t]$, we write $a(X)$ as shorthand for $\sum_{i \in X} a_i$.

We use the Iverson notation throughout, so that for any Boolean predicate $\mathcal P$ we let $[[\mathcal P]]$ be equal to one if $\mathcal P$ is true and zero otherwise. 

For a real number $q \in [0,1]$, we define $\overline q = 1 - q$.

Given any $j \in \C$ and any real number $r \geq 0$, we define the ball $B(j, r) =  \{ i \in \F \mid d(i,j) \leq r \}$.
We let $\theta(j)$ be the distance from $j$ to the nearest facility, and $V_j$ be the facility closest to $j$, i.e. $d(j, V_j) = d(j, \F) = \theta(j)$. Note that in the SCC setting we have $V_j = j$ and $\theta(j) = 0$.

For a solution set $\S$, we say that $j \in \C$ is \emph{matched} to $i \in \S$ if $i$ is the closest facility of $\S$ to $j$; if there are multiple closest facilities, we take $i$ to be one with least index.

\subsection{Some useful subroutines}
We will use two basic subroutines repeatedly: \emph{dependent rounding} and \emph{greedy clustering}.

In dependent rounding, we aim to preserve certain marginal distributions and negative correlation properties while satisfying some constraints with probability one. Our algorithms use a dependent-rounding algorithm from \cite{srin:level-sets}, which we summarize as follows:
\begin{proposition}
\label{prop:dep-round}
There exists a randomized polynomial-time algorithm $\textsc{DepRound}(y)$ which takes as input a vector $y\in [0,1]^n$, and outputs a random set $Y \subseteq [n]$  with the following properties:
\begin{enumerate}
\item[(P1)] $\Pr[i \in Y]=y_i$, for all $i\in [n]$,
\item[(P2)] $\lfloor \sum_{i=1}^n y_i \rfloor \leq |Y| \leq \lceil \sum_{i=1}^n y_i \rceil$ with probability one, 
\label{depround:cardinality}
\item[(P3)] $\Pr[Y \cap S = \emptyset] \leq \prod_{i \in S} (1 - y_i)$ for all $S \subseteq [n]$.
\label{depround:negcorr}
\end{enumerate}
\end{proposition}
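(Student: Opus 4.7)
The plan is to implement the classical iterative \emph{pipage rounding} procedure. We maintain a fractional vector $y^{(t)} \in [0,1]^n$ with $y^{(0)} = y$; at each step, if $y^{(t)}$ has at least two fractional coordinates $i \neq j$, we perform a two-coordinate swap: let $\delta^+ = \min(1-y_i, y_j)$ and $\delta^- = \min(y_i, 1-y_j)$. With probability $\delta^-/(\delta^+ + \delta^-)$ we update $(y_i, y_j) \gets (y_i + \delta^+, y_j - \delta^+)$; otherwise we update $(y_i, y_j) \gets (y_i - \delta^-, y_j + \delta^-)$. Each swap preserves the sum $y_i + y_j$ exactly and drives at least one of the two coordinates to $\{0,1\}$, so after at most $n$ iterations at most one fractional coordinate remains. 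If a single fractional coordinate $y_i$ is left (which happens iff $\sum_i y_i$ is non-integral), we round it independently by setting it to $1$ with probability equal to its current value. The output $Y$ is the set of coordinates equal to $1$ at termination, and the algorithm clearly runs in polynomial time.

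Properties (P1) and (P2) I would verify directly from the structure of the swap. The two swap probabilities are chosen so that the conditional expectation of each $y_i$ is preserved, and the final independent rounding step is also unbiased, hence $\Pr[i \in Y] = \bE[y_i^{(\text{final})}] = y_i$, giving (P1). For (P2), the quantity $\sum_i y_i^{(t)}$ is an exact invariant of every swap, so after the final rounding step (if any) the integer total $|Y|$ must lie in $\{\lfloor \sum_i y_i \rfloor, \lceil \sum_i y_i \rceil\}$.

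The heart of the argument, and the part I expect to be the main obstacle, is property (P3). For any fixed $S \subseteq [n]$ I would define $F_S(y) = \prod_{i \in S}(1 - y_i)$ and argue that $F_S(y^{(t)})$ is a supermartingale under the rounding process. When the swapped pair $(i,j)$ meets $S$ in at most one element, $\bE[F_S(y^{(t+1)}) \mid y^{(t)}]$ is preserved exactly, since $F_S$ is then affine in the at-most-one perturbed coordinate of $S$ and marginals are preserved. The delicate case is $i, j \in S$: expanding $(1-y_i^{\text{new}})(1-y_j^{\text{new}}) = 1 - y_i^{\text{new}} - y_j^{\text{new}} + y_i^{\text{new}} y_j^{\text{new}}$ and using $\bE[y_i^{\text{new}}] = y_i$, $\bE[y_j^{\text{new}}] = y_j$, the claim reduces to $\bE[y_i^{\text{new}} y_j^{\text{new}}] \leq y_i y_j$. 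A direct computation with $p = \delta^-/(\delta^++\delta^-)$ gives
\[
\bE[y_i^{\text{new}} y_j^{\text{new}}] - y_i y_j \;=\; -p(\delta^+)^2 - (1-p)(\delta^-)^2 \;\leq\; 0,
\]
completing the supermartingale step; the independent final rounding of a leftover coordinate preserves $\bE[F_S]$ exactly. Iterating and taking total expectation then yields
\[
\Pr[Y \cap S = \emptyset] \;=\; \bE\!\left[\prod_{i \in S}(1 - y_i^{(\text{final})})\right] \;=\; \bE[F_S(y^{(\text{final})})] \;\leq\; F_S(y) \;=\; \prod_{i \in S}(1 - y_i),
\]
which is (P3). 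Once the two-coordinate supermartingale inequality is established, the rest of the argument is routine bookkeeping.
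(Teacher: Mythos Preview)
The paper does not supply its own proof of this proposition; it simply quotes the result from \cite{srin:level-sets} (``Our algorithms use a dependent-rounding algorithm from \cite{srin:level-sets}, which we summarize as follows''). Your proposal is correct and is essentially the standard proof from that reference: the two-coordinate pipage step with the martingale argument for (P1), the exact preservation of $\sum_i y_i$ for (P2), and the supermartingale computation $\bE[y_i^{\text{new}} y_j^{\text{new}}] - y_i y_j = -p(\delta^+)^2 - (1-p)(\delta^-)^2 \le 0$ for (P3). So you have filled in what the paper chose to cite rather than reprove, and your argument matches the cited one.
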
 

We adopt the following additional convention: suppose $(y_1, \dots, y_n) \in [0,1]^n$ and $S \subseteq [n]$; we then define $\textsc{DepRound}(y,S) \subseteq S$ to be $\textsc{DepRound}(x)$, for the vector $x$ defined by $x_i = y_i [[ i \in S]]$.

The greedy clustering procedure takes an input a set of weights $w_j$ and sets $F_j \subseteq \F$ for every client $j \in \C$, and executes the following procedure:
\begin{algorithm}[H]
\caption{\sc GreedyCluster$(F, w)$}
\begin{algorithmic}[1]
\STATE Sort $\C$ as $\C = \{j_1, j_2, \dots, j_{\ell} \}$ where $w_{j_1} \leq w_{j_2} \leq \dots \leq w_{j_{\ell}}$.
\STATE Initialize $C' = \emptyset$
\FOR{$t = 1, \dots, \ell$}
\STATE\textbf{if} $F_{j_t} \cap F_{j'} = \emptyset$ for all $j' \in C'$ \textbf{then} update $C' \leftarrow C' \cup \{j_t \}$
\ENDFOR
\STATE Return $C'$
\end{algorithmic}
\end{algorithm}

\begin{observation}
\label{greedy-prop}
If $C' = \textsc{GreedyCluster}(F, w)$ then for any $j \in \C$ there is $z \in C'$ with $w_z \leq w_j$ and $F_z \cap F_j \neq \emptyset$. 
\end{observation}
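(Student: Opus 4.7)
The plan is to unpack the algorithm directly by case analysis on the iteration $t$ at which $j = j_t$ is processed in the main loop of $\textsc{GreedyCluster}$. Observe first that $C'$ only grows through the execution, so any element placed in $C'$ at some iteration $t' \leq t$ is still present at termination; therefore it suffices to exhibit an appropriate $z$ that lies in $C'$ at the \emph{end} of iteration $t$.

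In the first case, suppose that the algorithm adds $j_t$ to $C'$ at its iteration. Then take $z = j_t = j$: clearly $w_z = w_j$ and $F_z \cap F_j = F_j$, which is nonempty by the standing (implicit) assumption that $F_j \neq \emptyset$ for all clients $j$ (otherwise the statement is vacuously false, since no $z$ could satisfy $F_z \cap F_j \neq \emptyset$). In the second case, $j_t$ is \emph{not} added; by the test in line~4 of the algorithm, this can only happen if there exists some $j' \in C'$ (as $C'$ stood at the start of iteration $t$) with $F_{j'} \cap F_{j_t} \neq \emptyset$. Since $j'$ was added at an earlier iteration $t' < t$, and the outer loop processes clients in nondecreasing order of $w$, we have $w_{j'} \leq w_{j_t} = w_j$. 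Taking $z = j'$ satisfies both conditions.

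There is no real obstacle here: the result is essentially a restatement of the greedy invariant. The only subtlety worth flagging is the edge case $F_j = \emptyset$, which the statement implicitly rules out. Everything else is a direct reading of the algorithm's pseudocode together with the monotonicity of the sort order and of $C'$.
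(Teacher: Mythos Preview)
Your proof is correct and is exactly the intended argument; the paper states this as an Observation without proof, and your case analysis on whether $j$ is added to $C'$ at its iteration is the natural (and essentially only) way to verify it. Your remark about the edge case $F_j = \emptyset$ is accurate but harmless in context, since every invocation of \textsc{GreedyCluster} in the paper has $b(F_j) > 0$ and hence $F_j \neq \emptyset$.
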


\section{The chance $k$-coverage problem}
\label{chance-sec}
In this section, we consider a scenario we refer to as the \emph{chance $k$-coverage problem}: every point $j \in \C$ has demand parameters $p_j, r_j$, and we wish to find a $k$-lottery $\Omega$ such that
\begin{equation}
\label{stt1}
\Pr_{\S \sim \Omega}[ d(j, \S) \leq r_j] \geq p_j.
\end{equation}

If a $k$-lottery satisfying (\ref{stt1}) exists, we say that parameters $p_j, r_j$ are \emph{feasible.}  We refer to the special case wherein every client $j$ has a common value $p_j = p$ and a common value $r_j = r$, as \emph{homogeneous}. Homogeneous instances naturally correspond to fair allocations, for example,  $k$-supplier is a special case of the homogeneous chance $k$-coverage problem, in which $p_j= 1$ and $r_j$ is equal to the optimal $k$-supplier radius. 

Our approximation algorithms for this problem will be based on a linear programming (LP) relaxation which we denote $\mathcal P_{\text{chance}}$. It has fractional variables $b_i$, where $i$ ranges over $\F$ ($b_i$ represents the probability of opening facility $i$), and is defined by the following constraints:
\begin{enumerate}[label=(B\arabic*)]
	\item $\sum_{i \in B(j, r_j)} b_i  \ge p_j$ for all $j \in \C$,
\label{cmkcP:masspj}
        \item $b(\F) = k$,
\label{cmkcP:cardinality}
	\item $b_i \in [0,1]$ for all $i \in \F$.

\label{cmkcP:unit}
\end{enumerate} 

\begin{proposition}
If parameters $p, r$ are feasible, then $\P_{\text{chance}}$ is nonempty. 
\end{proposition}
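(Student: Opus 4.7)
The plan is to exhibit a feasible point of $\mathcal{P}_{\text{chance}}$ by taking marginal probabilities of the assumed feasible lottery. Concretely, let $\Omega$ be a $k$-lottery witnessing feasibility, i.e.\ $\Omega$ is supported on $\binom{\F}{k}$ and satisfies $\Pr_{\S \sim \Omega}[d(j,\S) \le r_j] \ge p_j$ for every $j \in \C$. I would define the candidate vector $b \in [0,1]^{\F}$ by
\[
b_i \;:=\; \Pr_{\S \sim \Omega}[\, i \in \S\,] \qquad \text{for each } i \in \F,
\]
and then check the three constraints of $\mathcal{P}_{\text{chance}}$ in turn.

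Constraint \ref{cmkcP:unit} is immediate since each $b_i$ is a probability. Constraint \ref{cmkcP:cardinality} follows from linearity of expectation:
\[
b(\F) \;=\; \sum_{i \in \F} \Pr_{\S \sim \Omega}[i \in \S] \;=\; \bE_{\S \sim \Omega}\Bigl[\sum_{i \in \F}[[\,i \in \S\,]]\Bigr] \;=\; \bE_{\S \sim \Omega}[|\S|] \;=\; k,
\]
since $\Omega$ is supported on $k$-element subsets. For constraint \ref{cmkcP:masspj}, I would use the basic equivalence
\[
d(j,\S) \le r_j \;\Longleftrightarrow\; \S \cap B(j,r_j) \neq \emptyset,
\]
combined with the union bound / first moment bound $\Pr[|X| \ge 1] \le \bE[|X|]$ applied to the nonnegative integer random variable $|\S \cap B(j,r_j)|$. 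This gives
\[
\sum_{i \in B(j,r_j)} b_i \;=\; \bE_{\S \sim \Omega}[\,|\S \cap B(j,r_j)|\,] \;\ge\; \Pr_{\S \sim \Omega}[\S \cap B(j,r_j) \neq \emptyset] \;\ge\; p_j.
\]

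There is really no main obstacle here; the proposition is a standard marginal-projection argument showing that the LP relaxes the integer polytope of $k$-lotteries. The only thing worth flagging is the direction of the inequality in \ref{cmkcP:masspj}: the LP constraint bounds the expected count of open facilities in $B(j,r_j)$ from below by $p_j$, which is a \emph{weaker} condition than the covering probability $p_j$, so the relaxation is indeed valid (and also explains why some slack will be needed when we later round back to a lottery).
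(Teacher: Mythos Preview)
Your proof is correct and follows essentially the same approach as the paper: define $b_i = \Pr_{\S \sim \Omega}[i \in \S]$ and verify (B1)--(B3), using the union bound (equivalently, your first-moment inequality) for (B1) and linearity of expectation for (B2). The only cosmetic difference is that the paper phrases the (B1) step as $\Pr\bigl[\bigvee_{i \in B(j,r_j)} i \in \S\bigr] \le \sum_{i \in B(j,r_j)} b_i$, which is the same inequality you invoke.
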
 
\begin{proof}
Consider a distribution $\Omega$ satisfying (\ref{stt1}).  For each $i \in \F$, set $b_i = \Pr_{\S \sim \Omega}[i \in \S]$. For $j \in \C$ we have $p_j = \Pr \bigl[ \bigvee_{i \in B(j, r_j)} i \in \S \bigr] \leq \sum_{i \in B(j,r_j)} \Pr[i \in S] =  \sum_{i \in B(j,r_j)} b_i$ and thus (B1) is satisfied. We have $k = \bE[|\S|] = \sum_{i \in \F} \Pr[i \in S] = b(\F)$ and so (B2) is satisfied. (B3) is clear, so we have demonstrated a point in $\P_{\text{chance}}$.
\end{proof}

For the remainder of this section, we assume we have a vector $b \in \P_{\text{chance}}$ and focus on how to round it to an integral solution.  By a standard facility-splitting step,  we also generate, for every $j \in \C$, a set $F_j \subseteq B(j,r_j)$ with $b(F_j) = p_j$. We refer to this set $F_j$ as a \emph{cluster}. In the SCC setting, it will also be convenient to ensure that $j \in F_j$ as long as $b_j \neq 0$.

As we show in Section~\ref{lb-sec}, any approximation algorithm must either significantly give up a guarantee on the distance, or probability (or both).  Our first result is an approximation algorithm which respects the distance guarantee exactly, with constant-factor loss to the probability guarantee:
\begin{theorem}
\label{ud1}
If $p, r$ is feasible then one may efficiently construct a $k$-lottery $\Omega$ satisfying
$$
\Pr_{\S \sim \Omega} [d(j, \S) \leq r_j] \geq (1 - 1/e) p_j.
$$
\end{theorem}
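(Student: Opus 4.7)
The plan is to invoke the $\textsc{DepRound}$ subroutine directly on the fractional solution $b$ and then use the negative-correlation bound (P3) cluster by cluster. Concretely, I would set $\S := \textsc{DepRound}(b)$. Property (P2) combined with $b(\F) = k$ (constraint \ref{cmkcP:cardinality}) guarantees $|\S| = k$ with probability one, so this really is a $k$-lottery.

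Next, fix a client $j \in \C$. Since the cluster $F_j$ satisfies $F_j \subseteq B(j, r_j)$, we have
$$
\Pr[d(j,\S) \leq r_j] \geq \Pr[\S \cap F_j \neq \emptyset] = 1 - \Pr[\S \cap F_j = \emptyset].
$$
I would then apply property (P3) together with the standard bound $1 - x \leq e^{-x}$ to get
$$
\Pr[\S \cap F_j = \emptyset] \leq \prod_{i \in F_j}(1 - b_i) \leq \exp\!\Bigl(-\sum_{i \in F_j} b_i\Bigr) = e^{-p_j},
$$
using that $b(F_j) = p_j$ by construction of $F_j$.

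It remains to convert the $1 - e^{-p_j}$ bound into the claimed linear-in-$p_j$ bound $(1 - 1/e) p_j$. Since $p_j \in [0,1]$ and the function $f(x) = 1 - e^{-x}$ is concave with $f(0)=0$ and $f(1) = 1 - 1/e$, concavity gives $f(p_j) \geq p_j \cdot f(1) = (1 - 1/e) p_j$ for all $p_j \in [0,1]$. Chaining these inequalities yields $\Pr[d(j,\S) \leq r_j] \geq (1 - 1/e) p_j$.

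I do not anticipate a real obstacle here: the facility-splitting step done just before the theorem is what makes the approach clean (it replaces the ball $B(j,r_j)$, whose fractional mass could exceed $p_j$, by the tighter set $F_j$ whose mass is exactly $p_j$, so the negative-correlation bound gives exactly $e^{-p_j}$). The only place that needs a tiny care is verifying that $(1 - e^{-p_j}) \geq (1 - 1/e) p_j$ on the full range $p_j \in [0,1]$, which is immediate from concavity.
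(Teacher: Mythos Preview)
Your proposal is correct and follows essentially the same argument as the paper: apply $\textsc{DepRound}$ to $b$, use (P2) to bound $|\S|$, use (P3) on $F_j$ to get $\Pr[\S\cap F_j=\emptyset]\le e^{-p_j}$, and finish with $1-e^{-p_j}\ge(1-1/e)p_j$. Your justification of the last inequality via concavity is slightly more explicit than the paper's, but otherwise the proofs coincide.
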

\begin{proof}
Let $b \in \P_{\text{chance}}$ and set $\S = \textsc{DepRound}(b)$. This satisfies $|\S| \leq \lceil \sum_{i=1}^n b_i \rceil \leq \lceil k \rceil = k$ as desired. Each $j \in\C$ has
$$
\Pr[\S \cap F_j = \emptyset] \leq \prod_{i \in F_j}(1-b_i) \le \prod_{i\in F_j}e^{-b_i}= e^{-b(F_i)}= e^{-p_j}.
$$
and then simple analysis shows that
\[
\Pr[ d(j, \S) \leq r_j] \geq \Pr[ \S \cap F_j \neq \emptyset] \geq 1 - e^{-p_j} \geq (1-1/e) p_j \qedhere
\]
\end{proof}
As we will later show in Theorem~\ref{ath0},  this approximation constant $1 - 1/e$ is optimal.  

We next turn to preserving the probability guarantee exactly with some loss to distance guarantee. As a warm-up exercise, let us consider the special case of ``half-homogeneous'' problem instances: all the values of $p_j$ are the same, or all the values of $r_j$ are the same. A similar algorithm works both these cases: we first select a set of clusters according to some greedy order, and then open a single item from each cluster.   We  summarize this as follows:
\begin{algorithm}[H]
\begin{algorithmic}[1]
\STATE Set $C' = \textsc{GreedyCluster}(F_j, w_j)$
\STATE Set $Y = \textsc{DepRound}(p, C')$
\STATE Form solution set $\S = \{ V_j \mid j \in Y \}$.
\end{algorithmic}
\caption{Rounding algorithm for half-homogeneous chance $k$-coverage}
\label{kcov1}
\end{algorithm}

Algorithm~\ref{kcov1} opens at most $k$ facilities, as the dependent rounding step ensures that $|Y| \leq \lceil \sum_{j \in C'} p_j \rceil = \lceil \sum_{j \in C'} b(F_j) \rceil \leq \lceil \sum_{i \in \F} b_i \rceil \leq k$.  The only difference between the two cases is the choice of weighting function $w_j$ for the greedy cluster selection.
\begin{proposition}
\label{ud2}
\begin{enumerate}
\item  Suppose that $p_j$ is the same for every $j \in \C$. Then using the weighting function $w_j = r_j$ ensures that every $j \in \C$ satisfies $\Pr[d(j,\S) \leq 3 r_j] \geq p_j.$ Furthermore, in the SCC setting, it satisfies $\Pr[d(j,\S) \leq 2 r_j] \geq p_j.$
\item Suppose $r_j$ is the same for every $j \in \C$. Then using the weighting function $w_j = 1 - p_j$ ensures that every $j \in \C$ satisfies $\Pr[d(j,\S) \leq 3 r_j] \geq p_j.$ Furthermore, in the SCC setting, it satisfies $\Pr[d(j,\S) \leq 2 r_j] \geq p_j.$
\end{enumerate}
\end{proposition}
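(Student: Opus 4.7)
The plan is to reduce each client $j \in \C$ to a nearby ``proxy'' cluster center $z \in C'$ via Observation~\ref{greedy-prop}, then argue that the event $z \in Y$ (which occurs with probability $p_z$) forces $d(j,\S)$ to be small via a triangle-inequality chain.

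Given $j \in \C$, Observation~\ref{greedy-prop} gives us some $z \in C'$ with $w_z \le w_j$ and $F_z \cap F_j \ne \emptyset$; pick any $i \in F_z \cap F_j$. Since $F_z \subseteq B(z,r_z)$ and $F_j \subseteq B(j,r_j)$, we have $d(i,j) \le r_j$ and $d(i,z) \le r_z$. The facility $i$ witnesses that $\theta(z) \le r_z$, so $d(z,V_z) \le r_z$. The triangle inequality then yields
\[
d(j,V_z) \le d(j,i)+d(i,z)+d(z,V_z) \le r_j + 2 r_z.
\]
In the SCC setting, $V_z = z$ and the bound tightens to $d(j,z) \le r_j + r_z$.

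For the probability, note that by property (P1) of \textsc{DepRound}, $\Pr[z \in Y] = p_z$; and whenever $z \in Y$ we have $V_z \in \S$ and hence $d(j,\S) \le d(j,V_z)$. It remains to combine these with the case-specific meaning of $w_z \le w_j$:

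\emph{Case 1} ($p_j \equiv p$, $w_j = r_j$): then $r_z \le r_j$, so $d(j,V_z) \le 3 r_j$ (and $\le 2r_j$ in SCC), while $\Pr[z \in Y] = p_z = p_j$.

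\emph{Case 2} ($r_j \equiv r$, $w_j = 1 - p_j$): then $p_z \ge p_j$ while $r_z = r_j = r$, so again $d(j,V_z) \le 3 r$ (resp.\ $2r$ in SCC) and $\Pr[z \in Y] = p_z \ge p_j$.

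Combining in each case gives $\Pr[d(j,\S)\le 3r_j] \ge p_j$ (or $\le 2r_j$ in SCC), as required. The only subtlety is confirming that the per-client bound $|\S| \le k$ of the overall algorithm still holds, but this was already argued in the paragraph preceding the proposition via (P2) of \textsc{DepRound} together with $b(F_j) = p_j$ and the disjointness of the $F_z$ for $z \in C'$. There is no real obstacle — once one sees that the weight function exactly encodes which of the two parameters needs to be monotonically controlled, both cases fall out of the same proxy-and-triangle argument.
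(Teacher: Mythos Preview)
Your proof is correct and follows essentially the same approach as the paper: both use Observation~\ref{greedy-prop} to obtain a proxy $z\in C'$, bound $d(j,V_z)$ via the triangle inequality through some $i\in F_j\cap F_z$, and then invoke $\Pr[z\in Y]=p_z$ from (P1). The only cosmetic difference is that the paper first notes that in either case $w_z\le w_j$ simultaneously yields $p_z\ge p_j$ and $r_z\le r_j$, whereas you split this into the two cases at the end.
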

\begin{proof}
Let $j \in \C$. By Observation~\ref{greedy-prop} there is $z \in C'$ with $w_z \leq w_j$ and $F_j \cap F_z \neq \emptyset$. In either of the two cases, this implies that $p_z \geq p_j$ and $r_z \leq r_j$.

Letting $i \in F_j \cap F_z$ gives $d(j,z) \leq d(j,i) + d(z,i) \leq r_j + r_z \leq 2 r_j$. This $z \in C'$ survives to $Y$ with probability $p_z \geq p_j$, and in that case we have $d(z,\S) = \theta(z)$. In the SCC setting, this means that $d(z,\S) = 0$; in the general setting, we have $\theta(z) \leq r_z \leq r_j$. 
\end{proof}

\subsection{Approximating the general case}
We next  show how to satisfy the probability constraint exactly for the general case of chance $k$-coverage, with a constant-factor loss in the distance guarantee. Namely, we will find a probability distribution with
$$
\Pr_{\S \sim \Omega} [d(j, \S) \leq 9 r_j] \geq p_j.
$$

The algorithm is based on iterated rounding, in which the entries of $b$ go through an unbiased random walk until $b$ becomes integral (and, thus corresponds to a solution set $\S$). Because the walk is unbiased, the probability of serving a client at the end is equal to the fractional probability of serving a client, which will be at least $p_j$. In order for this process to make progress, the number of active variables must be greater than the number of active constraints. We ensure this by periodically identifying and discarding clients which will be automatically served by serving other clients.  This is similar to a method of \cite{krishnaswamy}, which also uses iterative rounding for (deterministic) approximations to $k$-median with outliers and $k$-means with outliers.

The sets $F_j$ will remain fixed during this procedure.   We will maintain a vector $b \in [0,1]^{\F}$ and maintain two sets $\ct$ and $\cs$ with the following properties:
\begin{enumerate}
\item[(C1)] $\ct \cap \cs = \emptyset$.
\item[(C2)] For all $j, j' \in \ct$, we have $F_j \cap F_{j'} = \emptyset$
\item[(C3)] Every $j \in \ct$ has $b(F_j) = 1$,
\item[(C4)] Every $j \in \cs$ has $b(F_j) \leq 1$.
\item[(C5)] We have $b(\bigcup_{j \in \ct \cup \cs} F_j) \leq k$
\end{enumerate}

Given our initial solution $b$ for $\P_{\text{chance}}$, setting $\ct = \emptyset, \cs = \C$ will satisfy criteria (C1)--(C5); note that (C4) holds as $b(F_j) = p_j \leq 1$ for all $j \in \C$.

\begin{proposition}
\label{ggcor}
Given any vector $b \in [0, 1 ]^{\F}$ satisfying constraints (C1)---(C5) with $\cs \neq \emptyset$, it is possible to generate a random vector $b' \in [0,1]^{\F}$ such that  $\bE[b'] = b$, and with probability one $b'$ satisfies constraints (C1) --- (C5) as well as having some $j \in \cs$ with $b'(F_j) \in \{0, 1 \}.$
\end{proposition}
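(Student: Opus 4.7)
The plan is iterative randomized rounding. Writing $\F' := \bigcup_{j \in \ct \cup \cs} F_j$, I think of $b$ as a point in the polytope $\mathcal Q \subseteq [0,1]^{\F}$ cut out by (C3)--(C5) and the box constraints, and perturb $b$ via a sequence of mean-zero random steps, each preserving (C1)--(C5), until some $j \in \cs$ attains $b(F_j) \in \{0,1\}$. By the martingale property, the resulting random vector $b'$ will satisfy $\bE[b'] = b$.

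At the current $b$, let $F^* := \{i \in \F : b_i \in (0,1)\}$. I seek a nonzero direction $\delta$ supported on $F^* \cap \F'$ that satisfies the currently tight equality constraints,
\[
\delta(F_j) = 0 \text{ for all } j \in \ct, \qquad \text{and } \delta(\F') = 0 \text{ if } b(\F') = k.
\]
Given such $\delta$, let $t_+ > 0$ and $t_- < 0$ be the extreme scalars for which $b + t \delta$ still lies in $\mathcal Q$, and set $b' := b + t \delta$ where $t = t_+$ with probability $-t_-/(t_+ - t_-)$ and $t = t_-$ otherwise. This gives $\bE[b'] = b$ and tightens at least one new constraint: either a box constraint $b'_i \in \{0,1\}$, the capacity $b'(\F') = k$, or the desired goal $b'(F_j) \in \{0,1\}$ for some $j \in \cs$. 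In the last case I stop; otherwise, I repeat with $b \leftarrow b'$. Each iteration either halts, strictly reduces $|F^*|$, or makes $b(\F') = k$ tight for the first time, so the procedure terminates after polynomially many steps.

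The main obstacle is to prove that a nonzero direction $\delta$ always exists as long as the goal is unmet. Let $\ct^* := \{j \in \ct : F_j \cap F^* \neq \emptyset\}$ and $R := (F^* \cap \F') \setminus \bigcup_{j \in \ct^*} F_j$. The linear constraints on $\delta$ restricted to $F^* \cap \F'$ have rank at most $|\ct^*|$, or $|\ct^*| + 1$ if $b(\F') = k$ is tight and $R \neq \emptyset$. The key structural claim uses (C2) and (C3): if only one $i \in F_j$ were fractional for some $j \in \ct^*$, then $b(F_j \setminus \{i\}) + b_i = 1$ with an integer first summand would force $b_i \in \mathbb Z$, a contradiction. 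Hence $|F_j \cap F^*| \geq 2$ for every $j \in \ct^*$, and by the disjointness in (C2), $|F^* \cap \F'| \geq 2|\ct^*| + |R|$. When $b(\F') = k$ is tight, the fractional variables in $R$ are the only source of non-integer mass in $b(\F')$ (since $b(\bigcup_{j \in \ct} F_j) = |\ct|$ contributes an integer), so their sum must itself be an integer, ruling out $|R| = 1$. A brief case analysis on whether the capacity constraint is tight now shows $|F^* \cap \F'| > \mathrm{rank}$ in every scenario where $F^* \cap \F' \neq \emptyset$, giving the desired nonzero $\delta$; and if $F^* \cap \F' = \emptyset$, constraint (C4) immediately yields $b(F_j) \in \{0,1\}$ for every $j \in \cs$, so the goal is trivially achieved.
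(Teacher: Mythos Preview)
Your proposal is correct and follows essentially the same approach as the paper: both perform an unbiased random walk within the polytope defined by (C3)--(C5) and the box constraints until the desired integrality condition on some $F_j$ with $j\in\cs$ is reached. The only cosmetic difference is in how the existence of a perturbation direction is established---the paper exhibits an explicit two-coordinate swap via a three-case analysis on $b(A\cap B)$ and whether (C5) is tight, whereas you do a cleaner dimension count (using $|F_j\cap F^*|\ge 2$ for $j\in\ct^*$ and the integrality of $b(R)$ when (C5) is tight) to show the free variables outnumber the tight equality constraints.
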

\begin{proof}
We will show that any basic solution $b \in [0, 1 ]^{\F}$ to the constraints (C1)---(C5) with $\cs \neq \emptyset$ must satisfy the condition that $b(F_j) \in \{0, 1 \}$ for some $j \in \cs$.  To obtain the stated result, we simply modify $b$ until it becomes basic by performing an unbiased walk in the nullspace of the tight constraints. 

So consider a basic solution $b$.  Define $A = \bigcup_{j \in \ct} F_j$ and $B = \bigcup_{j \in \cs} F_j$. We assume that $b(F_j)  \in (0,1)$ for all $j \in \cs$, as otherwise we are done.

First, suppose that $b(A \cap B) > 0$. So there must be some pair $j \in \cs, j' \in \ct$ with $i \in F_j \cap F_{j'}$ such that $b_i > 0$. Since $b(F_{j'}) = 1$, there must be some other $i' \in F_{j'}$ with $b_{i'} > 0$.  Consider modifying $b$ by incrementing $b_i$ by $\pm \epsilon$ and decrementing $b_{i'}$ by $\pm \epsilon$, for some sufficiently small $\epsilon$. Constraint (C5) is preserved. Since $F_{j'} \cap F_{j''} = \emptyset$ for all $j'' \in \ct$, constraint (C3) is preserved. Since the (C4) constraints are slack, then for $\epsilon$ sufficiently small they are preserved as well. This contradicts that $b$ is basic.

Next, suppose that $b(A \cap B) = 0$ and $b(A \cup B) < k$ strictly. Let $j \in \cs$ and $i \in F_j$ with $b_i > 0$; if we change $b_i$ by $\pm \epsilon$ for sufficiently small $\epsilon$, this preserves (C4) and (C5); furthermore, since $i \notin A$, it preserves (C3) as well. So again $b$ cannot be basic.

Finally, suppose that $b(A \cap B) = 0$ and $b(A \cup B) = k$. So $b(B) = k - |A|$ and $b(B) > 0$ as $C_{\text{slack}} \neq \emptyset$. Therefore, there must be at least two elements $i, i' \in B$ such that $b_i > 0, b_{i'} > 0$. If we increment $b_i$ by $\pm\epsilon $ while decrementing $b_{i'}$ by $\pm \epsilon$, this again preserves all the constraints for $\epsilon$ sufficiently small, contradicting that $b$ is basic.
\end{proof}

We can now describe our iterative rounding algorithm, Algorithm~\ref{algo:it3}.
\begin{algorithm}[H]
\caption{Iterative rounding algorithm for chance $k$-coverage}
\begin{algorithmic}[1]
\STATE Find a fractional solution $b$ to $\P_{\text{chance}}$ and form the corresponding sets $F_j$.
\STATE Initialize $\ct = \emptyset, \cs = \C$
\WHILE{ $\cs \neq \emptyset$ } 
\STATE Draw a fractional solution $b'$ with $\bE[b'] = b$ according to Proposition~\ref{ggcor}.
\STATE Select some $v \in \cs$ with $b'(F_v) \in \{0, 1 \}$.
\STATE Update $\cs \leftarrow \cs - \{ v \}$
\IF{$b'(F_v) = 1$}
\STATE Update $\ct \leftarrow \ct \cup \{ v \}$
\IF{there is any $z \in \ct \cup \cs$ such that $r_{z} \geq r_v/2$ and $F_{z} \cap F_v \neq \emptyset$}
\STATE{Update $\ct \leftarrow \ct - \{ z \}, \cs \leftarrow \cs - \{ z \}$}
\ENDIF
\ENDIF
\STATE Update $b \leftarrow b'$.
\ENDWHILE
\STATE For each $j \in \ct$, open an arbitrary center in $F_j$.
\end{algorithmic} 
\label{algo:it3}
\end{algorithm}

To analyze this algorithm, define $\ct^t, \cs^t, b^t$ to be the values of the relevant variables at iteration $t$. Since every step removes at least one point from $\cs$, this process must terminate in $T \leq n$ iterations. We will write $b^{t+1}$ to refer to the random value $b'$ chosen at step (4) of iteration $t$, and $v^t$  denote the choice of $v \in \cs$ used step in step (5) of iteration $t$.

\begin{proposition}
\label{bprop}
The vector $b^t$ satisfies constraints (C1) --- (C5) for all times $t = 1, \dots, T$.
\end{proposition}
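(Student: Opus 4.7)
I would proceed by induction on $t$. The base case $t=1$ is immediate: with $\ct=\emptyset$ and $\cs=\C$, constraints (C1)--(C3) are vacuous, (C4) reduces to $b(F_j)=p_j\le 1$, and (C5) follows from $b(\F)=k$.

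For the inductive step, assume the invariants hold for $b^t,\ct^t,\cs^t$. Proposition~\ref{ggcor} produces $b^{t+1}$ that still satisfies (C1)--(C5) with respect to the unchanged $\ct^t,\cs^t$, together with some $v=v^t\in\cs^t$ for which $b^{t+1}(F_v)\in\{0,1\}$. It remains to track how the set updates in lines 6--11 of Algorithm~\ref{algo:it3} affect each constraint. If $b^{t+1}(F_v)=0$, then only $\cs$ shrinks by deleting $v$, and the five invariants inherit trivially under this monotone shrinkage. If $b^{t+1}(F_v)=1$, we add $v$ to $\ct$ and remove from both $\ct$ and $\cs$ every $z\in\ct^t\cup\cs^t$ satisfying $r_z\ge r_v/2$ and $F_z\cap F_v\neq\emptyset$. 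Invariants (C1), (C3) (using $b^{t+1}(F_v)=1$ for $v$ itself and the inductive hypothesis for the other members of $\ct$), (C4), and (C5) all then follow from the same monotonicity argument, since the family $\{F_j : j\in\ct^{t+1}\cup\cs^{t+1}\}$ is a sub-family of $\{F_j : j\in\ct^t\cup\cs^t\}$.

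The only substantive obstacle is (C2) in the case $b^{t+1}(F_v)=1$: I must verify that $F_v\cap F_{z'}=\emptyset$ for every surviving $z'\in\ct^{t+1}\setminus\{v\}$. Pairs of clusters not involving $v$ inherit directly from (C2) on $\ct^t$, so the real issue is the pairs $(v,z')$ with $z'\in\ct^t$. I would reduce this to the following auxiliary lemma: for every $v\in\cs^t$ and every $z'\in\ct^t$ with $F_v\cap F_{z'}\neq\emptyset$, we have $r_{z'}>2r_v$. Granted the lemma, if $F_v\cap F_{z'}\neq\emptyset$ then $r_{z'}>2r_v\ge r_v/2$, so $z'$ is caught by the line-9 removal rule and cannot survive into $\ct^{t+1}$ — contradiction.

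To prove the lemma, let $t'<t$ be the iteration at which $z'$ entered $\ct$. Since $\cs$ only ever shrinks across iterations, $v\in\cs^t$ forces $v\in\cs^{t'}$; in particular $v$ was present in $\ct^{t'}\cup\cs^{t'}$ when $z'$ was inserted. At iteration $t'$ the removal sweep inspected every $z\in\ct^{t'}\cup\cs^{t'}$ with $r_z\ge r_{z'}/2$ and $F_z\cap F_{z'}\neq\emptyset$. The intersection condition holds for $z=v$ by assumption, so had we additionally had $r_v\ge r_{z'}/2$, the element $v$ would have been purged at iteration $t'$, contradicting $v\in\cs^t$. Hence $r_v<r_{z'}/2$, which is exactly the lemma. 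The whole proof is essentially bookkeeping; the real insight is spotting this historical ``radius-monotonicity'' invariant and observing that the threshold $r_z\ge r_v/2$ in line 9 is tuned precisely so that (C2) propagates across iterations.
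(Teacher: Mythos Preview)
Your proof is correct and takes essentially the same approach as the paper: both argue by induction, observe that removals from $\ct,\cs$ preserve all invariants by monotonicity, and handle the only nontrivial point (C2) by looking back at the iteration when the conflicting cluster $z'$ entered $\ct$ and invoking the removal rule there. Your auxiliary lemma $r_{z'}>2r_v$ is in fact a sharper version of the paper's case split (the paper only extracts $r_{v^t}\le r_{v^s}$ before concluding $v^s$ gets removed), but the underlying mechanism is identical.
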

\begin{proof}
The vector $b^0$ does so since $b$ satisfies $\P_{\text{chance}}$. Proposition~\ref{ggcor} ensures that step (4) does not affect this.  Also, removing points from $\ct$ or $\cs$ at step (6) or (1) will not violate these constraints.

Let us check that adding $v^t$ to $\ct$ will not violate the constraints. This step only occurs if $b^{t+1}(v^t) = 1$, and so (C3) is preserved. Since we only move $v^t$ from $\cs$ to $\ct$, constraints (C1) and (C5) are preserved. Finally, to show that (C2) is preserved, suppose that $F_{v^t} \cap F_{v^s} \neq \emptyset$ for some other $v^s$ which was added to $\ct$ at time $s < t$. If $r_{v^t} \geq r_{v^s}$, then step (10) would have removed $v^t$ from $\cs$, making it impossible to enter $\ct^t$. Thus, $r_{v^t} \leq r_{v^s}$; this means that when we add $v^t$ to $\ct^t$, we also remove $v^s$ from $\ct^t$. 
\end{proof}

\begin{corollary}
Algorithm~\ref{algo:it3} opens at most $k$ facilities.
\end{corollary}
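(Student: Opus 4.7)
The plan is to observe that at termination the algorithm opens exactly one facility for each $j \in \ct$, so it suffices to bound $|\ct|$ at the end of the while loop. I will combine Proposition~\ref{bprop}, which guarantees that constraints (C1)--(C5) hold throughout, with the termination condition $\cs = \emptyset$.

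More concretely, at termination (C5) becomes $b(\bigcup_{j \in \ct} F_j) \leq k$. By (C2) the clusters $\{F_j\}_{j \in \ct}$ are pairwise disjoint, so this mass splits as $\sum_{j \in \ct} b(F_j)$. By (C3) each of these summands equals $1$, so the sum is exactly $|\ct|$. Combining these steps yields $|\ct| \leq k$, and since the final step opens one facility per $j \in \ct$, the total number of open facilities is at most $k$.

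There is essentially no hard step here: everything is a direct bookkeeping consequence of the invariants already established in Proposition~\ref{bprop}. The only thing to be mildly careful about is that the invariants are indeed maintained up to and including the final iteration (so that one can apply them with $\cs = \emptyset$); this is exactly what Proposition~\ref{bprop} provides, since it shows the invariants hold for all $t = 1, \dots, T$.
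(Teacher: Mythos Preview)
Your proof is correct and follows essentially the same argument as the paper: use Proposition~\ref{bprop} to ensure (C2), (C3), (C5) hold at termination, then combine disjointness of the $F_j$'s with $b(F_j)=1$ to get $|\ct| = \sum_{j \in \ct} b(F_j) \leq k$. If anything, your write-up is slightly more explicit than the paper's in citing (C5) and the role of $\cs = \emptyset$, but the logic is identical.
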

\begin{proof}
At the final step (12), the number of open facilities is equal to $|\ct|$. By Proposition~\ref{bprop}, the vector $b^T$ satisfies constraints (C1) --- (C5). So $b(F_j) = 1$ for $j \in \ct$ and $F_j \cap F_{j'} = \emptyset$ for $j, j \in \ct$; thus $|\ct| = \sum_{j \in \ct} b(F_j) \leq k$.
\end{proof}

\begin{proposition}
\label{hh1}
If $j \in \ct^t$ for any time $t$, then  $d(j,\S) \leq 3 r_j$.
\end{proposition}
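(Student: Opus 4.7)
The plan is to split on whether $j$ survives in $\ct$ until the algorithm terminates at iteration $T$. The easy case is $j \in \ct^T$: step~15 opens some $i^* \in F_j$, and since $F_j \subseteq B(j,r_j)$ by construction of the clusters (see the facility-splitting at the top of the section), we immediately get $d(j,\S) \le r_j \le 3r_j$.

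For the substantive case, $j$ enters $\ct$ at some iteration and is later removed. I would trace a ``chain of displacement'' by setting $v_0 = j$ and, inductively, defining $v_{i+1}$ to be the point whose addition at step~8 triggered the removal of $v_i$ at step~10; the chain terminates at some $v_m$ that is never removed, hence $v_m \in \ct^T$. By construction, $F_{v_i} \cap F_{v_{i+1}} \neq \emptyset$ for each $i$, and the guard of step~10 gives $r_{v_i} \ge r_{v_{i+1}}/2$.

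The key strengthening is $r_{v_{i+1}} < r_{v_i}/2$. Let $\tau_i$ denote the iteration at which $v_i$ enters $\ct$. Since $v_{i+1}$ is selected at step~5 only at the later iteration $\tau_{i+1} > \tau_i$, it still lies in $\cs$ immediately after iteration $\tau_i$. At that iteration, step~10 (run with $v = v_i$) removes every $z \in \ct \cup \cs$ that satisfies $F_z \cap F_{v_i} \neq \emptyset$ and $r_z \ge r_{v_i}/2$. Because the sets $F_\cdot$ are fixed after step~1, the intersection $F_{v_{i+1}} \cap F_{v_i} \neq \emptyset$ already held at time $\tau_i$; since $v_{i+1}$ nevertheless survived, it must have failed the radius test, giving $r_{v_{i+1}} < r_{v_i}/2$. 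Iterating, $r_{v_i} < r_j/2^i$ for all $i\ge 1$.

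Finally, step~15 opens some $i^* \in F_{v_m} \subseteq B(v_m, r_{v_m})$. Telescoping the triangle inequality through a witness $i \in F_{v_i} \cap F_{v_{i+1}}$ at each link of the chain yields
\[
d(j, i^*) \;\le\; (r_j + r_{v_1}) + \sum_{i=1}^{m-1}(r_{v_i}+r_{v_{i+1}}) + r_{v_m} \;=\; r_j + 2\sum_{i=1}^{m} r_{v_i} \;<\; r_j + 2r_j \;=\; 3r_j,
\]
where the last inequality uses the geometric bound $\sum_{i\ge 1} r_{v_i} < \sum_{i\ge 1} r_j/2^i = r_j$. The main obstacle is the chain step: one must argue carefully that step~10 at iteration $\tau_i$ removes \emph{every} qualifying $z$ (not merely one), so that the contrapositive argument forcing $r_{v_{i+1}} < r_{v_i}/2$ is valid; with that in hand, the telescoping distance estimate is routine.
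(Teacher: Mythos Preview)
Your argument is correct and rests on the same key observation as the paper: when $v_i$ enters $\ct$, any later displacer $v_{i+1}$ (still in $\cs$ at that moment, with $F_{v_{i+1}}\cap F_{v_i}\neq\emptyset$) must have $r_{v_{i+1}} < r_{v_i}/2$, else step~10 would already have deleted it. The paper packages this as an induction on the maximal time $t$ with $j\in\ct^t$: assuming $d(z,\S)\le 3r_z$ for the displacer $z$, one gets directly
\[
d(j,\S)\le r_j + r_z + 3r_z \le r_j + 4\cdot \tfrac{r_j}{2} = 3r_j.
\]
You instead unroll the induction into an explicit displacement chain $j=v_0,v_1,\dots,v_m$ and sum the geometric series $\sum_{i\ge 1} r_{v_i} < r_j$. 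Both routes are equivalent; the inductive version is slightly shorter because the factor $3$ propagates automatically without building the chain, while your version makes the geometric decay of radii explicit. Your caveat about step~10 removing \emph{every} qualifying $z$ is well taken and is indeed the intended reading (the paper's own proof relies on the same interpretation).
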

\begin{proof}
Let $t$ be maximal such that $j \in \ct^t$. We show the desired claim by induction on $t$. When $t = T$, then this certainly holds as step (12) will open some facility in $F_j$ and thus $d(j,\S) \leq r_j$.

Suppose that $j$ was added into $\ct^s$, but was later removed from $\ct^{t+1}$ due to adding $z = v^t$. Thus there is some $i \in F_z \cap F_j$. When we added $j$ in time $s$, we would have removed $z$ from $\ct^s$ if $r_z \geq r_j/2$. Since this did not occur, it must hold that $r_z < r_j/2$.

Since $z$ is present in $\ct^{t+1}$, the induction hypothesis implies that $d(z,\S) \leq 3 r_z$ and so
\[
d(j,\S) \leq d(j, i) + d(i, z) + d(z, \S) \leq r_j + r_z + 3 r_z \leq r_j  + (r_j/2) (1 + 3) = 3 r_j. \qedhere
\]
\end{proof}

\begin{theorem}
Every $j \in \C$ has $\Pr[ d(j,\S) \leq 9 r_j] \geq p_j$.
\end{theorem}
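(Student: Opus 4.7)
The plan is to fix a client $j \in \C$ and classify its possible trajectories through the algorithm, then handle the ``good'' trajectories by a deterministic distance bound and the ``bad'' one by a submartingale argument. Since every iteration strictly shrinks $\cs$, the client $j$ must leave $\cs$ in some iteration $\tau$; depending on how this happens, exactly one of four disjoint fates occurs: (F1) $j \in \ct^T$ at termination; (F2) $j$ entered $\ct$ at some intermediate time but was later removed by step~10; (F3) $j$ was chosen as $v^\tau$ with $b^{\tau+1}(F_j) = 0$, so it exited $\cs$ with no mass and never entered $\ct$; (F4) $j$ was evicted from $\cs$ at step~10 without ever having been in $\ct$. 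I will show fates F1, F2, F4 deterministically force $d(j,\S) \leq 9 r_j$, and then bound $\Pr[\text{F3}] \leq 1 - p_j$.

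The deterministic bounds come essentially for free from the results already in hand. Fate F1 yields $d(j,\S) \leq r_j$ because step~12 opens a facility in $F_j \subseteq B(j, r_j)$. Fate F2 yields $d(j,\S) \leq 3 r_j$ directly from Proposition~\ref{hh1}. For F4, let $z = v^\tau$ be the newly admitted client whose arrival triggered the eviction of $j$. The guards of step~10 furnish $r_z \leq 2 r_j$ and a witness $i \in F_j \cap F_z$. Since $z \in \ct^{\tau+1}$, Proposition~\ref{hh1} gives $d(z,\S) \leq 3 r_z \leq 6 r_j$; combining $d(j,i) \leq r_j$ and $d(i,z) \leq r_z \leq 2 r_j$ with the triangle inequality gives $d(j,z) \leq 3 r_j$, whence $d(j,\S) \leq d(j,z) + d(z,\S) \leq 9 r_j$.

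The main obstacle, and the most delicate part, is showing that F3 has probability at most $1 - p_j$. I will introduce an auxiliary process $Z_t$ indexed by iteration $t$: while $j \in \cs^t$ set $Z_t = b^t(F_j)$; once $j$ leaves $\cs$, freeze $Z_t$ at the value $1$ if the fate is F1, F2, or F4, and at $0$ if the fate is F3. Then $Z_0 = b^0(F_j) = p_j$ and $Z_T \in \{0,1\}$, with $Z_T = 1$ exactly when the realized fate lies in $\{\text{F1}, \text{F2}, \text{F4}\}$. The crucial step is to verify that $Z$ is a submartingale with respect to the natural filtration. By Proposition~\ref{ggcor} we have the unbiased update $\bE[b^{t+1}(F_j) \mid b^t] = b^t(F_j)$, so it suffices to check the pointwise inequality $Z_{t+1} \geq b^{t+1}(F_j)$ conditional on $j \in \cs^t$. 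A case analysis over iteration $t+1$ confirms this: if $j \in \cs^{t+1}$ or $j = v^t$ then $Z_{t+1} = b^{t+1}(F_j)$ (using that $b^{t+1}(F_{v^t}) \in \{0,1\}$ when $j = v^t$), while in the F4 case $Z_{t+1} = 1$ dominates $b^{t+1}(F_j) \in [0,1]$. Hence $\bE[Z_{t+1} \mid \text{history}] \geq Z_t$, the submartingale inequality yields $\bE[Z_T] \geq p_j$, and because $Z_T \in \{0,1\}$ we obtain $\Pr[Z_T = 1] \geq p_j$. Since $Z_T = 1$ implies $d(j,\S) \leq 9 r_j$, the theorem follows.
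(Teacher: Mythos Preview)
Your proof is correct and follows essentially the same approach as the paper: the paper packages the argument as a backward induction showing $\Pr[d(j,\S) \leq 9 r_j \mid \text{state at } t] \geq b^t(F_j)$ whenever $j \in \ct^t \cup \cs^t$, while you run the equivalent forward submartingale on $b^t(F_j)$ (frozen once $j$ exits $\cs$) and separate out the deterministic fates explicitly. The ingredients---Proposition~\ref{hh1} for fates F1 and F2, the triangle inequality together with Proposition~\ref{hh1} applied to $v^\tau$ for fate F4, and the unbiasedness $\bE[b^{t+1}(F_j)\mid b^t] = b^t(F_j)$ for the probabilistic bound on F3---are identical to those in the paper's proof.
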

\begin{proof}
We will prove by induction on $t$ the following claim: suppose we condition on the full state of Algorithm~\ref{algo:it3} up to time $t$, and that $j \in \ct^t \cup \cs^t$. Then
\begin{equation}
\label{y1}
\Pr[ d(j,\S) \leq 9 r_j] \geq b^t(F_j).
\end{equation}

At $t = T$, this is clear; since $\cs^T = \emptyset$, we must have $j \in \ct^T$, and so $d(j,\S) \leq r_j$ with probability one.  For the induction step at time $t$, note that as $\bE[b^{t+1}(F_j)] = b(F_j)$, in order to prove (\ref{y1}) it suffices to show that if we also condition on the value of $b^{t+1}$, it holds that
\begin{equation}
\label{y2}
\Pr[ d(j,\S) \leq 9 r_j \mid b^{t+1}] \geq b^{t+1}(F_j).
\end{equation}

If $j$ remains in $\ct^{t+1} \cup \cs^{t+1}$, then we immediately apply the induction hypothesis at time $t+1$. So the only non-trivial thing to check is that (\ref{y2}) will hold even if $j$ is removed from $\ct^{t+1} \cup \cs^{t+1}$.

If $j = v^t$ and $b^{t+1}(F_j) = 0$, then (\ref{y2}) holds vacuously. Otherwise, suppose that $j$ is removed from $\ct^t$ at stage (10) due to adding $z = v^t$. Thus $r_j\geq r_z/2$ and there is some $i \in F_j \cap F_z$. By Proposition~\ref{hh1}, this ensures that $d(z,\S) \leq 3 r_z$. Thus with probability one we have
$$
d(j,\S) \leq d(j,i) + d(i,z) + d(z, \S)\leq  r_j + r_z + 3 r_z \leq r_j + (2 r_j) (1 + 3) = 9 r_j.
$$

The induction is now proved. The claimed result follows since $b^0(F_j) = p_j$ and $\cs^0 = \C$.
\end{proof}

\section{Chance $k$-coverage: approximating the deterministic case}
\label{sec2}
An important special case of  $k$-coverage is where $p_j = 1$ for all $j \in \C$. Here, the target distribution $\Omega$ is just a single set $\S$ satisfying $\forall j d(j, \S) \leq r_j$. In the homogeneous case, when all the $r_j$ are equal to the same value, this is specifically the $k$-supplier problem. The usual approximation algorithm for this problem chooses a single approximating set $\S$, in which case the best guarantee available is $d(j, \S) \leq 3 r_j$. We improve the distance guarantee by constructing a $k$-lottery $\tilde \Omega$ such that $d(j, \S) \leq 3 r_j$ with probability one, and $\bE_{\S \sim \tilde \Omega}[d(j, \S)] \leq c r_j$, 
where the constant  $c$ satisfies the following bounds:
\begin{enumerate}
\item In the general case, $c = 1 + 2/e \approx 1.73576$;
\item In the homogeneous SCC setting,  $c = 1.592$.\footnote{This value was calculated using some non-rigorous numerical analysis; we describe this further in what we call ``Pseudo-Theorem''~\ref{thm:k-center-1.592}}
\end{enumerate}
We show matching lower bounds in Section~\ref{lb-sec}; the constant value $1 + 2/e$ is optimal for the general case (even for homogeneous instances), and for the third case the constant $c$ cannot be made lower than $1 + 1/e \approx 1.367$. 

We remark that this type of stochastic guarantee allows us to efficiently construct publicly-verifiable lotteries.
\begin{proposition}
Let $\epsilon > 0$. In any of the above three cases, there is an expected polynomial time procedure to convert the given distribution $\Omega$ into an explicitly-enumerated $k$-lottery $\Omega'$, with support size $O(\frac{\log n}{\epsilon^2})$, such that $\Pr_{\S \sim \Omega'} [d(j, \S) \leq 3 r_j] = 1$ and $\bE_{\S \sim \Omega'} [d(j, \S)] \leq c (1 + \epsilon) r_j$.
\end{proposition}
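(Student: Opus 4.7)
The plan is to construct $\Omega'$ by empirical sampling. I would draw $N = \Theta(\log n / \epsilon^2)$ independent samples $\S_1, \dots, \S_N$ from $\Omega$ (equivalently, run the randomized procedure $\mathcal A$ realizing $\Omega$ this many times) and let $\Omega'$ be the uniform distribution on the multiset $\{\S_1, \dots, \S_N\}$. The support-size bound is then immediate, and since each draw from $\Omega$ deterministically satisfies $d(j, \S_t) \leq 3 r_j$, we have $\Pr_{\S \sim \Omega'}[d(j, \S) \leq 3 r_j] = 1$ no matter what the samples happen to be; this is the easy part.

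The content of the argument is the expected-distance bound. Fix a client $j$ and consider the iid random variables $X_t = d(j, \S_t)/(3 r_j) \in [0,1]$, whose common mean is $\mu_j = \bE_{\S \sim \Omega}[d(j, \S)]/(3 r_j) \leq c/3$. By Hoeffding's inequality,
\[
\Pr\Bigl[ \tfrac{1}{N} \textstyle\sum_t X_t > \mu_j + \tfrac{\epsilon c}{3} \Bigr] \leq \exp(-2 N \epsilon^2 c^2 / 9).
\]
Choosing the constant in $N = \Theta(\log n / \epsilon^2)$ large enough makes this at most $1/(2n)$, and a union bound over $j \in \C$ shows that with probability at least $1/2$, every client simultaneously satisfies $\bE_{\S \sim \Omega'}[d(j, \S)] = 3 r_j \cdot \frac{1}{N} \sum_t X_t \leq 3 r_j (\mu_j + \epsilon c/3) \leq c(1+\epsilon) r_j$.

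To get expected polynomial time, I would exploit the fact that $\bE_{\S \sim \Omega'}[d(j, \S)]$ is exactly the sample average $\frac{1}{N}\sum_t d(j, \S_t)$, which is computable in polynomial time; after drawing the samples I would verify the bound for every $j \in \C$ and resample if any client fails. Since the success probability per round is at least $1/2$, the expected number of rounds is $O(1)$. I do not foresee a substantial obstacle: the only subtlety worth flagging is that Hoeffding merely selects a lucky draw bounding the empirical mean, while the almost-sure distance guarantee $d(j, \S) \leq 3 r_j$ is inherited pointwise from $\Omega$ and so is preserved under rejection sampling without any additional argument.
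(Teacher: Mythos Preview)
Your proposal is correct and follows essentially the same approach as the paper: draw $\Theta(\log n/\epsilon^2)$ independent samples from $\Omega$, take $\Omega'$ to be the uniform distribution on them, and use a concentration bound (the paper says ``Chernoff'', you say ``Hoeffding'') on the bounded variables $d(j,\S_t)\in[0,3r_j]$ together with a union bound over clients. You are in fact more explicit than the paper about the verify-and-resample step that yields the expected polynomial running time.
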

\begin{proof}
Take $X_1, \dots, X_t$ as independent draws from $\Omega$ for $t = \frac{6 \log n}{c \epsilon^2}$ and set $\Omega'$ to be the uniform distribution on $\{ X_1, \dots, X_t \}$. To see that $\bE_{\S \sim \Omega'} [d(j, \S)] \leq c (1+\epsilon) r_j$ holds with high probability, apply a Chernoff bound, noting that $d(j,X_1), \dots, d(j,X_t)$ are independent random variables in the range $[0, 3 r_j]$.
\end{proof}

We use a similar algorithm to Algorithm~\ref{kcov1} for this problem: we choose a covering set of clusters $C'$, and open exactly one item from each cluster. The main difference is that instead of opening the nearest item $V_j$ for each $j \in C'$, we instead open a cluster according to the solution $b$ of $\P_{\text{chance}}$.

\begin{algorithm}[H]
\begin{algorithmic}[1]
\STATE Set $C' = \textsc{GreedyCluster}(F_j, r_j)$.
\STATE Set $F_0 = \F - \bigcup_{j \in C'} F_j$; this is the set of ``unclustered'' facilities
\FOR{$j \in C'$}
	\STATE Randomly select a point $W_j \in F_j$ according to the distribution $\Pr[W_j = i] = b_i$ \\
	\textit{// This is a valid probability distribution, as $b(F_j) = 1$}
\ENDFOR
\STATE Let $\S_0 \gets \textsc{DepRound}(b, F_0)$
\STATE Return $\S = \S_0 \cup \{ W_j \mid j \in C' \}$
\end{algorithmic} 
\caption{Rounding algorithm with clusters}
\label{algo:round0}
\end{algorithm}

We will need the following technical result in order to analyze Algorithm~\ref{algo:round0}.
\begin{proposition}
\label{sec2-bound}
For any set $U \subseteq \F$, we have
$$
\Pr[\S \cap U = \emptyset] \leq \prod_{i \in U \cap F_0}(1-b_i) \prod_{j \in C'}(1-b(U \cap F_j)) \leq e^{-b(U)}.
$$ 
\end{proposition}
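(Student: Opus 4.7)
The plan is to decompose the event $\{\S \cap U = \emptyset\}$ according to how $U$ intersects the partition of $\F$ induced by $F_0$ and the clusters $\{F_j\}_{j \in C'}$, then use independence between the dependent-rounding step and the per-cluster $W_j$ draws.

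First, I would observe that by the definition of $C' = \textsc{GreedyCluster}(F,r)$ (see Observation~\ref{greedy-prop}), the sets $\{F_j : j \in C'\}$ are pairwise disjoint, and $F_0$ is defined as their complement in $\F$. So $F_0$ together with the $F_j$ for $j \in C'$ forms a partition of $\F$, and in particular a partition of $U$. The event $\S \cap U = \emptyset$ is therefore the intersection of the events $\S_0 \cap (U \cap F_0) = \emptyset$ and $W_j \notin U \cap F_j$ for each $j \in C'$. Since $\S_0$ and the $W_j$'s are drawn independently of each other (the $W_j$'s being mutually independent), these events are mutually independent.

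Second, I would bound each factor. For the $\S_0$ factor, apply property (P3) of \textsc{DepRound} in Proposition~\ref{prop:dep-round} to the set $U \cap F_0$, yielding $\Pr[\S_0 \cap (U \cap F_0) = \emptyset] \leq \prod_{i \in U \cap F_0}(1 - b_i)$. For each $j \in C'$, since $W_j$ is sampled from $F_j$ with marginals $b_i$ and $b(F_j) = 1$, we get directly $\Pr[W_j \notin U \cap F_j] = 1 - b(U \cap F_j)$. Multiplying the independent factors gives the first inequality.

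Third, for the second inequality, I would apply $1 - x \leq e^{-x}$ factorwise: $\prod_{i \in U \cap F_0}(1 - b_i) \leq e^{-b(U \cap F_0)}$ and $\prod_{j \in C'}(1 - b(U \cap F_j)) \leq e^{-\sum_{j \in C'} b(U \cap F_j)}$. Combining and using the partition property $b(U \cap F_0) + \sum_{j \in C'} b(U \cap F_j) = b(U)$ gives the bound $e^{-b(U)}$. There is no real obstacle here; the only subtlety is confirming that $F_0$ and the $F_j$'s actually partition $\F$ so that the exponents add up cleanly to $b(U)$, which follows from the disjointness guaranteed by \textsc{GreedyCluster} and the explicit definition of $F_0$.
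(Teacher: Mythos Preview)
Your proposal is correct and follows essentially the same approach as the paper's proof: decompose $\S \cap U = \emptyset$ via the partition $\{F_0\} \cup \{F_j : j \in C'\}$, use independence of $\S_0$ and the $W_j$'s, apply (P3) to the $\S_0$ factor and compute the $W_j$ factors exactly, then bound with $1 - x \le e^{-x}$ and sum the exponents using the partition. The only difference is that you make the partition argument more explicit than the paper does.
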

\begin{proof}
The set $U$ contains each $W_j$ independently with probability $b(U \cap F_j)$. The set $\S_0$ is independent of them and by (P3) we have $\Pr[U \cap \S_0 = \emptyset] \leq \prod_{i \in U \cap F_0} (1 - b_i)$. So
\begin{align*}
\Pr[ \S \cap U = \emptyset ] &\leq  \negthinspace \prod_{i \in U \cap F_0}  \negthinspace (1 - b_i) \prod_{j \in C'} (1 - b(U \cap F_j)) \leq \negthinspace \prod_{i \in U \cap F_0} e^{-b_i} \negthinspace \prod_{j \in C'}e^{-b(U \cap F_j)}  = e^{-b(U)} \qedhere
\end{align*}
\end{proof}

At this point, we can show our claimed approximation ratio for the general (non-SCC) setting:
\begin{theorem}
\label{simple-bnd-thm0}
For any $j \in \C$, the solution set $\S$ of Algorithm~\ref{algo:round0} satisfies $d(j,\S) \leq 3 r_j$ with probability one and $\bE[d(j,\S)] \leq (1 + 2/e) r_j$.
\end{theorem}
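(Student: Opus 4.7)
The plan is to separate the bound into two regimes based on whether some facility from $F_j$ is opened, and to combine the deterministic worst-case distance with the exponentially-small probability that $F_j$ is entirely missed.

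\textbf{Step 1: The deterministic $3r_j$ bound.} By Observation~\ref{greedy-prop} applied with $w_j = r_j$, there exists $z \in C'$ with $r_z \leq r_j$ and some $i \in F_z \cap F_j$. Algorithm~\ref{algo:round0} opens $W_z \in F_z \subseteq B(z, r_z)$, so I would use the triangle inequality on the path $j \to i \to z \to W_z$:
\[
d(j,\S) \leq d(j,W_z) \leq d(j,i) + d(i,z) + d(z,W_z) \leq r_j + r_z + r_z \leq 3 r_j,
\]
which holds with probability one.

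\textbf{Step 2: The expected-distance bound.} I would split the expectation depending on whether $\S \cap F_j = \emptyset$. Since $F_j \subseteq B(j, r_j)$, any $i \in \S \cap F_j$ gives $d(j, \S) \leq r_j$; otherwise Step 1 bounds $d(j, \S) \leq 3 r_j$. Hence
\[
\bE[d(j,\S)] \leq r_j \cdot \Pr[\S \cap F_j \neq \emptyset] + 3 r_j \cdot \Pr[\S \cap F_j = \emptyset].
\]
Proposition~\ref{sec2-bound} applied with $U = F_j$ gives $\Pr[\S \cap F_j = \emptyset] \leq e^{-b(F_j)} = e^{-p_j} = e^{-1}$, since $p_j = 1$ in the setting of this section. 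Substituting yields
\[
\bE[d(j,\S)] \leq r_j(1 - e^{-1}) + 3 r_j \cdot e^{-1} = (1 + 2/e) r_j.
\]

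\textbf{Main obstacle.} There is no real obstacle: both ingredients are already in place. The only subtle point is to invoke Proposition~\ref{sec2-bound} cleanly, which needs the observation that $F_j$ itself may intersect multiple clusters of $C'$ as well as $F_0$, but the proposition handles exactly this situation and yields the clean bound $e^{-b(F_j)}$. Everything else is a two-case expectation split and one triangle-inequality argument, both standard.
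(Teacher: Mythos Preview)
Your proof is correct and follows essentially the same approach as the paper: both use Observation~\ref{greedy-prop} and the triangle inequality through the cluster center to get the deterministic $3r_j$ bound, and both apply Proposition~\ref{sec2-bound} with $U=F_j$ to bound $\Pr[\S\cap F_j=\emptyset]\le e^{-1}$ for the expectation. The only cosmetic difference is that the paper writes the expectation as $r_j + 2r_j\Pr[d(j,\S)>r_j]$ rather than your two-case convex combination, which is algebraically equivalent.
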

\begin{proof}
By Observation~\ref{greedy-prop}, there is some $v \in C'$ with $F_j \cap F_v \neq \emptyset$ and $r_v \leq r_j$. Letting $i \in F_j \cap F_v$, we have
$$
d(j,\S) \leq d(j,i) + d(i,v) + d(v,\S) \leq r_j + r_v + r_v \leq 3 r_j.
$$
with probability one.

If $\S \cap F_j \neq \emptyset$, then $d(j, \S) \leq r_j$. Thus, a necessary condition for $d(j,\S) > r_j$ is that $\S \cap F_j = \emptyset$. Applying Proposition~\ref{sec2-bound} with $U = F_j$ gives
$$
\Pr[ d(j,\S) > r_j ] \leq \Pr[\S \cap F_j = \emptyset] \leq e^{-b(F_j)}= e^{-1}
$$
and so $\bE[d(j,\S)] \leq r_j  + 2 r_j \Pr[d(j,\S) > r_j] \leq (1 + 2/e) r_j$.
\end{proof}

\subsection{The homogeneous SCC setting}
\label{sec:k-center-partial-clusters}
From the point of view of the target distribution $\Omega$, this setting is equivalent to the classical $k$-center problem. We may guess the optimal radius, and so we do not need to assume that the common value of $r_j$ is ``given'' to us by some external process. By rescaling, we assume without loss of generality here that $r_j = 1$ for all $j$.

To motivate the algorithm for the SCC setting (where $\C = \F$), note that if some client $j \in \C$ has some facility $i$ opened in a nearby cluster $F_{v}$, then this guarantees that $d(j, \S) \leq d(j, v) + d(v, i) \leq 3 r_j = 3$. This is what we used to analyze the non-SCC setting. But, if instead of opening facility $i$, we opened $v$ itself, then this would ensure that $d(j, \S) \leq 2$. Thus, opening the centers of a cluster can lead to better distance guarantees compared to opening any other facility. We emphasize that this is only possible in the SCC setting, as in general we do not know that $v \in \F$. 

As a warm-up exercise, we use the following Algorithm~\ref{algo:round1}, which takes a parameter $q \in [0,1]$ which we will discuss how to set shortly. We recall that we have assumed in this case that $j \in F_j$ for every $j \in \C$.  (Also, note our notational convention that $\overline q = 1 - q$; this will be used extensively in this section to simplify the formulas.)

\begin{algorithm}[H]
\begin{algorithmic}[1]

\STATE Set $C' = \textsc{GreedyCluster}(F_j)$.
\STATE Set $F_0 = \F - \bigcup_{j \in C'} F_j$; this is the set of ``unclustered'' facilities
\FOR{$j \in C'$}

	\STATE Randomly select $W_j \in F_j$ according to the distribution $\Pr[ W_j = i ] = \overline q b_i  + q [[i =j ]]$
\ENDFOR
\STATE Let $\S_0 = \textsc{DepRound}(b, F_0)$
\STATE Return $\S = \S_0 \cup \{ W_j \mid j \in C' \}$
\end{algorithmic} 
\caption{Warm-up rounding algorithm for $k$-center}
\label{algo:round1}
\end{algorithm}
This is the same as Algorithm~\ref{algo:round0}, except that some of the values of $b_i$ for $i \in F_j$ have been shifted to the cluster center $j$.  In fact, we can think of Algorithm~\ref{algo:round1} as a two-part process: we first modify the fractional vector $b$  to obtain a new fractional vector $b'$ defined by 
$$
b'_i = \begin{cases}
\overline q b_i + q & \text{if $i \in C'$} \\
\overline q b_i & \text{if $i \in F_{j} - \{j \}$ for $j \in C'$}  \\
b_i & \text{if $i \in F_0$} 
\end{cases}.
$$
and we then execute Algorithm~\ref{algo:round0} on the resulting vector $b'$. In particular, Proposition~\ref{sec2-bound} remains valid with respect to the modified vector $b'$.

\begin{theorem}
\label{simple-bnd-thm}
Let $j \in \C$. After running Algorithm~\ref{algo:round1} with $q = 0.464587$ we have $d(j,\S) \leq 3$ with probability one and $\bE[d(j,\S)] \leq 1.60793$.
\end{theorem}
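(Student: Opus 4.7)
The plan is to adapt the proof of Theorem~\ref{simple-bnd-thm0} to exploit the mass-shifting of Algorithm~\ref{algo:round1}, ending in a single-variable numerical optimization in $q$.

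The deterministic bound $d(j,\S) \leq 3 r_j$ follows verbatim from Theorem~\ref{simple-bnd-thm0}: Observation~\ref{greedy-prop} produces $v \in C'$ with $F_v \cap F_j \neq \emptyset$ and $r_v \leq r_j$, and since $W_v \in F_v \subseteq \S$, two triangle inequalities give $d(j,\S) \leq d(j,v) + d(v,W_v) \leq (r_j + r_v) + r_v \leq 3 r_j$.

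For the expectation I split on whether $j \in C'$. When $j \in C'$, we have $v = j$, and $\Pr[W_j = j] = \overline{q} b_j + q \geq q$; on that event $d(j,\S) = 0$, and otherwise $W_j \in F_j$ gives $d(j,\S) \leq r_j$, so $\bE[d(j,\S)] \leq \overline{q}(1-b_j) r_j \leq \overline{q} r_j$, comfortably below the target for $q = 0.464587$. In the main case $j \notin C'$, writing $E_1 = \{\S \cap F_j \neq \emptyset\}$, I would use the trichotomy
\[
d(j,\S) \leq r_j [[E_1]] + (r_j + r_v)[[E_1^c \cap \{W_v = v\}]] + (r_j + 2 r_v)[[E_1^c \cap \{W_v \neq v\}]],
\]
which combines $d(j,\S) \leq r_j$ on $E_1$ with $d(j, W_v) \leq d(j, v) + d(v, W_v) \leq (r_j + r_v) + r_v$ on $E_1^c$, tightened to $d(j, v) \leq r_j + r_v$ when $W_v = v$.

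To bound the probabilities on the right, I would exploit two structural facts: (i) $W_v$ is sampled independently of the other $W_{j'}$ and of $\S_0$, so $\Pr[E_1^c \cap \{W_v = v\}] = \Pr[W_v = v] \cdot \Pr[\text{other rounding misses }F_j]$; and (ii) Proposition~\ref{sec2-bound} applied to $U = F_j$ and the shifted vector $b'$ bounds $\Pr[\S \cap F_j = \emptyset]$ by a product whose $F_v$-factor equals $1 - \overline{q}\,b(F_v \cap F_j) - q[[v \in F_j]]$. Together with $\Pr[W_v = v] = \overline{q} b_v + q$ and $r_v \leq r_j$, this yields an explicit upper bound on $\bE[d(j,\S)]/r_j$ as a function of $q$, the LP quantities $b_v$ and $\ell = b(F_v \cap F_j)$, the indicator $[[v \in F_j]]$, and the distribution of the residual mass of $F_j$ among other clusters and $F_0$.

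The main obstacle is the worst-case maximization of this bound over the remaining LP degrees of freedom -- which splits naturally into the sub-cases $v \in F_j$ and $v \notin F_j$, with different binding configurations -- followed by the minimization in $q$. The resulting worst-case function of $q$ is smooth near its minimum; solving its first-order condition (a transcendental equation mixing $q$ with $e^{-\overline q}$) yields $q = 0.464587$ with corresponding value $1.60793$. This numerical optimization, together with identifying the adversarial LP instance that attains it, is the delicate technical step of the argument.
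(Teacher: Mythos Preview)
Your deterministic bound and the case $j \in C'$ are fine, but the main case has a genuine gap that prevents you from reaching $1.60793$.

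You commit to a \emph{single} witness cluster $v$ from Observation~\ref{greedy-prop} and base your trichotomy solely on whether $W_v = v$. The paper instead works with the entire set
\[
D = \{\, v \in C' : F_j \cap F_v \neq \emptyset,\ r_v \leq r_j \,\},
\]
noting that for \emph{every} $v \in D$, the event $W_v = v$ already forces $d(j,\S) \leq d(j,v) \leq 2r_j$. Hence $\{d(j,\S) > 2r_j\}$ requires all cluster centers in $D$ to be missed, and Proposition~\ref{sec2-bound} applied to $U = D \cup F_j$ gives
\[
\Pr[d(j,\S) > 2r_j] \;\leq\; \prod_{v \in D} \overline q\,(1 - a_v),
\]
which picks up a factor $\overline q^{\,|D|}$. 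This multiplicative gain over all of $D$ is what makes the bound work; your argument only picks up a single factor related to $\Pr[W_v = v]$.

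Concretely, consider the adversary that spreads $F_j$ evenly over $t$ clusters in $D$ (so $a_v = 1/t$), with each center $v \notin F_j$ and $b_v = 0$. Then your bound becomes
\[
1 + \Pr[E_1^c] + \Pr[E_1^c \cap \{W_v \neq v\}]
\;\xrightarrow[t \to \infty]{}\;
1 + e^{-\overline q} + \overline q\, e^{-\overline q}
\;=\; 1 + (1+\overline q)\,e^{-\overline q},
\]
which is minimized at $q = 0$ with value $1 + 2/e \approx 1.736$, and for $q = 0.464587$ is about $1.90$. No choice of $q$ in your scheme beats $1+2/e$ on this instance. By contrast, the paper's bound on the same instance is $1 + (1-\overline q/t)^t + \overline q^{\,t}(1-1/t)^t$, whose second term vanishes as $t \to \infty$; the worst case (balancing over $s = a(D)$ and $t = |D|$ via AM--GM) occurs near $t = 2$ and gives $1.60793$ at $q = 0.464587$.

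So the missing idea is precisely: bound $\Pr[d(j,\S) > 2r_j]$ using \emph{all} of $D$, not just one cluster, then reduce via AM--GM to the two scalar parameters $s = \sum_{v \in D} a_v$ and $t = |D|$.
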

\begin{proof}
Let $D = \{v \in C' \mid F_j \cap F_{v} \neq \emptyset \}$; note that $D \neq \emptyset$ by Observation~\ref{greedy-prop}. For each $v \in D \cup \{0 \}$, set $a_v = b(F_j \cap F_v)$ and observe that $a_0 + \sum_{v \in D} a_v = b(F_j) = 1$.

As before, a necessary condition for $d(j,\S) > 1$ is that $F_j \cap \S = \emptyset$. So by Proposition~\ref{sec2-bound},
\begin{align*}
\Pr[ d(j,\S) > 1 ] &\leq \Pr[ F_j \cap \S = \emptyset ] \leq \prod_{i \in F_j \cap F_0} (1-b'_i) \prod_{v \in C'} (1 - b'(F_v \cap F_j)) \\
&\leq \prod_{i \in F_j \cap F_0} (1-b_i) \prod_{v \in D} (1 - \overline q b(F_v \cap F_j)) \leq e^{-b(F_j \cap F_0)} \prod_{v \in D} (1 - \overline q b(F_v \cap F_j)) \\
&= e^{-a_0} \prod_{v \in D} e^{a_v} (1 - \overline q a_v) = (1/e) \prod_{v \in D} e^{a_v} (1 - \overline q a_v).
\end{align*}
where the last equality comes from the fact that $a_0 + a(D) = 1$.

Similarly, if there is some $i \in \S \cap D$, then $d(j, \S) \leq d(v,i) \leq 2$. Thus, a necessary condition for $d(j,\S) > 2$ is that $\S \cap (D \cup F_j) = \emptyset$. Applying Proposition~\ref{sec2-bound} with $U = D \cup F_j$ gives:
\begin{align*}
\Pr[ d(j,\S) > 2] &\leq \prod_{v \in (D \cup F_j) \cap F_0} (1 - b_v) \prod_{v \in C'} ( 1 - b'( (D \cup F_j) \cap F_v) ) \\
&\leq e^{-b(F_j \cap F_0)}  \prod_{v \in D} \overline q (1 - a_v) = (1/e) \prod_{v \in D} e^{a_v} \overline q (1  - a_v)
\end{align*}

Putting these together gives:
\begin{equation}
\label{yy1}
\bE[d(j,\S)] \leq 1 + 1/e \prod_{v \in D} e^{a_v} (1 - \overline q a_v) + 1/e \prod_{v \in D} e^{a_v} \overline q (1 - a_v)
\end{equation}

Let us define $s = a(D)$ and $t = |D|$. By the AM-GM inequality we have:
\begin{align*}
\bE[d(j,\S)] \leq 1 + e^{s-1} \bigl( 1 - \overline q s/t \bigr)^t + e^{s -1} \overline q^t (1 - s/t)^t 
\end{align*}

This is a function of a single real parameter $s \in [0,1]$ and a single integer parameter $t \geq 1$. Some simple analysis, which we omit here, shows that $\bE[d(j,\S)] \leq 1.60793$.
\end{proof}

We will improve on Theorem~\ref{simple-bnd-thm} through a more complicated rounding process based on greedily-chosen partial clusters. Specifically, we select cluster centers $\pi(1), \dots, \pi(n)$, wherein $\pi(i)$ is chosen to maximize $b(F_{\pi(i)} - F_{\pi(1)} - \dots - F_{\pi(i-1)})$. By renumbering $\C$, we may assume without loss of generality that the resulting permutation $\pi$ is the identity; therefore, we assume throughout this section that $\C = \F = [n]$ and for all pairs $i < j$ we have
\begin{equation}
\label{ttr1}
b(F_i - F_1 - \dots - F_{i-1}) \geq b(F_j - F_1 - \dots - F_{i-1})
\end{equation}

For each $j \in [n]$, we let $G_j = F_{j} - F_{1} - \dots - F_{j-1}$ and we define $z_j = b(G_j)$. We say that $G_j$ is a \emph{full cluster} if $z_j = 1$ and a \emph{partial cluster} otherwise. 
We note that the values of $z$ appear in sorted order $1 = z_1 \geq z_2 \geq z_3 \geq \dots \geq z_n \geq 0$.

We use the following randomized Algorithm~\ref{algo:round2} to select the centers. Here, the quantities $\qf, \qp$ (short for \emph{full} and \emph{partial}) are drawn from a joint probability distribution which we discuss later. 

\begin{algorithm}[H]
\caption{$\textsc{Round2}\left(y, z,\bigcup_{j=1}^n G_j, \qf, \qp \right)$}
\begin{algorithmic}[1]
\STATE Draw random variables $\qf, \qp$.
\STATE $Z \gets \textsc{DepRound}(z)$
\FOR{$j \in Z$}
	\STATE Randomly select $W_j \in G_j$ according to the distribution $\Pr[W_j=i] = (  \overline{q_j} y_i + q_j [[ i =j ]] )/z_j$
where $q_j$ is defined as
$$
q_j = \begin{cases}
\qf & \text{if $z_j = 1$} \\
\qp & \text{if $z_j < 1$} 
\end{cases}
$$
\ENDFOR
\STATE Return $\S = \{W _j \mid j \in Z \}$
\end{algorithmic} 
\caption{Partial-cluster based algorithm}
\label{algo:round2}
\end{algorithm}

Before the technical analysis of Algorithm~\ref{algo:round2},  let us provide some intuition.  From the point of view of Algorithm~\ref{algo:round1}, it may be beneficial to open the center of some cluster near a given client $j \in \C$ as this will ensure $d(j,\S) \leq 2$. However, there is no benefit to opening more than one such cluster center. So, we would like a significant negative correlation between opening the centers of distinct clusters near $j$. Unfortunately, the full clusters all ``look alike,'' and so it seems impossible to enforce any significant negative correlation among them.

Partial clusters break the symmetry. There is at least one full cluster near $j$, and possibly some partial clusters as well. We will create a probability distribution with significant negative correlation between the event that partial clusters open their centers and the event that full clusters open their centers. This decreases the probability that $j$ will see multiple neighboring clusters open their centers, which in turn gives an improved value of $\bE[d(j,\S)]$.

The dependent rounding in Algorithm~\ref{algo:round2} ensures that $|Z| \leq \lceil \sum_{j=1}^n z_j \rceil = \sum_{j=1}^n b(F_{j} - F_{1} - \dots - F_{j-1}) = b(\F) \leq k$, and so  $|\S| \leq k$ as required. We also need the following technical result; the proof is essentially the same as Proposition~\ref{sec2-bound} and is omitted.
\begin{proposition}
\label{sec3-bound}
For any $U \subseteq \C$, we have
$$
\Pr[\S \cap U = \emptyset \mid \qf, \qp] \leq \prod_{j = 1}^n \bigl( 1 - \overline q_j b(U \cap G_j) - q_j z_j [[j \in U]] \bigr).
$$
\end{proposition}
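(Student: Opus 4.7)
The plan is to mirror the proof of Proposition~\ref{sec2-bound}, with one extra step to handle the randomness of $Z$ (which is the only substantive difference, since in Proposition~\ref{sec2-bound} the cluster set $C'$ is deterministic). Throughout I would condition on $\qf, \qp$, so the values $q_j$ are fixed constants. For brevity write $y'_j := \overline{q_j}\, b(U \cap G_j) + q_j z_j [[j \in U]]$, which is the quantity appearing in the claimed bound. A direct calculation from the prescribed distribution of $W_j$ gives $\Pr[W_j \in U \mid j \in Z, \qf, \qp] = y'_j / z_j$ whenever $z_j > 0$.

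First I would condition on $Z$ as well. Since the choices $\{W_j : j \in Z\}$ are mutually independent given $Z, \qf, \qp$, the event $\S \cap U = \emptyset$ equals $\bigcap_{j \in Z}\{W_j \notin U\}$, and hence
$$\Pr[\S \cap U = \emptyset \mid Z, \qf, \qp] \;=\; \prod_{j \in Z}\bigl(1 - y'_j/z_j\bigr).$$
This is the direct analog of the product appearing in Proposition~\ref{sec2-bound}; the only remaining task is to take an expectation over $Z$ and convert this random product over $Z$ into the deterministic product over all of $[n]$ that appears in the statement.

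For that final step I would introduce auxiliary independent Bernoulli variables $B_1, \dots, B_n$, independent of $Z$, with $\Pr[B_j = 1] = y'_j / z_j$ (taking $B_j = 0$ when $z_j = 0$), and let $T := \{j : B_j = 1\}$. Then $\prod_{j \in Z}(1 - y'_j / z_j) = \Pr[T \cap Z = \emptyset \mid Z]$, so by the tower rule
$$\bE_Z\Bigl[\prod_{j \in Z}(1 - y'_j / z_j)\Bigr] \;=\; \Pr[T \cap Z = \emptyset].$$
Applying property (P3) of $\textsc{DepRound}$ conditional on $T$ gives $\Pr[T \cap Z = \emptyset \mid T] \leq \prod_{j \in T}(1 - z_j)$; taking the expectation over the independent $B_j$'s turns this into
$$\Pr[T \cap Z = \emptyset] \;\leq\; \prod_{j=1}^n \bE\bigl[(1 - z_j)^{B_j}\bigr] \;=\; \prod_{j=1}^n \bigl(1 - z_j \Pr[B_j=1]\bigr) \;=\; \prod_{j=1}^n (1 - y'_j),$$
which is exactly the claimed inequality. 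The only genuine obstacle beyond copying Proposition~\ref{sec2-bound} is this coupling/decoupling argument to eliminate the conditioning on $Z$; everything else is immediate from the independence of the $W_j$'s given $Z$ and a direct marginal computation.
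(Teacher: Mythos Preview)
Your argument is correct and is exactly the natural fleshing-out of what the paper intends by ``essentially the same as Proposition~\ref{sec2-bound}'': you correctly isolate the single new ingredient (the cluster set $Z$ is now random, produced by \textsc{DepRound}) and handle it by conditioning on the auxiliary set $T$ and invoking property~(P3).

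Two small remarks. First, the claimed equality $\Pr[W_j\in U\mid j\in Z]=y'_j/z_j$ is in general only the inequality $\ge$: from the distribution of $W_j$ one computes $z_j\Pr[W_j\in U]=\overline{q_j}\,b(U\cap G_j)+q_j[[j\in U]]$ (using $j\in G_j$), while $y'_j=\overline{q_j}\,b(U\cap G_j)+q_j z_j[[j\in U]]$ and $z_j\le 1$. This is harmless since it points the right way for an upper bound on $\Pr[\S\cap U=\emptyset]$, but your first displayed identity should really be a ``$\le$''. Second, the auxiliary Bernoullis $B_j$ are unnecessary: if you simply sample $W_j$ for \emph{every} $j\in[n]$, independently of $Z$, then with $T:=\{j:W_j\in U\}$ the event $\S\cap U=\emptyset$ is literally $Z\cap T=\emptyset$, and your (P3)-then-average-over-$T$ step finishes directly with $z_j\Pr[W_j\in U]\ge y'_j$ supplying the last inequality.
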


Our main lemma to analyze Algorithm~\ref{algo:round2} is the following:
\begin{lemma}
\label{prop31}
Let $i \in [n]$. Define $J_{\text{f}}, J_{\text{p}} \subseteq [n]$ as
\begin{align*}
J_{\text{f}}   &= \{ j \in [n] \mid F_i \cap G_j \neq \emptyset, z_j = 1\} \\
J_{\text{p}} &= \{ j \in [n] \mid F_i \cap G_j \neq \emptyset, z_j < 1\}
\end{align*}

Let $m = |J_{\text{f}}|$ and $J_{\text{p}} = \{j_1, \dots, j_t \}$ where $j_1 \leq j_2 \leq \dots \leq j_t$.   For each $\ell = 1, \dots, t+1$ define
$$
u_{\ell} = b(F_i \cap G_{j_{\ell}}) + b(F_i \cap G_{j_{\ell+1}}) + \dots + b(F_i \cap G_{j_t})
$$

Then $1 \geq u_1 \geq u_2 \geq \dots \geq u_t \geq u_{t+1} = 0$ and $m \geq 1$. Furthermore, if we condition on a fixed value of $(\qf, \qp)$ then we have
\begin{align*}
\bE[d(i,\S)] & \leq 1 +  \left( 1 -  \overline \qf \frac{\overline{ u_1}}{m} \right)^m  \prod_{\ell=1}^t (1 - \overline \qp (u_{\ell} - u_{\ell+1}))  + \left( \overline \qf \bigl( 1 - \frac{\overline{u_1}}{m} \bigr) \right)^m \prod_{\ell=1}^t (\overline{u_{\ell}} + \overline \qp u_{\ell + 1}).
\end{align*}
\end{lemma}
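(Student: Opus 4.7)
The plan is to decompose the expected distance. First I would argue $d(i,\S) \leq 3$ almost surely, so that
\[
\bE[d(i,\S) \mid \qf, \qp] \leq 1 + \Pr[d(i,\S) > 1 \mid \qf, \qp] + \Pr[d(i,\S) > 2 \mid \qf, \qp].
\]
Once $m \geq 1$ is established, any $j \in J_{\text{f}}$ has $z_j = 1$, so property (P1) of \textsc{DepRound} forces $j \in Z$ with probability one; picking $v \in F_i \cap G_j$, the triangle inequality gives $d(i, W_j) \leq d(i,v) + d(v,j) + d(j, W_j) \leq 3$. It then remains to match each of the two tails to one of the product terms in the claimed bound.

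A few preliminaries: monotonicity $u_1 \geq u_2 \geq \dots \geq u_{t+1} = 0$ and $u_1 \leq 1$ are immediate from $u_\ell - u_{\ell+1} = b(F_i \cap G_{j_\ell}) \geq 0$ and $u_1 = b\bigl(F_i \cap \bigcup_\ell G_{j_\ell}\bigr) \leq b(F_i) = 1$. For $m \geq 1$, let $j^\star = \min\{j : F_j \cap F_i \neq \emptyset\}$ (nonempty since $i$ qualifies). Minimality of $j^\star$ forces any $v \in F_{j^\star} \cap F_i$ to lie in $G_{j^\star}$, so $j^\star \in J_{\text{f}} \cup J_{\text{p}}$; the same minimality gives $F_i \subseteq \F - F_1 - \dots - F_{j^\star - 1}$, so $b(F_i - F_1 - \dots - F_{j^\star - 1}) = 1$, and combining this with (\ref{ttr1}) at step $j^\star$ (or direct evaluation when $j^\star = i$) yields $z_{j^\star} \geq 1$, placing $j^\star$ in $J_{\text{f}}$.

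To bound $\Pr[d(i,\S) > 1 \mid \qf, \qp]$, use the necessary condition $\S \cap F_i = \emptyset$ and apply Proposition~\ref{sec3-bound} with $U = F_i$. Dropping the $q_j z_j [[j \in F_i]]$ correction only weakens the bound. Each $j \in J_{\text{f}}$ contributes $1 - \overline{\qf}\, b(F_i \cap G_j)$, and AM-GM combined with $\sum_{j \in J_{\text{f}}} b(F_i \cap G_j) = b(F_i) - u_1 = \overline{u_1}$ produces the factor $\bigl(1 - \overline{\qf}\, \overline{u_1}/m\bigr)^m$; each $j_\ell \in J_{\text{p}}$ contributes $1 - \overline{\qp}(u_\ell - u_{\ell+1})$, reproducing the first summand of the lemma exactly.

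The harder step is $\Pr[d(i,\S) > 2 \mid \qf, \qp]$, with necessary event $\S \cap (F_i \cup J_{\text{f}} \cup J_{\text{p}}) = \emptyset$ (opening any center in $J_{\text{f}} \cup J_{\text{p}}$ gives $d(i,\S) \leq 2$ via the triangle inequality). Applying Proposition~\ref{sec3-bound} with this $U$, now $[[j \in U]] = 1$ throughout $J_{\text{f}} \cup J_{\text{p}}$; each full-cluster factor is at most $\overline{\qf}(1 - b(F_i \cap G_j))$, and AM-GM yields $\bigl(\overline{\qf}(1 - \overline{u_1}/m)\bigr)^m$. The technical core is showing that every partial-cluster factor $1 - \overline{\qp}\, b(U \cap G_{j_\ell}) - \qp\, z_{j_\ell}$ is at most $\overline{u_\ell} + \overline{\qp}\, u_{\ell+1}$; routine algebra reduces this to the two inequalities $b(U \cap G_{j_\ell}) \geq u_\ell - u_{\ell+1}$ (immediate since $F_i \cap G_{j_\ell} \subseteq U \cap G_{j_\ell}$) and $z_{j_\ell} \geq u_\ell$. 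The latter is the main obstacle and is where the greedy ordering does the work; I would prove it by splitting on the relative order of $i$ and $j_\ell$. If $i < j_\ell$, then $F_i \subseteq F_1 \cup \dots \cup F_{j_\ell - 1}$ forces $u_\ell = 0$. If $i \geq j_\ell$, then (\ref{ttr1}) applied at step $j_\ell$ gives $z_{j_\ell} \geq b(F_i - F_1 - \dots - F_{j_\ell - 1})$, which in turn dominates $u_\ell$ because $\bigcup_{\ell' \geq \ell} G_{j_{\ell'}}$ is disjoint from $F_1 \cup \dots \cup F_{j_\ell - 1}$. Assembling the two tail bounds with the initial decomposition then proves the lemma.
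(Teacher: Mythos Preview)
Your proposal is correct and follows essentially the same route as the paper: decompose $\bE[d(i,\S)]$ via the two tails $\Pr[d(i,\S)>1]$ and $\Pr[d(i,\S)>2]$, bound each with Proposition~\ref{sec3-bound} applied to $U=F_i$ and $U=F_i\cup J_{\text{f}}\cup J_{\text{p}}$ respectively, apply AM--GM to the full-cluster factors, and use the key inequality $z_{j_\ell}\geq u_\ell$ obtained from the greedy ordering~(\ref{ttr1}). The one substantive difference is your proof of $m\geq 1$: the paper first establishes $z_{j_\ell}\geq u_\ell$ and then argues by contradiction (if $m=0$ then $u_1=1$, forcing $z_{j_1}\geq 1$), whereas you directly exhibit an element $j^\star\in J_{\text{f}}$; both are fine, and your case $i<j_\ell$ in the $z_{j_\ell}\geq u_\ell$ argument is in fact vacuous (membership in $J_{\text{p}}$ already forces $j_\ell\leq i$), but harmless.
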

\begin{proof}
  For $\ell = 1, \dots, t$ we let $a_{\ell} = b(F_i \cap G_{j_{\ell}}) = u_{\ell} - u_{\ell+1}$. For $j \in J_{\text{f}}$, we let $s_j = b(F_i \cap G_j)$. 

First, we claim that $z_{j_\ell} \geq u_{\ell}$ for $\ell = 1, \dots, t$. For, by (\ref{ttr1}), we have
\begin{align*}
z_{j_\ell} &\geq b(F_i - F_{1} - \dots - F_{j_{\ell}-1}) \geq b(F_i) - \sum_{j \in J_{\text{f}}} b(F_i \cap G_j) - \sum_{v<j_{\ell}} b(F_i \cap G_v) \\
&= b(F_i) - \sum_{j \in J_{\text{f}}} b(F_i \cap G_j) - \sum_{v<{\ell}} b(F_i \cap G_{j_v}) \qquad \text{as $F_i \cap G_v = \emptyset$ for $v \notin J_{\text{p}} \cup J_{\text{f}}$} \\
&= 1 - \sum_{j \in J_{\text{f}}} s_j - \sum_{v = 1}^{\ell-1} a_v = u_{\ell}.
\end{align*}

Next, observe that if $m = 0$, we have $u_1 = \sum_{\ell = 1}^t b(F_i \cap G_{j_{\ell}}) = \sum_{j=1}^n b(F_i \cap G_j) = b(F_i) = 1$. Applying the above bound at $\ell = 1$ gives $z_{j_1} \geq u_1 = 1$. But, by definition of $J_{\text{p}}$ we have $z_{j_1} < 1$, which is a contradiction. This shows that $m \geq 1$. 

We finish by showing the bound on $\bE[d(i, \S)]$. All the probabilities here should be interpreted as conditioned on fixed values for parameters $\qf, \qp$.

A necessary condition for $d(i,\S) > 1$ is that no point in $F_i$ is open. Applying Proposition~\ref{sec3-bound} with $U = F_i$ yields
\begin{align*}
\Pr[\S \cap F_i = \emptyset] &\leq \prod_{j \in J_{\text{f}}} (1 - \overline \qf  b(F_i \cap G_j) - \qf  [[j \in F_i]])  \prod_{j \in J_{\text{p}}} (1 - \overline \qp b(F_i \cap G_j)) - \qp z_j [[j \in F_i]] ) \\
&\leq\prod_{j \in J_{\text{f}}} (1 - \overline \qf b(F_i \cap G_j))   \prod_{j \in J_{\text{p}}} (1 - \overline \qp b(F_i \cap G_j)) =\prod_{j \in J_{\text{f}}} (1 - \overline \qf s_j) \prod_{\ell = 1}^t  (1 - \overline \qp a_{\ell}) .
\end{align*}

A necessary condition for $d(i,\S) > 2$ is that we do not open any point in $F_i$, nor do we open center of any cluster intersecting with $F_i$. Applying Proposition~\ref{sec3-bound} with $U = F_i \cup J_{\text{f}} \cup J_{\text{p}}$, and noting that $z_{j_{\ell}} \leq u_{\ell}$, we get:
\begin{align*}
\Pr[\S \cap U = \emptyset]  &\leq \prod_{j \in J_{\text{f}}} (1 - {\overline \qf}  b(U \cap G_j) - \qf) \prod_{j \in J_{\text{p}}} (1 - {\overline \qp} b(U \cap G_j)) - \qp z_j )  \\
&\leq\prod_{j \in J_{\text{f}}} (1 - \overline \qf b(F_i \cap G_j) - \qf)  \prod_{\ell = 1}^t  (1 - \overline \qp b(F_i \cap G_{j_\ell}) - \qp z_{j_\ell}) \\
&=\prod_{j \in J_{\text{f}}} \overline \qf (1 - s_j) \prod_{\ell = 1}^t  (1 - \overline \qp a_{\ell} - \qp z_{j_\ell})  \leq \prod_{j \in J_{\text{f}}} \overline \qf (1 - s_j)  \prod_{\ell = 1}^t  (1 - \overline \qp a_{\ell} - \qp u_{\ell}) 
\end{align*}

Thus,
\begin{equation}
\label{yy3}
\bE[d(i,\S)] \leq 1 +\prod_{j \in J_{\text{f}}} (1 - \overline \qf s_j) \prod_{\ell = 1}^t  (1 - \overline \qp a_{\ell})  +  \prod_{j \in J_{\text{f}}} \overline \qf (1 - s_j) \prod_{\ell = 1}^t  (1 - \overline \qp a_{\ell} - \qp u_{\ell})
\end{equation}

The sets $G_j$ partition $[n]$, so $\sum_{j \in \text{f}} s_j = 1 - \sum_{\ell=1}^t a_{\ell} = 1 - u_1$. So by the AM-GM inequality, we have
\begin{equation}
\label{yy4}
\bE[d(i,\S)] \leq 1 + \left( 1 - \overline \qf \frac{\overline{ u_1}}{m} \right)^m \prod_{\ell = 1}^t  (1 - \overline \qp a_{\ell}) 
+ \left( \overline \qf \bigl( 1 - \frac{\overline{u_1}}{m} \bigr) \right)^m \prod_{\ell = 1}^t  (1 - \overline \qp a_{\ell} - \qp u_{\ell}).
\end{equation}

The claim follows as $a_{\ell} = u_{\ell} - u_{\ell+1}$.
\end{proof}

Lemma~\ref{prop31} gives an upper bound on $\bE[d(i,\S)]$ for a fixed distribution on $\qf, \qp$ and for fixed values of the parameters $m, u_1, \dots, u_t$.  By a computer search, along with a number of numerical tricks, we can obtain an upper bound over all values for $m$ and all possible sequences $u_1, \dots, u_t$ satisfying $1 = u_1 \geq u_2 \geq \dots \geq u_t$. This gives the following result:
\begin{pseudo-theorem}
\label{thm:k-center-1.592}
Suppose that $\qf, \qp$ has the following distribution:
$$
( \qf, \qp) = \begin{cases}
(0.4525, 0) & \text{with probability $p = 0.773436$} \\
(0.0480, 0.3950) & \text{with probability $1-p$}
\end{cases}.
$$
Then for all $j \in \C$ we have $d(j,\S) \leq 3$ with probability one, and $\bE[d(j,\S)] \leq 1.592$.
\end{pseudo-theorem}

We call this a ``pseudo-theorem'' because some of the computer calculations used double-precision floating point for convenience, without carefully tracking rounding errors. In principle, this could have been computed using rigorous numerical analysis, giving a true theorem. Since there are a number of technical complications in this calculation, we defer the details to Appendix~\ref{proof1592}.

\section{Lower bounds on approximating chance $k$-coverage}
\label{lb-sec}
We next show lower bounds for the chance $k$-coverage problems of Sections~\ref{chance-sec} and \ref{sec2}.  These constructions are adapted from lower bounds for approximability of $k$-median \cite{guha1999greedy}, which in turn are based on the hardness of set cover.

Formally, a set cover instance consists of a collection of sets $\mathcal B = \{S_1, \dots, S_m \}$ over a ground set $[n]$. For any set $X \subseteq [m]$ we define $S_X = \bigcup_{i \in X} S_i$. The goal is to select a collection $X \subseteq [m]$ of minimum size such that $S_X = [n]$. The minimum value of $|X|$ thus obtained is denoted by $\text{OPT}$.

We quote a result of Moshovitz \cite{moshkovitz} on the inapproximability of set cover.
\begin{theorem}[\cite{moshkovitz}]
\label{hardness-th1}
Assuming $\text{P} \neq \text{NP}$, there is no polynomial-time algorithm which guarantees a set-cover solution $X$ with $S_X = [n]$ and $|X| \leq (1 - \epsilon) \ln n \times \text{OPT}$, where $\epsilon > 0$ is any constant.
\end{theorem}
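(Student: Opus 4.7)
The plan is to derive this from the PCP theorem by a gap-preserving reduction, passing through a hard Label Cover intermediate. Specifically, after parallel repetition one obtains a bipartite projection Label Cover instance $(U, V, E, \Sigma, \{\pi_e\}_{e \in E})$ for which it is NP-hard to distinguish the completeness case (all edge constraints satisfiable) from the soundness case (at most an $\eta$-fraction satisfiable), for arbitrarily small constant $\eta > 0$, with $|\Sigma|$ depending on $\eta$. I would then reduce this to set cover in a way that amplifies the constant-factor gap $\eta$ into a $(1 - \epsilon)\ln n$ gap on the size of the optimal cover.

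The main gadget is Feige's \emph{partition system}: a universe $[N]$ together with $|\Sigma|$ partitions of $[N]$ into $k$ blocks each, with the property that any collection of blocks drawn from distinct partitions that covers $[N]$ has size at least $(1 - o_k(1))\, k \ln N$. Such systems exist with $N$ polynomial in $k$ and $|\Sigma|$ via a probabilistic construction. In the reduction, I would attach a private copy of the partition system to each edge $e \in E$; the overall set-cover universe is the disjoint union of these $|E|$ copies. For each $(v, \sigma) \in V \times \Sigma$, a candidate set covers, in every edge $e$ incident to $v$, the $\sigma$-th block of a designated partition at $e$; symmetrically for $(u, \tau) \in U \times \Sigma$, mapped through the projection $\pi_e$. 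A satisfying Label Cover assignment then yields a set cover of size proportional to $|U| + |V|$ in the completeness case, because the two chosen blocks per edge are consistent and together cover that copy of $[N]$ using roughly $k$ sets. In the soundness case, any cover of size below $(1 - \epsilon) \ln N \cdot (|U| + |V|)$ would, by averaging over edges and decoding one label per vertex, induce a labeling satisfying more than an $\eta$-fraction of constraints, contradicting the Label Cover hardness. Choosing parameters so that the total universe size $n = |E| \cdot N$ satisfies $\ln n = (1 + o(1)) \ln N$ transfers the $(1-\epsilon) \ln N$ gap into a $(1-\epsilon) \ln n$ gap.

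The main obstacle to reaching the optimal constant $1 - \epsilon$ (as opposed to some weaker $c \ln n$ for a constant $c < 1$) is the sharpness of the starting Label Cover hardness: one needs a PCP whose size grows only polynomially as $\eta$ shrinks and $|\Sigma|$ grows, because the partition-system parameters $k$ and $N$ must be tuned very finely against the $|U| + |V|$ term so that the $(|U|+|V|) \ln N$ lower bound in the soundness case dominates the $k \cdot (|U|+|V|)$ upper bound in the completeness case by the desired factor $(1-\epsilon) \ln n$. This is precisely what is achieved by combining the near-linear-size projection PCP of Moshkovitz--Raz with parallel repetition (Raz/Holenstein) to drive the soundness $\eta$ down far enough; any slackness in either ingredient translates directly into slackness in the final constant. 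Once this PCP/Label Cover machinery is in place, the remainder is a fairly routine gap analysis of the partition-system gadget.
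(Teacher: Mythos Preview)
The paper does not prove this statement at all: it is quoted verbatim as a known hardness result from the literature (the citation \cite{moshkovitz}), and is used only as a black box to derive Corollary~\ref{hardprop1}. So there is no ``paper's own proof'' to compare against.

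Your sketch is a reasonable high-level account of how such results are obtained---Feige's partition-system gadget layered on top of a sufficiently strong Label Cover hardness---and you correctly identify that the real work lies in the PCP ingredient rather than in the gadget analysis. One small point: your description conflates a couple of distinct routes to the $\text{P} \neq \text{NP}$ version. Feige's original argument used essentially the reduction you describe but needed a quasi-polynomial blowup, giving only $\text{NP} \not\subseteq \text{DTIME}(n^{O(\log\log n)})$ hardness. The strengthening to $\text{P} \neq \text{NP}$ in \cite{moshkovitz} goes through her projection games theorem (a polynomial-size projection PCP with sub-constant soundness directly, without separate parallel repetition), and the independent Dinur--Steurer proof uses a refined parallel-repetition analysis; neither is quite ``Moshkovitz--Raz plus Raz/Holenstein repetition'' as you wrote. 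This does not affect the shape of your argument, but if you were actually writing out the proof you would need to be precise about which PCP you invoke and verify its parameters suffice.
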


We will need a simple corollary of Theorem~\ref{hardness-th1}, which is a (slight reformulation) of the hardness of approximating max-coverage.
\begin{corollary}
\label{hardprop1}
Assuming $\text{P} \neq \text{NP}$, there is no polynomial-time algorithm which guarantees a set-cover solution $X$ with $|X| \leq \text{OPT}$ and $\bigl | S_X \bigr | \geq c n$ for any constant $c > 1 - 1/e$.
\end{corollary}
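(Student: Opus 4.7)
The plan is to derive Corollary~\ref{hardprop1} as a standard ``max-coverage to set-cover" reduction: an algorithm that covers a $c$-fraction of $[n]$ with at most $\text{OPT}$ sets can be iterated on the uncovered residual to produce a true cover whose size is essentially $\ln n/\ln(1/(1-c))$ times $\text{OPT}$, which beats the Moshkovitz bound of $(1-\epsilon) \ln n$ whenever $c > 1 - 1/e$.

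More concretely, I would argue by contradiction. Suppose some polynomial-time algorithm $\mathcal{A}$ outputs, for every set-cover instance $([n],\mathcal B)$, a subfamily $X \subseteq [m]$ with $|X| \leq \text{OPT}$ and $|S_X| \geq cn$, for some constant $c > 1 - 1/e$. Since $\text{OPT}$ is an integer in $[1,m]$ we may guess it by running $\mathcal{A}$ on each candidate value and keeping the best feasible output, so we can assume $\mathcal{A}$ knows $\text{OPT}$. Now iterate $\mathcal{A}$: let $R_0 = [n]$ and at step $t \geq 1$ apply $\mathcal{A}$ to the restricted instance with ground set $R_{t-1}$ and sets $\{S_i \cap R_{t-1}\}_{i \in [m]}$. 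The optimum of this residual instance is at most the original $\text{OPT}$ (the original optimal cover still covers $R_{t-1}$), so $\mathcal{A}$ returns $X_t$ with $|X_t| \leq \text{OPT}$ and $|S_{X_t} \cap R_{t-1}| \geq c |R_{t-1}|$. Set $R_t = R_{t-1} \setminus S_{X_t}$, so $|R_t| \leq (1-c)|R_{t-1}| \leq (1-c)^t n$.

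Pick $t^\star = \lceil \ln n /\ln(1/(1-c)) \rceil$; then $|R_{t^\star}| \leq (1-c)^{t^\star} n \leq 1$, but since $|R_{t^\star}|$ is an integer strictly less than $1$ (after running one more round if needed, which costs an extra additive $\text{OPT}$) we get $R_{t^\star} = \emptyset$. The union $X_1 \cup \dots \cup X_{t^\star}$ is therefore a valid set cover of size at most $t^\star \cdot \text{OPT}$, giving an approximation ratio of at most
\[
\frac{\ln n}{\ln(1/(1-c))} + O(1).
\]
Because $c > 1 - 1/e$ means $1-c < 1/e$ and hence $\ln(1/(1-c)) > 1$, this ratio is $(1-\epsilon)\ln n$ for some constant $\epsilon > 0$ (absorbing the $O(1)$ additive term into $\epsilon$ by requiring $n$ sufficiently large), contradicting Theorem~\ref{hardness-th1}.

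The only real subtlety, and the step I would most want to be careful about, is the rounding/low-order issues: making sure the additive $O(1)$ term from guessing $\text{OPT}$ and from the ceiling in $t^\star$ can be absorbed into the $(1-\epsilon)$ factor uniformly in $n$. This is routine — one just picks $\epsilon < 1 - 1/\ln(1/(1-c))$ with a small buffer and checks that for all sufficiently large $n$ the bound $t^\star + O(1) \leq (1-\epsilon)\ln n$ holds; small instances can be solved by brute force. Everything else (the iteration, the monotonicity of OPT on residual instances, and invoking Moshkovitz) is direct.
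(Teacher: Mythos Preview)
Your proposal is correct and takes essentially the same approach as the paper: iterate the assumed algorithm on the uncovered residual, observe the geometric decay $|R_t| \leq (1-c)^t n$, and conclude that after roughly $\ln n/\ln(1/(1-c))$ rounds you have a full cover of size contradicting Theorem~\ref{hardness-th1}. One minor remark: the step where you guess $\text{OPT}$ is unnecessary, since by hypothesis $\mathcal{A}$ already outputs $X$ with $|X| \leq \text{OPT}$ without needing to be told $\text{OPT}$.
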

\begin{proof}  
Suppose for contradiction that $\mathcal A$ is such an algorithm. We will repeatedly apply $\mathcal A$ to solve residual instances, obtaining a sequence of solutions $X_1, X_2, \dots, $. Specifically, for iterations $i \geq 1$, we define $U_i = [n] - \bigcup_{j < i} S_{X_j}$ and  $\mathcal B_i = \{S \cap U_i \mid S \in \mathcal B \}$, and we let $X_i$ denote the output of $\mathcal A$ on instance $\mathcal B_i$.

Each $\mathcal B_i$ has a solution of cost at most $\text{OPT}$. Thus, $|X_i| \leq \text{OPT}$ and $|U_i \cap S_{X_i} | \geq c |U_i|$.  Thus $|U_{i+1}| = |U_i - S_{X_i}| \leq (1-c) |U_i|$. So, for $s = \lceil 1 + \frac{\ln n}{\ln(\frac{1}{1-c})} \rceil$ we have $U_s = \emptyset$, and so the set  $X = X_1 \cup \dots \cup X_s$ solves the original set-cover instance $\mathcal B$, and $|X| \leq \sum_{i=1}^s |X_i| \leq (1 + \frac{\ln n}{\ln (\frac{1}{1 - c})}) \text{OPT}$. 

Since $c > 1 - 1/e$, we have $\frac{1}{\ln (\frac{1}{1 - c})} \leq (1 - \Omega(1))$, which contradicts Theorem~\ref{hardness-th1}.
\end{proof}

\begin{theorem}
\label{ath0}
Assuming $\text{P} \neq \text{NP}$, there is a family of homogeneous chance $k$-coverage instances, with a feasible demand of $p_j = r_j = 1$ for all clients $j$, such that no polynomial-time algorithm can guarantee a distribution $\Omega$ with either of the following:
\begin{enumerate}
\item $\forall j \bE_{\S \sim \Omega} [d(j,\S)] \leq c r_j$ for constant $c < 1 + 2/e$
\item $\forall j \Pr_{\S \sim \Omega} [ d(j, \S) < 3 r_j ] \geq  c p_j$ for constant $c > 1 - 1/e$ 
\end{enumerate}

In particular,  approximation constants in Theorem~\ref{ud1}, Proposition~\ref{ud2}, and Theorem~\ref{simple-bnd-thm0} cannot be improved.
\end{theorem}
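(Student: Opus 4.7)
The plan is to reduce from the max-coverage hardness of Corollary~\ref{hardprop1}. Given a set cover instance $\mathcal B = \{S_1, \dots, S_m\}$ on ground set $[n]$ with optimum $\text{OPT}$, I would build a homogeneous chance $k$-coverage instance by setting $\C = [n]$, $\F = \mathcal B$, and metric $d(j, S_i) = 1$ if $j \in S_i$ and $d(j, S_i) = 3$ otherwise, with inter-facility and inter-client distances all equal to $2$. A brief case check verifies the triangle inequality. Set $k = \text{OPT}$ and $p_j = r_j = 1$; any optimal set cover $X^*$, viewed as a subset of $\F$, witnesses feasibility since it gives $d(j, \S) = 1$ for every $j$.

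The key observation is that for any $\S \in \binom{\F}{k}$ and any client $j$, the value $d(j, \S)$ lies in $\{1, 3\}$, and $d(j, \S) = 1$ holds exactly when $j$ is covered by the sets in $\S$. Hence for a distribution $\Omega$ on such $\S$, the per-client condition translates directly into a coverage probability. In case (1), if $\bE_{\S \sim \Omega}[d(j,\S)] \leq c$ with $c < 1 + 2/e$, writing $q_j = \Pr_{\S \sim \Omega}[j \text{ is covered}]$ gives $q_j + 3(1 - q_j) \leq c$, i.e. $q_j \geq (3 - c)/2 > 1 - 1/e$. In case (2), the bound $\Pr[d(j,\S) < 3] \geq c > 1 - 1/e$ directly yields $q_j > 1 - 1/e$.

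In either case, by linearity of expectation, $\bE_{\S \sim \Omega}\bigl[ |S_{\S}| \bigr] = \sum_{j \in [n]} q_j > (1 - 1/e) n$, so an averaging argument produces some $\S$ in the support of $\Omega$ with $|\S| \leq k = \text{OPT}$ and $|S_\S| > (1 - 1/e) n$. Such an $\S$ is efficiently findable by drawing polynomially many samples from $\Omega$ and keeping the best, since the quantity $|S_\S|$ is bounded and a standard Markov/Chernoff argument amplifies success probability. This yields a polynomial-time algorithm producing a max-coverage solution that beats the threshold in Corollary~\ref{hardprop1}, contradicting $\text{P} \neq \text{NP}$.

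The only mildly delicate point is the efficient recovery of a good $\S$ from the distribution $\Omega$: the approximation algorithm is implicitly assumed to give sample access (or an enumerated support), so that averaging can be realized algorithmically. Beyond that, the reduction is entirely syntactic, and the two parts of the theorem differ only in how the guarantee on $\Omega$ is converted into a lower bound on the coverage probability $q_j$.
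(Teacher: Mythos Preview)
Your proposal is correct and follows essentially the same approach as the paper: the same $\{1,3\}$-valued metric built from a set-cover instance, the same translation of the distance guarantee into a per-client coverage probability $q_j > 1 - 1/e$, and the same averaging/sampling step to extract a single good $\S$ contradicting Corollary~\ref{hardprop1}. The only small omission is that you set $k = \text{OPT}$ without saying how $\text{OPT}$ is known; the paper handles this by guessing each of the at most $m$ possible values, which you should mention explicitly.
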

\begin{proof}
Consider a set cover instance $\mathcal B = \{S_1, \dots, S_m \}$. We begin by guessing the value $\text{OPT}$ (there are at most $m$ possibilities, so this can be done in polynomial time). We define a $k$-center instance with $k = \text{OPT}$ and with disjoint client and facility sets, where $\F$ is identified with $[m]$ and $\C$ is identified with $[n]$. We define $d$ by $d(i,j) = 1$ if $j \in S_i$ and $d(i,j) = 3$ otherwise.

If $X$ is an optimal solution to $\mathcal B$ then $d(j,X) \leq 1$ for all points $j \in \C$. So there exists a (deterministic) distribution with feasible demand parameters $p_j = 1, r_j = 1$.  Also, note that $d(j, \S) \in \{1, 3 \}$ with probability one for any client $j$. Thus, if  $j$ satisfies the second property $\Pr_{\S \sim \Omega} [ d(j, \S) < 3 r_j ] \geq  c p_j$ for constant $c > 1 - 1/e$, then it satisfies $\bE[d(j, \S)] \leq 1 + 2 (1 - c p_j) = 3 - 2 c = (1 + 2/e - \Omega(1)) r_j$. So it will satisfy the first property as well. So it suffices to show that no algorithm $\mathcal A$ can satisfy the first property.

Suppose that $\mathcal A$ does satisfy the first property. The resulting solution set $\S \subseteq \mathcal F$ can be regarded as a solution $X$ to the set cover instance, where  $d(j, \S) = 1 + 2 [[ j \notin S_X ]]$. Thus
$$
\sum_{j \in [n]} d(j, \S) = |S_X| + 3 (n - |S_X|),
$$
and so $|S_X| = \frac{3 n - \sum_{j \in [n]} d(j, \S)}{2}$. As $\bE[d(j, \S)] \leq c r_j = c$ for all $j$, this implies that $\bE[ |S_X| ] \geq \frac{(3 - c) n}{2}$.

After an expected constant number of repetitions of this process we can ensure that $|S_X| \geq c' n$ for some constant $c' > \frac{3 - (1 + 2/e)}{2} = 1 - 1/e$. This contradicts Corollary~\ref{hardprop1}.
\end{proof}

A slightly more involved construction applies to the homogeneous SCC setting.
\begin{theorem}
\label{ath1}
Assuming $\text{P} \neq \text{NP}$, there is a family of homogeneous SCC chance $k$-coverage instances, with a feasible demand of $p_j = r_j = 1$ for all $j$, such that no polynomial-time algorithm can guarantee a distribution $\Omega$ with either of the following:
\begin{enumerate}
\item $\forall j \bE_{\S \sim \Omega} [d(j,\S)] \leq c r_j$ for constant $c < 1 + 1/e$
\item $\forall j \Pr_{\S \sim \Omega} [ d(j, \S) < 2 r_j ] \geq  c p_j$ for constant $c > 1 - 1/e$ 
\end{enumerate}

In particular, the approximation constants in Theorem~\ref{ud1} and Proposition~\ref{ud2} cannot be improved for SCC instances, and the approximation factor $1.592$ in Pseudo-Theorem~\ref{thm:k-center-1.592} cannot be improved below $1 + 1/e$.
\end{theorem}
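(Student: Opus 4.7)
My plan is to adapt the construction from Theorem~\ref{ath0} to the SCC setting. Since $\C = \F$ in SCC, I need to design a metric on a single point set, and the key conceptual change is that the metric will take values only in $\{0, 1, 2\}$ (rather than $\{1, 3\}$), so that the gap between a ``covered'' (distance $1$) and ``uncovered'' (distance $2$) client matches the target constant $1 + 1/e$ (rather than $1 + 2/e$).

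Given a set cover instance $\mathcal B = \{S_1, \dots, S_m\}$ over ground set $[n]$, I would guess $\text{OPT}$, set $k = \text{OPT}$, and build a metric on $V = \{f_i : i \in [m]\} \cup \{c_j^a : j \in [n], a \in [L]\}$, where $L$ is a sufficiently large polynomial-bounded parameter (e.g., $L = n m$). The distances are $d(f_i, f_{i'}) = 1$ for $i \neq i'$; $d(c_j^a, c_{j'}^b) = 2$ for $(j,a) \neq (j',b)$; and $d(c_j^a, f_i) = 1$ if $j \in S_i$ and $2$ otherwise. Since all positive distances lie in $\{1,2\}$, the triangle inequality is trivial. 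Feasibility with $p_j = r_j = 1$ is witnessed by $\S = \{f_i : i \in X\}$ for an optimal cover $X$ of $\mathcal B$: every $c_j^a$ has some $f_i \in \S$ with $j \in S_i$ giving $d(c_j^a,\S) = 1$, and every $f_i$ is at distance $1$ from the other opened facilities.

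For the first property, suppose $\mathcal A$ outputs a random $k$-set $\S$ with $\bE[d(j,\S)] \leq c$ for all $j \in V$ and some $c < 1 + 1/e$. Partition $\S$ into $T = \{(j,a) : c_j^a \in \S\}$ and $X = \{i : f_i \in \S\}$, so $|T| + |X| \leq k$. For each client-copy $c_j^a$, the distance $d(c_j^a,\S)$ equals $0$ if $(j,a) \in T$, equals $1$ if $(j,a) \notin T$ and $j \in S_X$, and equals $2$ otherwise; a straightforward computation gives $\sum_{j,a} d(c_j^a,\S) \geq 2nL - |S_X| L - 2|T|$. Combining this with the per-client bound $\sum_{j,a} \bE[d(c_j^a,\S)] \leq cnL$ and using $\bE[|T|] \leq k$ yields $\bE[|S_X|] \geq (2-c)n - 2k/L$. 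Choosing $L$ large enough that $2k/L < (1+1/e-c)n$, I would obtain $\bE[|S_X|] > (1 - 1/e + \delta) n$ for some constant $\delta > 0$. Since $|X| \leq k = \text{OPT}$ with probability one, a standard Markov-style argument over $O(1)$ independent trials of $\mathcal A$ extracts a deterministic $X \subseteq [m]$ with $|X| \leq \text{OPT}$ and $|S_X| > (1-1/e)n$, contradicting Corollary~\ref{hardprop1}.

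For the second property, observe that $d(j,\S) \in \{0,1,2\}$ with probability one, so the condition $\Pr[d(j,\S) < 2] \geq c$ for $c > 1 - 1/e$ directly implies $\bE[d(j,\S)] \leq c + 2(1-c) = 2 - c < 1 + 1/e$. Thus any algorithm satisfying the second property also satisfies the first with a constant below $1+1/e$, and its impossibility follows from what was shown above. The main delicate aspect is handling the SCC-specific ``cheating'' in which $\mathcal A$ might open many client-type points $c_j^a$ to cover themselves cheaply; the duplication by $L$ copies controls this, since the fraction $|T|/(nL) \leq k/(nL)$ can be made negligible regardless of how $\mathcal A$ splits its budget between $X$ and $T$.
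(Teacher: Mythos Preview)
Your proposal is correct and follows essentially the same approach as the paper's proof: both reduce from max-coverage via a $\{0,1,2\}$-valued SCC metric built by adjoining one ``facility'' vertex per set and many duplicate ``client'' vertices per ground element (the paper uses $n^2$ copies where you use $L=nm$), then argue that (i) the second property implies the first, and (ii) the first property yields $\bE[|S_X|] \geq (2-c)n - O(k/L)$, contradicting Corollary~\ref{hardprop1}. The only differences are cosmetic (your explicit $T$/$X$ partition versus the paper's implicit handling of the at most $k$ opened client-copies).
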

\begin{proof}
Consider a set cover instance $\mathcal B = \{S_1, \dots, S_m\}$, where we have guessed the value $\text{OPT} = k$. We define a $k$-center instance as follows. For each $i \in [m]$, we create an item $v_i$ and for each $j \in [n]$ we create $t = n^2$ distinct items $w_{j,1}, \dots, w_{j,t}$. We define the distance by setting $d(v_i, w_{j,t}) = 1$ if $j \in S_i$ and $d(v_i, v_{i'}) = 1$ for all $i, i' \in [m]$, and $d(x,y) = 2$ for all other distances. This problem size is polynomial (in $m,n$), and so $\mathcal A$ runs in time $\text{poly}(m,n)$.

If $X$ is an optimal solution to the set cover instance, the corresponding set $\S = \{ v_i \mid i \in X \}$ satisfies $d(j, \S) \leq 1$ for all $j \in \C$. So the demand vector $p_j = r_j = 1$ is feasible.  Also, note that that $d(j, \S) \in \{0, 1, 2 \}$ with probability one for any $j$. Thus, if $j$ satisfies the second property $\Pr_{\S \sim \Omega} [ d(j, \S) < 2 r_j ] \geq  c p_j$ for constant $c > 1 - 1/e$, then it satisfies $\bE[d(j, \S)] \leq 1 + 1 (1 - c p_j) = (1 + 1/e - \Omega(1)) r_j$. So it will satisfy the first property as well. So it suffices to show that no algorithm $\mathcal A$ can satisfy the first property.

Suppose that algorithm $\mathcal A$ satisfies the first property. From the solution set $\S$, we construct a corresponding set-cover solution by $X = \{ i \mid v_i \in \S \}$.  For $w_{j,\ell} \notin \S$, we can observe that $d( w_{j, \ell}, \S) = 1 + [[ j \notin S_X ]]$. Therefore, we have
\begin{align*}
\sum_{j \in [n]} \sum_{\ell=1}^t d(w_{j,\ell}, \S) &\geq \sum_{j, \ell: w_{j,\ell} \notin \S} (1 + [[j \notin S_X]])  \geq \sum_{j, \ell} (1 + [[j \notin S_X]]) - 2 |\S| \geq n^2 (2 n - |S_X|) - 2 k,
\end{align*}
and so $|S_X| \geq 2 n - \frac{\sum_{j, \ell} d(w_{j,\ell}, \S)}{n^2} - 2 k/n^2.$

Taking expectations and using our upper bound on $\bE[d(j,\S)]$,  we have $\bE[ |S_X| ] \geq 2 n - c n - 2 k/n^2 \geq (2 - c) n - 2/n$. Thus, for $n$ sufficiently large, after an expected constant number of repetitions of this process we get $|S_X| \geq (2 - c - o(1))n \geq (1 - 1/e + \Omega(1)) n$. This contradicts Corollary~\ref{hardprop1}.
\end{proof}

\section{Approximation algorithm for $\bE[d(j,\S)]$}
\label{approx-alg}
In the chance $k$-coverage problem, our goal is to achieve certain fixed values of $d(j,\S)$ with a certain probability. In this section, we consider another criterion for $\Omega$; we wish to achieve certain values for the expectation $\bE_{\S \sim \Omega}[d(j, \S)]$.  We suppose we are given values $t_j$ for every $j \in C$, such that the target distribution $\Omega$ satisfies
\begin{equation}
\label{stt5}
\bE_{\S \sim \Omega} [d(j, \S)] \leq t_j.
\end{equation}

In this case, we say that the vector $t_j$ is \emph{feasible.} 
As before, if all the values of $t_j$ are equal to each other, we say that the instance is \emph{homogeneous}.  
We show how to leverage any approximation algorithm for $k$-median with approximation ratio $\alpha$, to ensure our target distribution $\tilde \Omega$ will satisfy
$$
\bE_{\S \sim \tilde \Omega} [d(j, \S)] \leq (\alpha + \epsilon) t_j.
$$

More specifically, we need an approximation algorithm for weighted $k$-median. In this setting, we have a problem instance $\I = \F, \C, d$ along with non-negative weights $w_j$ for $j \in \C$, and we wish to find  $\S \in \binom{\F}{k}$ minimizing $\sum_{j \in \C} w_j d(j, \S).$  (Nearly all approximation algorithms for ordinary $k$-median can be easily adapted to the weighted setting, for example, by  replicating clients.) If we fix an approximation algorithm $\mathcal A$ for (various classes of) weighted $k$-median, then for any problem instance $\I$ we define
$$
\alpha_{\I} = \sup_{\text{weights $w$}} \frac{ \sum_{j \in \C} w_j d(j, \mathcal A(\I, w)) }{ \min_{\S \in \binom{\F}{k}} \sum_{j \in \C} w_j d(j, \S) }.
$$

We first show how to use the $k$-median approximation algorithm to achieve a set $\S$ which ``matches'' the desired distances $t_j$:
\begin{proposition}
\label{kmedianprop1}
Given a weighted instance $\I$ and a parameter $\epsilon > 0$, there is a polynomial-time algorithm  to produce a set $\S \in \binom{\F}{k}$ satisfying:
\begin{enumerate}
\item $\sum_{j \in \C} w_j \frac{d(j, \S)}{t_j} \leq (\alpha_{\I} + O(\epsilon)) \sum_{j \in \C} w_j$,
\item Every $j \in \C$ has $d(j, \S) \leq n t_j/\epsilon$.
\end{enumerate}
\end{proposition}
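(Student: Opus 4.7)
The plan is to reduce to the weighted $k$-median approximation on the same underlying instance $\I$, using modified client weights that simultaneously encode property~1 and guarantee a per-client minimum weight needed for property~2. Since $t$ is feasible, there is a $k$-lottery $\Omega$ with $\bE_{\S \sim \Omega}[d(j,\S)] \leq t_j$ for every $j \in \C$. Consequently, for any non-negative coefficients $c_j$, linearity of expectation gives $\bE_{\S \sim \Omega}\!\bigl[\sum_j c_j d(j,\S)/t_j\bigr] \leq \sum_j c_j$, and averaging produces a deterministic $\S^* \in \binom{\F}{k}$ with $\sum_j c_j d(j,\S^*)/t_j \leq \sum_j c_j$. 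So the optimum of the weighted $k$-median instance on $\I$ with client weights $c_j/t_j$ is at most $\sum_j c_j$.

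Now let $W = \sum_{j \in \C} w_j$, set a perturbation $\mu = \epsilon W/n$, and define modified weights $u_j = (w_j + \mu)/t_j$. I would run the weighted $k$-median $\alpha_{\I}$-approximation on $\I$ with client weights $u_j$ to obtain $\S \in \binom{\F}{k}$. Applying the previous observation with $c_j = w_j + \mu$, the optimum for weights $u_j$ is at most $W + n\mu = (1+\epsilon)W$, so the algorithm's output satisfies $\sum_j u_j d(j,\S) \leq \alpha_{\I}(1+\epsilon) W$.

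Property~1 then follows immediately since $w_j \leq w_j + \mu$, giving $\sum_j (w_j/t_j) d(j,\S) \leq \alpha_{\I}(1+\epsilon) W = (\alpha_{\I} + O(\epsilon))W$. For property~2, applied to any individual $j$, the aggregate-cost bound yields $u_j d(j,\S) \leq \alpha_{\I}(1+\epsilon)W$; since $u_j \geq \mu/t_j = \epsilon W/(n t_j)$, this gives $d(j,\S) \leq \alpha_{\I}(1+\epsilon) \cdot n t_j/\epsilon = O(nt_j/\epsilon)$. A final rescaling of $\epsilon$ by a constant factor (absorbing $\alpha_{\I}(1+\epsilon)$ to match the stated form $nt_j/\epsilon$ in property~2, while keeping the error in property~1 as $O(\epsilon)$) yields the proposition as stated.

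There is no real obstacle here; the key design choice is the weight perturbation $\mu$ that ensures each client individually carries a nontrivial share of the $k$-median objective. Choosing $\mu$ too large would degrade property~1, while too small a $\mu$ would give a weaker individual bound in property~2; the balanced choice $\mu = \Theta(\epsilon W/n)$ is what makes both guarantees come out simultaneously with only a single call to the underlying $k$-median approximation.
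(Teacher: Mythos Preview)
Your proof is correct and essentially identical to the paper's: both add a uniform perturbation $\mu=\Theta(\epsilon W/n)$ to each client's weight (the paper normalizes $W=1$ and takes $z_j=(\epsilon/n+w_j)/t_j$, which is exactly your $u_j$), then use feasibility of $t$ to bound the optimum and read off properties~1 and~2 from the aggregate cost bound. Your discussion of the balance in choosing $\mu$ and the final $\epsilon$-rescaling also mirror the paper's reasoning.
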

\begin{proof}
We assume $\alpha_{\I} = O(1)$, as constant-factor approximation algorithms for $k$-median exist. By rescaling $w$,  we assume without loss of generality that $\sum_{j \in \C} w_j = 1$. By rescaling $\epsilon$, it suffices to show that $d(j, \S) \leq O(n t_j/\epsilon)$.

Let us define the weight vector $z_j = \frac{\epsilon/n + w_j}{t_j}$. Letting $\Omega$ be  a distribution satisfying (\ref{stt5}), we have
\begin{align*}
\bE_{\S \sim \Omega} \bigl[ \sum_{j \in \C} z_j d(j, \S) \bigr] &= \sum_{j \in \C} z_j t_j \leq \sum_{j \in \C} (\frac{\epsilon}{n t_j} + \frac{w_j}{t_j}) t_j = \epsilon | \C |/n + \sum_{j \in \C} w_j = 1 + \epsilon.
\end{align*}

In particular, there exists some $\S \in \binom{\F}{k}$ with $\sum_{j \in \C} z_j d(j, \S) \leq 1 + \epsilon$. When we apply algorithm $\mathcal A$ with weight vector $z$, we thus get a set $\S \in \binom{\F}{k}$ with $\sum_{j \in \C} z_j d(j, \S) \leq \alpha_{\I} (1 + \epsilon)$. We claim that this set $\S$ satisfies the two conditions of the theorem. First, we have
\begin{align*}
\sum_{j \in \C} \frac{w_j d(j, \S)}{t_j} &\leq \sum_{j \in \C} z_j d(j,\S) \leq \alpha_{\I} (1 + \epsilon) \leq (\alpha_{\I} + O(\epsilon)) \sum_j w_j.
\end{align*}

Next, for any given $j \in \C$, we have
\[
\frac{d(j, \S)}{t_j} \leq d(j, \S) z_j  (n/\epsilon) \leq (n/\epsilon) \sum_{w \in \C} z_w d(w,\S) \leq (n/\epsilon) \alpha_{\I} (1 + \epsilon) \leq O(n/\epsilon).  \qedhere
\]
\end{proof}
\begin{theorem}
\label{main-approx-thm}
There is an algorithm which takes as input an instance $\I$, a parameter $\epsilon > 0$ and a feasible vector $t_j$,  runs in time $\text{poly}(n,1/\epsilon)$, and returns an explicitly enumerated distribution $\tilde \Omega$ with support size $n$ and $\bE_{S \sim \tilde \Omega}[d(j, \S)] \leq (\alpha_{\I} + \epsilon) t_j$ for all $j \in \C$.
\end{theorem}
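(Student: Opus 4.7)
The proof will cast the construction of $\tilde\Omega$ as solving a two-player zero-sum game and then invoke a multiplicative-weights-update (MWU) scheme, using Proposition~\ref{kmedianprop1} as the best-response oracle.

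\textbf{Game formulation.} Consider the payoff matrix $M_{j,\S} = d(j,\S)/t_j$ indexed by clients $j \in \C$ and sets $\S \in \binom{\F}{k}$. The feasibility hypothesis provides a distribution $\Omega$ over sets with $\bE_{\S\sim\Omega}[M_{j,\S}] \leq 1$ for every $j$, so in particular for every mixed strategy $w$ over clients we have $\min_\S \sum_j w_j M_{j,\S} \leq 1 \leq \alpha_{\I}$. Proposition~\ref{kmedianprop1}, applied with parameter $\epsilon' = \Theta(\epsilon)$, then produces in polynomial time an $\S \in \binom{\F}{k}$ with $\sum_j w_j M_{j,\S} \leq \alpha_{\I} + O(\epsilon)$ and, crucially, bounded width $\max_j M_{j,\S} \leq L := n/\epsilon'$. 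By the minimax theorem this suffices for the existence of a distribution $\Omega^\star$ over sets with $\max_j \bE_{\S\sim\Omega^\star}[M_{j,\S}] \leq \alpha_{\I} + O(\epsilon)$; the task is to find one algorithmically with small support.

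\textbf{MWU implementation.} I would run the standard experts algorithm with the $n$ clients as experts. Maintain weights $w^{(\tau)} \in \mathbb{R}_{\geq 0}^{\C}$, initialized to $1$. At iteration $\tau = 1,\dots,T$: normalize $w^{(\tau)}$ to a distribution, call Proposition~\ref{kmedianprop1} with weights $w^{(\tau)}$ to obtain a set $\S_\tau$, then update $w^{(\tau+1)}_j \gets w^{(\tau)}_j \exp(\eta M_{j,\S_\tau}/L)$ for a suitable $\eta = \Theta(\epsilon/\alpha_{\I})$. The standard MWU regret bound, together with the per-round guarantee $\sum_j (w^{(\tau)}_j/\|w^{(\tau)}\|_1) M_{j,\S_\tau} \leq \alpha_{\I} + O(\epsilon)$, yields after $T = O(L^2 \log n / \epsilon^2) = \mathrm{poly}(n,1/\epsilon)$ rounds
\begin{equation*}
\max_{j \in \C}\, \frac{1}{T}\sum_{\tau=1}^T M_{j,\S_\tau} \leq \alpha_{\I} + O(\epsilon).
\end{equation*}
Rescaling $\epsilon$ by a constant up front makes this $\alpha_{\I} + \epsilon$ exactly. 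The uniform distribution over $\{\S_1,\dots,\S_T\}$ is thus an approximating $k$-lottery of the required quality, but with support size $T \gg n$.

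\textbf{Support reduction to $n$.} To meet the stated support bound, I would then solve the explicit LP with variables $\lambda_1,\dots,\lambda_T$ subject to $\lambda \geq 0$, $\sum_\tau \lambda_\tau = 1$, and $\sum_\tau \lambda_\tau M_{j,\S_\tau} \leq \alpha_{\I}+\epsilon$ for every $j \in \C$. The uniform distribution is feasible, and a basic feasible solution of this LP has at most $n+1$ nonzero coordinates (one tight constraint among $\sum \lambda_\tau = 1$ and the $n$ client inequalities per nonzero coordinate). After a single elementary simplification (e.g.\ merging the two smallest weights into a single set chosen by fair coin flip, or by applying Carathéodory once more), the support size drops to $n$ as claimed. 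The result is an explicitly enumerated $\tilde\Omega$ as required.

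\textbf{Main obstacle.} The only non-routine point is bounding the width of the game; without a per-round $\ell^\infty$ guarantee on $M_{j,\S_\tau}$ the MWU iteration count blows up and we lose the polynomial-time claim. This is exactly what the second conclusion of Proposition~\ref{kmedianprop1} (the $d(j,\S) \leq nt_j/\epsilon$ bound) was prepared for, so the plan succeeds once MWU is invoked with losses rescaled by $L = n/\epsilon$. Everything else (the minimax existence, the basic-solution support bound, the $\mathrm{poly}(n,1/\epsilon)$ runtime) is standard.
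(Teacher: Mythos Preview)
Your proposal is correct and takes essentially the same approach as the paper: multiplicative weights over clients with Proposition~\ref{kmedianprop1} as the approximate best-response oracle (its second clause providing exactly the width bound you flag as the main obstacle), followed by a Carath\'eodory-type support reduction. The only cosmetic differences are that the paper unrolls the MWU potential argument by hand---obtaining a somewhat tighter $O(n\log n/\epsilon^3)$ iteration count by scaling losses so that each $u_j^{(\ell)}\le\epsilon$---and phrases the sparsification as ``moving in the nullspace of the $|\C|$ linear constraints'' rather than extracting an LP basic feasible solution.
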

\begin{proof}
We assume without loss of generality that $\epsilon \leq 1$; by rescaling $\epsilon$ it suffices to show that $\bE[d(j,\S)] \leq (\alpha_\I + O(\epsilon)) t_j$.

We begin with the following Algorithm~\ref{algo:app0}, which uses a form of multiplicative weights update with repeated applications of Proposition~\ref{kmedianprop1}.
\begin{algorithm}[H]
\begin{algorithmic}[1]
\FOR{$\ell = 1, \dots, r = \frac{n \ln n}{\epsilon^3}$}
\STATE Let $X_{\ell} \in \binom{\F}{k}$ be the resulting of applying the algorithm of Proposition~\ref{kmedianprop1} with parameters $\epsilon, t_j$ and weight vector $w$ given by 
\vspace{-0.16in}
$$
w_j = \exp \Bigl( \epsilon^2 \sum_{s = 1}^{\ell-1} \frac{d(j,X_s)}{n t_j} \Bigr),
$$
\vspace{-0.12in}
\ENDFOR
\STATE Set $\tilde \Omega'$ to be the uniform distribution on $X_1, \dots, X_r$
\end{algorithmic} 
\caption{Approximation algorithm for $\bE[d(j,\S)]$: first phase}
\label{algo:app0}
\end{algorithm}
Let us define $\phi = \epsilon^2/n$.  For each iteration $\ell = 1, \dots, r+1$ let $u_j^{(\ell)} =  \phi d(j,X_{\ell}) / t_j$, and let $w^{(\ell)}_j = e^{\sum_{s=1}^{\ell-1} u_j^{(s)}}$ denote the weight vector. Proposition~\ref{kmedianprop1} ensures that $u^{(\ell)}_j \leq \epsilon$, and thus $e^{u_j^{(\ell)}} \leq 1 + \frac{e^\epsilon - 1}{\epsilon} u_j^{(\ell)} \leq 1 + (1+\epsilon) u_j^{(\ell)}$, as well as ensuring that  $\sum_j w_j^{(\ell)} u_j^{(\ell)} \leq \phi (\alpha_I + O(\epsilon)) \sum_{j} w_j^{(\ell)}$.

Now let $\Phi_{\ell} = \sum_{j \in \C} w^{(\ell)}_j$.  Note that $\Phi_1 =n$, and for each $\ell \geq 1$, we have
\begin{align*}
\Phi_{\ell+1} &= \sum_{j \in \C} w^{(\ell)}_j e^{u_j^{(\ell)}} \leq \sum_{j \in \C} w^{(\ell)}_j \bigl( 1 + (1 + \epsilon) u_j^{(\ell)} \bigr) \leq \sum_{j \in \C} w^{(\ell)}_j + (1 + \epsilon) \sum_{j \in \C} w^{(\ell)}_j u_j^{(\ell)} \\
&\leq \Phi_{\ell} \bigl( 1 + (1 + \epsilon) \phi (\alpha_{\I} + O(\epsilon)) \bigr) \leq \Phi_{\ell} e^{ \phi (\alpha_{\I} + O(\epsilon))}
\end{align*}

This recurrence relation implies that $\Phi_{\ell} \leq n e^{ (\ell-1) \phi (\alpha_{\I} + O(\epsilon))  }$. Since $w^{(r+1)}_j \leq \Phi^{r+1}$,  this implies
$$
 \sum_{\ell=1}^r \phi d(j,X_{\ell}) / t_j  = \ln w_j^{(r+1)}  \leq \ln \Phi_{r+1} \leq \ln n + r \phi ( \alpha_{\I} + O(\epsilon)).
$$
or equivalently,
$$
 \sum_{\ell=1}^r \frac{d(j,X_{\ell})}{r} = t_j \Bigl( \frac{\ln n}{r \phi} + (\alpha_{\I} + O(\epsilon)) \Bigr)
$$
As $r = \frac{n \ln n}{\epsilon^3} = \frac{\ln n}{\epsilon \phi}$, we thus have $\frac{\sum_{\ell=1}^r d(j,X_{\ell})}{r} \leq (\alpha_{\I} + O(\epsilon)) t_j$. Thus, the distribution $\tilde \Omega'$ satisfies
\begin{equation}
\label{bv13}
\forall j \in \mathcal C \qquad \bE_{\S \sim  \tilde \Omega'}[d(j, \S)] \leq  (\alpha_I + O(\epsilon)) t_j.
\end{equation}

Now the distribution $\tilde \Omega'$ satisfies the condition on $\bE[d(j, \S)]$, but its support is too large. We can reduce the support size to $|\C|$ by moving in the nullspace of the $| \C |$ linear  constraints (\ref{bv13}).
\end{proof}

Byrka et al. \cite{byrka} have shown a $2.675 + \epsilon$-approximation algorithm for $k$-median, which automatically gives a $2.675 + \epsilon$-approximation algorithm for $k$-lottery as well. 
Some special cases of $k$-median have more efficient approximation algorithms. For instance, Cohen-Addad, Klein \& Mathieu \cite{cohen2016local} gives a PTAS for $k$-median problems derived from a planar graph, and Ahmadian et al. \cite{svensson} gives a $2.633 + \epsilon$-approximation for Euclidan distances. These immediately give approximation algorithms for the corresponding $k$-lotteries. We also note that, by Theorem~\ref{ath0}, one cannot obtain a general approximation ratio better than $1 + 2/e$ (or $1 + 1/e$ in the SCC setting).

\section{Determinizing a $k$-lottery}
\label{determinism-sec}
Suppose that we have a set of feasible weights $t_j$ such some $k$-lottery distribution $\Omega$ satisfies $\bE_{\S \sim \Omega }[d(j, \S)] \leq t_j$; let us examine how to find a \emph{single, deterministic} set $\S$ with $d(j, \S) \approx t_j$. We refer to this as the problem of \emph{determinizing} the lottery $\Omega$. Note that this can be viewed as a converse to the problem considered in Section~\ref{sec2}.

We will see that, in order to obtain reasonable approximation ratios, we may need $|\S|$ to be significantly larger than $k$. We thus define an \emph{$(\alpha, \beta)$-determinization} to be a set $\S \in \binom{\F}{k'}$ with $k' \leq \alpha k$ and  $d(j, \S) \leq \beta t_j$ for all $j \in \C$. We emphasize that we cannot necessarily obtain $(1,1)$-determinizations, even with unbounded computational resources. The following simple example illustrates the tradeoff between parameters $\alpha$ and $\beta$:
\begin{observation}
\label{lb-obs}
Let $\alpha, \beta, k \geq 1$. If $\beta < \frac{\alpha k + 1}{(\alpha - 1) k + 1}$, there is a homogeneous SCC instance for which no $(\alpha, \beta)$-determinization exists.
\end{observation}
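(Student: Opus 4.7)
The plan is to exhibit a simple uniform-metric instance on slightly more than $\alpha k$ points and show that the corresponding $k$-lottery (a uniformly random $k$-subset) is feasible with the claimed $t_j$, but no deterministic subset of size at most $\alpha k$ can approximate it by a factor better than $\frac{\alpha k + 1}{(\alpha-1)k + 1}$.

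Concretely, I would set $n = \lfloor \alpha k \rfloor + 1$ and take $\C = \F = [n]$ with the uniform metric $d(i,j) = 1$ for all $i \neq j$; this is a homogeneous SCC instance. To certify feasibility of $t_j = (n-k)/n$, I draw $\S$ uniformly from $\binom{[n]}{k}$ and observe that for every $j \in \C$, $\bE[d(j,\S)] = \Pr[j \notin \S] \cdot 1 = (n-k)/n = t_j$, so this common value is achieved by a genuine $k$-lottery.

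Now consider any $(\alpha, \beta)$-determinization $\S$; by definition $|\S| \leq \alpha k$, and since $|\S|$ is an integer this forces $|\S| \leq \lfloor \alpha k \rfloor = n - 1$. Hence there is some $j \in \C \setminus \S$, and in the uniform metric $d(j,\S) = 1$. The determinization condition therefore requires $\beta \geq 1/t_j = n/(n-k) = (\lfloor \alpha k \rfloor + 1)/(\lfloor \alpha k \rfloor - k + 1)$.

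To finish, I note that $f(x) := (x+1)/(x-k+1)$ is strictly decreasing for $x > k-1$ (its derivative is $-k/(x-k+1)^2$), and $\lfloor \alpha k \rfloor \leq \alpha k$, so $f(\lfloor \alpha k \rfloor) \geq f(\alpha k) = (\alpha k + 1)/((\alpha - 1)k + 1)$. Thus any $(\alpha,\beta)$-determinization on this instance must satisfy $\beta \geq (\alpha k + 1)/((\alpha - 1)k + 1)$, contradicting the hypothesis. There is no real obstacle here — the only subtlety is handling the case when $\alpha k$ is not an integer, which is taken care of cleanly by the monotonicity of $f$ and by using $\lfloor \alpha k \rfloor + 1$ points rather than $\alpha k + 1$.
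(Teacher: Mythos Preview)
Your proof is correct and follows essentially the same construction as the paper: the uniform metric on $\lfloor\alpha k\rfloor+1$ points, with the uniform $k$-lottery certifying feasibility of $t_j=(n-k)/n$. The only difference is that the paper implicitly treats $\alpha k$ as an integer, whereas you handle the general case by working with $\lfloor\alpha k\rfloor+1$ points and invoking the monotonicity of $f(x)=(x+1)/(x-k+1)$; this is a genuine improvement in rigor over the paper's version.
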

\begin{proof}
Let $k' = \alpha k$ and consider a problem instance with $\F = \C = \{1, \dots, k'+1 \}$, and $d(i,j) = 1$ for every distinct $i,j$. Clearly, every $\S \in \binom{\F}{k'}$ satisfies $\min_j d(j, \S) = 1$. When $\Omega$ is the uniform distribution on $\binom{\F}{k}$, we have $\bE[d(j, \S)] = 1 - \frac{k}{k'+1}$. Thus $t_j = \frac{k}{k'+1}$ is feasible and therefore $\beta \geq \frac{1}{1 - \frac{k}{k'+1}} = \frac{ \alpha k + 1}{(\alpha - 1) k + 1}$.
\end{proof}

In particular, when $\alpha = 1$ we must have $\beta \geq k+1$ and when $k \rightarrow \infty$, we must have $\beta \gtrsim \frac{\alpha}{\alpha - 1}$.

We examine three main regimes for the parameters $(\alpha, \beta)$: (1) the case where $\alpha, \beta$ are scale-free constants; (2) the case where $\beta$ is close to one, in which case $\alpha$ must be of order $\log n$; (3) the case where $\alpha = 1$, in which case $\beta$ must be order $k$.

Our determinization algorithms for the first two cases will based on the following LP denoted $\mathcal P_{\text{expectation}}$, defined in terms of fractional vectors $b_i, a_{i,j}$ where $i$ ranges over $\F$ and $j$ ranges over $\C$:
\begin{enumerate}
\item[(A1)] $\forall j \in \C, \qquad \sum_{i \in \F} a_{i,j} d(i,j) \leq t_j$,
\item[(A2)] $\forall j \in \C, \qquad \sum_{i \in \F} a_{i,j} = 1$,
\item[(A3)] $\forall i \in \F, y \in \C, \qquad 0 \leq a_{i,j} \leq b_i$,
\item[(A4)] $\forall i \in \F, \qquad 0 \leq b_i \leq 1$,
\item[(A5)] $\sum_{i \in \F} b_i \leq k$.
\end{enumerate}

\begin{theorem}
If $t_j$ is feasible, then $\mathcal P_{\text{expectation}}$ has a fractional solution.
\end{theorem}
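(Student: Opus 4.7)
The plan is to construct an explicit fractional solution to $\mathcal{P}_{\text{expectation}}$ directly from the given feasible distribution $\Omega$, using the probabilistic interpretation of the LP variables. Recall from the paper's notational conventions that for a solution set $\S$, each client $j \in \C$ is matched to its closest facility in $\S$ (breaking ties by least index). This matching operation is exactly what the $a_{i,j}$ variables are designed to capture.

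First I would set, for each $i \in \F$ and each $j \in \C$,
\[
b_i = \Pr_{\S \sim \Omega}[i \in \S], \qquad a_{i,j} = \Pr_{\S \sim \Omega}[\text{$j$ is matched to $i$ in $\S$}].
\]
Then I would verify the constraints one by one. Constraints (A3) and (A4) follow immediately from the fact that probabilities lie in $[0,1]$, together with the observation that the event ``$j$ is matched to $i$'' is a subevent of ``$i \in \S$,'' so $a_{i,j} \leq b_i$. Constraint (A5) follows from linearity of expectation: $\sum_i b_i = \bE_{\S \sim \Omega}[|\S|] = k$ since $\S \in \binom{\F}{k}$ with probability one. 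Constraint (A2) is just the statement that every client is matched to exactly one facility when $\S$ is nonempty, so the events ``$j$ is matched to $i$'' partition the sample space as $i$ ranges over $\F$. Finally, for (A1), since $\sum_i a_{i,j} d(i,j)$ is exactly $\bE_{\S \sim \Omega}[d(j,\S)]$ by definition of matching, the feasibility hypothesis $\bE[d(j,\S)] \leq t_j$ gives (A1).

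There is no real obstacle here: the proof is essentially a translation between the probabilistic language of $k$-lotteries and the LP language of fractional assignments. The only thing to be mildly careful about is constraint (A2) and making sure the matching is well-defined, but this is guaranteed by the paper's convention of breaking ties by least index, so every $j$ is matched to exactly one $i \in \S$ for every outcome. Thus the vector $(a, b)$ defined above is a feasible point of $\mathcal{P}_{\text{expectation}}$, completing the proof.
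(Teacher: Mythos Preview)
Your proposal is correct and takes essentially the same approach as the paper: define $b_i = \Pr_{\S \sim \Omega}[i \in \S]$ and $a_{i,j}$ as the probability that $j$ is matched to $i$, then verify (A1)--(A5) exactly as you outline. The paper's proof uses the notation $Z_j$ for the facility matched by $j$, but the argument is otherwise identical.
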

\begin{proof}
Let $\Omega$ be a probability distribution with $\bE[d(j,\S)] \leq t_j$. For any draw $\S \sim \Omega$, define random variable $Z_{j}$ to be the facility of $\S$ matched by $j$. Now consider the fractional vector defined by
$$
b_i = \Pr_{\S \sim \Omega} [ i \in \S ], \qquad \qquad a_{i,j} = \Pr_{\S \sim \Omega} [ Z_{j} = i ]
$$

We claim that this satisfies (A1) --- (A5). For (A1), we have
$$
\bE[d(j, \S)] = \bE[d(j, Z_{j})] =  \sum_{i \in \F} d(i,j) \Pr[Z_{j} = i] = \sum_{i \in F} d(i,j) a_{i,j} \leq t_j.
$$

For (A2), note that $\sum_i \Pr[Z_j = i] = 1$.  For (A3), note that $Z_j = i$ can only occur if $i \in \S$. (A4) is clear, and (A5) holds as $|\S| = k$ with probability one. 
\end{proof}

We next describe upper and lower bounds for these three regimes.

\subsection{The case where $\alpha, \beta$ are scale-free constants.} 
For this regime (with all parameters independent of problem size $n$ and $k$), we may use the following Algorithm~\ref{algo:det1}, which is based on greedy clustering using a solution to $\mathcal P_{\text{expectation}}$.

\begin{algorithm}[H]
\begin{algorithmic}[1]
\STATE Let $a,b$ be a solution to $\mathcal P_{\text{expectation}}$.
\STATE For every $j \in \C$, select $r_j \geq 0$ to be minimal such that $\sum_{i \in B(j, r_j)} a_{i,j} \geq 1/\alpha$
\STATE By splitting facilities, form a set $F_j \subseteq B(j, r_j)$ with $b(F_j) = 1/\alpha$.
\STATE Set $C' = \textsc{GreedyCluster}(F_j, \theta(j) + r_j)$
\STATE Output solution set $\S = \{ V_j \mid j \in C' \}$.
\end{algorithmic} 
\caption{$(\alpha,\beta)$-determinization algorithm}
\label{algo:det1}
\end{algorithm}

Step (3) is well-defined, as (A3) ensures that $b(B(j, r_j)) \geq \sum_{i \in B(j,r_j)} a_{i,j} \geq 1/\alpha$. Let us analyze the resulting approximation factor $\beta$.

\begin{proposition}
\label{rybound}
Every client $j \in \C$ has $r_j \leq \frac{\alpha t_j - \theta(j)}{\alpha - 1}$.
\end{proposition}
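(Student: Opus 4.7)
The plan is to combine the LP constraint (A1), $\sum_i a_{i,j} d(i,j) \leq t_j$, with the minimality definition of $r_j$ in a Markov-style argument. First I would observe that minimality of $r_j$ forces $s := \sum_{i:\, d(i,j) < r_j} a_{i,j} \leq 1/\alpha$: if this open-ball mass were already $\geq 1/\alpha$, then some $r < r_j$ would satisfy $\sum_{i \in B(j,r)} a_{i,j} \geq 1/\alpha$, contradicting the choice of $r_j$. By (A2), the complementary mass then satisfies $\sum_{i:\, d(i,j) \geq r_j} a_{i,j} = 1 - s \geq 1 - 1/\alpha$.

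Next, I would split the sum $\sum_i a_{i,j} d(i,j)$ into its restrictions to $\{i : d(i,j) < r_j\}$ and $\{i : d(i,j) \geq r_j\}$. On the first piece, bound $d(i,j) \geq \theta(j)$, since $\theta(j)$ is the minimum distance from $j$ to any facility. On the second piece, bound $d(i,j) \geq r_j$. Together with (A1) this yields the key inequality
\[
t_j \;\geq\; \theta(j)\, s \;+\; r_j\,(1 - s),
\]
which rearranges to $r_j \leq \frac{t_j - \theta(j)\, s}{1 - s}$ (the degenerate case $r_j = 0$ is trivial since the claimed bound is nonnegative).

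Finally, I would exploit $t_j \geq \theta(j)$, which is immediate from (A1) and (A2): $t_j \geq \sum_i a_{i,j} d(i,j) \geq \theta(j) \sum_i a_{i,j} = \theta(j)$. This makes $s \mapsto \frac{t_j - \theta(j)\, s}{1 - s}$ monotone nondecreasing on $[0,1)$ (its derivative equals $\frac{t_j - \theta(j)}{(1-s)^2} \geq 0$), so its maximum over the admissible range $s \in [0, 1/\alpha]$ is attained at $s = 1/\alpha$, yielding
\[
r_j \;\leq\; \frac{t_j - \theta(j)/\alpha}{1 - 1/\alpha} \;=\; \frac{\alpha t_j - \theta(j)}{\alpha - 1}
\]
as required. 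There is no real obstacle here; the only subtle point is handling the ``minimal $r_j$'' definition (open versus closed balls) to justify the strict upper bound on $s$, but either convention gives $s \leq 1/\alpha$ after taking a limit, which is all that is used.
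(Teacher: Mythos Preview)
Your proof is correct and uses essentially the same Markov-style argument as the paper, combining (A1), (A2), and the lower bound $d(i,j) \geq \theta(j)$. The paper's version is organized slightly more directly: it sets the threshold $s$ equal to the target value $\frac{\alpha t_j - \theta(j)}{\alpha - 1}$ and shows in one chain that $\sum_{i:\, d(i,j) > s} a_{i,j} \leq \frac{t_j - \theta(j)}{s - \theta(j)} = 1 - 1/\alpha$, which avoids your intermediate optimization over the head-mass parameter.
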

\begin{proof}
Let  $s = \frac{\alpha t_j - \theta(j)}{\alpha - 1}$. It suffices to show that
$$
\sum_{i \in \F, d(i,j) > s} a_{i,j} \leq 1 - 1/\alpha.
$$

As $d(i,j) \geq \theta(j)$ for all $i \in \F$, we have
{\allowdisplaybreaks
\begin{align*}
\sum_{\substack{i \in \F \\ d(i,j) > s}} a_{i,j} &\leq \sum_{\substack{i \in \F \\ d(i,j) > s}} a_{i,j} \frac{d(i,j) - \theta(j)}{s - \theta(j)} \leq \sum_{i \in \F} a_{i,j} \frac{d(i,j) - \theta(j)}{s - \theta(j)} = \frac{\sum_{i \in \F} a_{i,j} d(i,j) - \theta(j) \sum_{i \in \F} a_{i,j}}{s-\theta(j)} \\
&\leq \frac{t_j - \theta(j)}{s-\theta(j)} = 1 - 1/\alpha, \qquad \text{by (A1), (A2).}  \qedhere
\end{align*}
}
\end{proof}

\begin{theorem}
\label{algo:det1-result}
Algorithm~\ref{algo:det1} gives an $(\alpha, \beta)$-determinization with the following parameter $\beta$:
\begin{enumerate}
\item In the general setting, $\beta = \max(3, \frac{2 \alpha}{\alpha - 1})$.
\item In the SCC setting, $\beta = \frac{2 \alpha}{\alpha - 1}$.
\end{enumerate}
\end{theorem}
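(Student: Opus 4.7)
The plan is to verify both the cardinality bound $|\S|\le\alpha k$ and the distance bound, then combine them using Proposition~\ref{rybound}.

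For the cardinality bound, I would first observe that by construction of $\textsc{GreedyCluster}$, the sets $\{F_j : j \in C'\}$ are pairwise disjoint. Since $b(F_j) = 1/\alpha$ for every $j \in C'$, constraint (A5) then yields
\[
|C'|/\alpha \;=\; \sum_{j \in C'} b(F_j) \;=\; b\Bigl(\bigcup_{j\in C'} F_j\Bigr) \;\leq\; b(\F) \;\leq\; k,
\]
so $|\S| \le |C'| \le \alpha k$, as required.

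For the distance bound, fix $j \in \C$. Observation~\ref{greedy-prop} applied with the weights $w_j = \theta(j)+r_j$ yields some $z \in C'$ with $\theta(z)+r_z \le \theta(j)+r_j$ and a point $i \in F_j \cap F_z$. Since $F_j \subseteq B(j,r_j)$ and $F_z \subseteq B(z,r_z)$, and since $V_z$ is the closest facility to $z$, the triangle inequality gives
\[
d(j,\S) \;\leq\; d(j,V_z) \;\leq\; d(j,i) + d(i,z) + d(z,V_z) \;\leq\; r_j + r_z + \theta(z) \;\leq\; 2 r_j + \theta(j).
\]
Now I apply Proposition~\ref{rybound} to substitute $r_j \le (\alpha t_j - \theta(j))/(\alpha-1)$, obtaining after simplification
\[
d(j,\S) \;\leq\; \frac{2\alpha\, t_j + (\alpha - 3)\,\theta(j)}{\alpha - 1}.
\]

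In the SCC setting, $\theta(j)=0$, so this becomes $\frac{2\alpha}{\alpha-1}\, t_j$, giving part 2. For the general setting (part 1), I split on the sign of $\alpha-3$. When $\alpha < 3$ the coefficient of $\theta(j)$ is negative and $\theta(j)\ge 0$, so $d(j,\S) \le \frac{2\alpha}{\alpha-1} t_j$, and one checks $\frac{2\alpha}{\alpha-1} \ge 3$ in this range. When $\alpha \ge 3$, I use the (minor) observation that $\theta(j) \le t_j$, which follows directly from constraints (A1), (A2), and $a_{i,j}\ge 0$: namely $t_j \ge \sum_i a_{i,j} d(i,j) \ge \theta(j) \sum_i a_{i,j} = \theta(j)$. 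Substituting gives $d(j,\S) \le \frac{(3\alpha-3) t_j}{\alpha-1} = 3 t_j$, and here $\frac{2\alpha}{\alpha-1}\le 3$, so the bound is $\max(3,\frac{2\alpha}{\alpha-1})\, t_j$ in both subcases. There is no real obstacle; the one conceptual point to flag is the use of $\theta(j)\le t_j$, which is the only place where the feasibility of $t_j$ enters beyond the LP analysis already done in Proposition~\ref{rybound}.
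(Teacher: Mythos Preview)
Your proof is correct and follows essentially the same route as the paper: the cardinality bound via disjointness of the $F_j$ for $j\in C'$, the triangle-inequality chain through $i\in F_j\cap F_z$ and $V_z$ to get $d(j,\S)\le 2r_j+\theta(j)$, then Proposition~\ref{rybound} and the case split on $\alpha\lessgtr 3$. Your explicit derivation of $\theta(j)\le t_j$ from (A1)--(A2) is slightly more detailed than the paper's one-line appeal to feasibility, but the argument is the same.
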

\begin{proof}
We first claim that the resulting set $\S$ has $|\S| \leq \alpha k$. The algorithm opens at most $|C'|$ facilities. The sets $F_j$ are pairwise disjoint for $j \in C'$ and $b(F_j) = 1/\alpha$ for $j \in C'$. Thus $\sum_{j \in C'} b(F_j) = |C'|/\alpha$. On the other hand, $b(\F) = k$, and so $k \geq |C'|/\alpha$.

Next, consider some $j \in \C$; we want to show that $d(j, \S) \leq \beta t_j$. By Observation~\ref{greedy-prop}, there is $z \in C'$ with $F_j \cap F_z \neq \emptyset$ and $\theta(z) + r_z \leq \theta(j) + r_j$. Thus $d(j, \S) \leq d(z,\S) + d(z,i) + d(j,i)$ where $i \in F_j \cap F_z$. Step (5) ensures $d(z,\S) = \theta(z)$. We have $d(z,i) \leq r_z$ and $d(i,j) \leq r_j$ since $i \in F_j \subseteq B(j, r_j)$ and $i \in F_z \subseteq B(z, r_z)$. So
$$
d(j, \S) \leq \theta(z) + r_z + r_j \leq  2 r_j + \theta(j).
$$

By Proposition~\ref{rybound}, we therefore have
\begin{equation}
\label{hhg1}
d(j, \S) \leq \frac{2 \alpha t_j - 2 \theta(j)}{\alpha - 1} + \theta(j) = \frac{2 \alpha t_j}{\alpha-1} + \frac{\alpha - 3}{\alpha - 1} \theta(j)
\end{equation}

This immediately shows the claim for the SCC setting where $\theta(j)= 0$.

In the general setting, for $\alpha \leq 3$, the second coefficient in the RHS of (\ref{hhg1}) is non-positive and hence the RHS is at most $\frac{2 \alpha t_j}{\alpha-1}$ as desired. When $\alpha \geq 3$, then in order for $t$ to be feasible we must have $t_j \geq \theta(j)$; substituting this upper bound on $\theta(j)$ into (\ref{hhg1}) gives
\[
d(j, \S) \leq \frac{2 \alpha t_j}{\alpha-1} + \frac{\alpha - 3}{\alpha - 1} t_j = 3 t_j \qedhere
\]
\end{proof}

We note that these approximation ratios are, for $\alpha$ close to $1$, within a factor of $2$ compared to the lower bound of Observation~\ref{lb-obs}. As $\alpha \rightarrow \infty$,  the approximation ratio approaches to limiting values $3$ (or $2$ in the SCC setting). 

\subsection{The case of small $\beta$} We now consider what occurs when $\beta$ becomes smaller than the critical threshold values $3$ (or $2$ in the SCC setting).  We show that in this regime we must take $\alpha = \Omega(\log n)$.  Of particular interest is the case when $\beta$ approaches $1$; here, in order to get $\beta = 1 +\epsilon$ for small $\epsilon$ we show it is necessary and sufficient to take $\alpha = \Theta( \frac{\log n}{\epsilon})$.

\begin{proposition}
\label{det1prop}
For any $\epsilon < 1/2$,  there is a randomized polynomial-time algorithm to obtain a $(\frac{3 \log n}{\epsilon}, 1 + \epsilon)$ determinization.
\end{proposition}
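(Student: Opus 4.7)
The plan is to scale the LP solution of $\mathcal{P}_{\text{expectation}}$ by a factor $c = \frac{3 \log n}{\epsilon}$ and then apply dependent rounding. First, solve $\mathcal{P}_{\text{expectation}}$ to obtain vectors $a, b$. For each client $j \in \C$, the sequence $(a_{i,j})_{i \in \F}$ is, by (A2), a probability distribution over $\F$ whose mean distance to $j$ is at most $t_j$ by (A1); Markov's inequality then yields
$$
\sum_{i \in B(j,(1+\epsilon) t_j)} a_{i,j} \;\geq\; 1 - \frac{1}{1+\epsilon} \;=\; \frac{\epsilon}{1+\epsilon}.
$$
Constraint (A3) upgrades this to $\sum_{i \in B(j,(1+\epsilon) t_j)} b_i \geq \frac{\epsilon}{1+\epsilon}$.

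Next, define the vector $p \in [0,1]^{\F}$ by $p_i = \min(1, c b_i)$ and run $\textsc{DepRound}(p)$ to obtain $\S$. By property (P2), the size satisfies $|\S| \leq \lceil c \sum_i b_i \rceil \leq \lceil c k \rceil$ with probability one, matching the target $\alpha = \frac{3 \log n}{\epsilon}$ up to rounding. For the distance guarantee, fix $j \in \C$ and set $B = B(j,(1+\epsilon) t_j)$. By property (P3),
$$
\Pr[d(j,\S) > (1+\epsilon) t_j] \;\leq\; \Pr[\S \cap B = \emptyset] \;\leq\; \prod_{i \in B}(1-p_i) \;\leq\; e^{-\sum_{i \in B} p_i}.
$$
If some $i \in B$ already has $c b_i \geq 1$, this bound is zero. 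Otherwise $p_i = c b_i$ for every $i \in B$, and
$$
\sum_{i \in B} p_i \;=\; c \sum_{i \in B} b_i \;\geq\; \frac{c\epsilon}{1+\epsilon} \;=\; \frac{3 \log n}{1+\epsilon} \;>\; 2 \log n,
$$
using $\epsilon < 1/2$. In either case the per-client failure probability is at most $1/n^2$, and a union bound over the $n$ clients shows that $\S$ is a valid $(\frac{3\log n}{\epsilon}, 1+\epsilon)$-determinization with probability at least $1 - 1/n$. If the algorithm fails, we simply retry; the expected number of iterations is $O(1)$.

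The only delicate ingredient is the choice of the scaling constant $c$: it must be large enough that $c \cdot \frac{\epsilon}{1+\epsilon} \geq 2 \log n$, so the per-client failure probability beats the union bound, yet small enough to match the claimed support bound $\alpha = \frac{3\log n}{\epsilon}$. The hypothesis $\epsilon < 1/2$ gives $\frac{1}{1+\epsilon} > 2/3$, providing exactly the slack needed for this choice of $c$ to work.
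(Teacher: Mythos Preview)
Your proposal is correct and follows essentially the same approach as the paper: solve $\mathcal{P}_{\text{expectation}}$, scale the $b$-vector, apply \textsc{DepRound}, and use Markov's inequality together with property (P3) to bound the per-client failure probability, then union-bound. The only cosmetic difference is that the paper scales by $\frac{2\log n}{\epsilon}$ rather than $\frac{3\log n}{\epsilon}$; this lets the paper absorb the $+1$ from the ceiling cleanly via $1+\frac{2k\log n}{\epsilon}\le \frac{3k\log n}{\epsilon}$, whereas your bound $|\S|\le\lceil ck\rceil$ is only ``up to rounding,'' so to match the stated $\alpha$ exactly you would want to adopt the paper's smaller scaling constant.
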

\begin{proof}
First, let $a,b$ be a solution to $\mathcal P_{\text{expectation}}$. Define $p_i = \min(1, \frac{2 \log n}{\epsilon} b_i)$ for each $i \in \F$ and form $\S = \textsc {DepRound}(p)$. Observe then that $|\S| \leq \lceil \sum_i p_i \rceil \leq \lceil \frac{2 \log n}{\epsilon} \sum b_i \rceil \leq 1 + \frac{2 k \log n}{\epsilon} \leq \frac{3 k \log n}{\epsilon}$.

For $j \in \C$, define $A = B_{j, (1+\epsilon) t_j}$. Let us note that, by properties (A3), (A1) and (A2), we have
$$
\sum_{i \in A} b_i \geq \sum_{i \in A} a_{i,j} = 1 - \sum_{i: d(i,j) > (1+\epsilon) t_j} a_{i,j} \geq  1 - \sum_{i: d(i,j) > (1+\epsilon) t_j} a_{i,j} \frac{d(i,j)}{(1+\epsilon) t_j} \geq 1 - \frac{1}{1 + \epsilon}
$$

So by property (P3) of \textsc{DepRound}, and using the bound $\epsilon < 1/2$,  have
{\allowdisplaybreaks
\begin{align*}
\Pr[d(j, \S) > (1 + \epsilon) t_j] &= \Pr[ A \cap \S = \emptyset] \leq \prod_{i \in A} (1 - p_i) \leq \prod_{i \in A} e^{-\frac{2 \log n}{\epsilon} b_i} \leq e^{ \frac{-2 \log n}{\epsilon} (1 -  \frac{1}{1+\epsilon})} \leq n^{-4/3}
\end{align*}
}

A union bound over $j \in \C$ shows that solution set $\S$ satisfies $d(j, \S) \leq (1+\epsilon) t_j$ for all $j$ with high probability.
\end{proof}

The following shows matching lower bounds:
\begin{proposition}
\label{rr4a}
\begin{enumerate}
\item There is a universal constant $K$ with the following properties. For any $k \geq 1, \epsilon \in (0,1/3)$ there is some integer $N_{k,\epsilon}$ such that for $n > N_{k,\epsilon}$, there is a homogeneous SCC instance of size $n$ in which \emph{every} $(\alpha, 1+\epsilon)$-determinization satisfies $\alpha \geq \frac{K \log n}{\epsilon}.$
\item For each $\beta \in (1,2)$ and each $k \geq 1$, there is a constant $K'_{\beta, k}$ such that, for all $n \geq 1$, there is a homogeneous SCC instance of size $n$ in which \emph{every} $(\alpha, \beta)$-determinization satisfies $\alpha \geq K'_{\beta, k} \log n.$
\item For each $\beta \in (1,3)$ and each $k \geq 1$, there is a constant $K''_{\beta, k}$ such that, for all $n \geq 1$, there is a homogeneous  instance of size $n$ in which \emph{every} $(\alpha, \beta)$-determinization satisfies $\alpha \geq K''_{\beta, k} \log n$
\end{enumerate}
\end{proposition}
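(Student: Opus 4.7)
The plan is to construct explicit families of homogeneous instances that witness each lower bound, using the $\Omega(\log n)$ integrality gap of the set-cover LP as the driving combinatorial obstruction. The fundamental ingredient is the unconditional fact that, for every $k^\star \geq 2$ and all sufficiently large $n_0$, one may construct set-cover instances on $[n_0]$ with fractional cover value at most $k^\star$ and integer cover value at least $(1-o(1)) k^\star \ln n_0$; this follows by sampling $\text{poly}(n_0)$ random subsets of size $n_0/k^\star$ and applying Chernoff together with a union bound. The construction is then specialized to each of the three regimes by an appropriate choice of $k^\star$ and an appropriate metric embedding.

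For part 3, I fix $c = c(\beta) > \ln(2\beta/(3-\beta))$, take $k^\star = k/c$, and build the $k$-center instance $\F = \{v_S : S \in \mathcal S\}$, $\C = [n_0]$, with $d(v_S, j) = 1$ for $j \in S$, $d(v_S, j) = 3$ otherwise, and all within-class distances equal to $2$; triangle inequality is routine to verify. Scaling the fractional cover by $c$ produces a fractional solution of total weight $k$ covering each $j$ with mass $c$; applying Proposition~\ref{prop:dep-round} then yields a $k$-lottery with $\Pr[j\text{ uncovered}] \leq e^{-c}$, so $t_j := 1 + 2e^{-c}$ is feasible. Since $\beta t_j < 3$ and $d(j, \S) \in \{1,3\}$, any $(\alpha, \beta)$-determinization is forced to have $d(j, \S) = 1$ for each $j$, so $\{S : v_S \in \S\}$ is an integer set cover, yielding $\alpha k \geq (1-o(1))(k/c) \ln n_0$ and hence $\alpha = \Omega(\log n)$, where $n$ is the size of the constructed instance (polynomial in $n_0$). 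Parts 2 and 1 proceed via the same template inside the SCC construction of Theorem~\ref{ath1}: augment $\{v_S\}$ with $n_0^2$ copies $w_{j,1},\dots,w_{j,n_0^2}$ of each element $j$, with $d(v_S, v_{S'}) = 1$, $d(v_S, w_{j,\ell}) = 1$ iff $j \in S$ (else $2$), and all remaining distances equal to $2$. Dependent rounding gives $t_j = 1 + e^{-c}$; in part 2, take $c > \ln(\beta/(2-\beta))$, and in part 1 with $\beta = 1+\epsilon$ the condition $e^{-c} < (1-\epsilon)/(1+\epsilon)$ reduces via Taylor expansion to $c > 2\epsilon + O(\epsilon^2)$. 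Choosing $c = \Theta(\epsilon)$ then delivers $\alpha \geq (1-o(1)) \log n_0 / (2\epsilon) = \Omega(\log n / \epsilon)$, matching the claimed $K \log n / \epsilon$ bound; the $n_0^2$-fold replication ensures no determinization of size $\alpha k = O(\log n / \epsilon)$ can cover a single $j$ by including its $w_{j,\ell}$ copies directly, so the reduction to integer set cover goes through.

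The main obstacle I anticipate is the careful parameter scaling in part 1, where the integrality-gap set-cover construction must be available with $k^\star = k/c = \Theta(k/\epsilon)$ growing like $1/\epsilon$. This forces the subset sizes $n_0/k^\star = \Theta(\epsilon n_0 / k)$ to be at least a constant, and the Chernoff and union bounds to hold uniformly; both demand $n_0 \geq N_{k,\epsilon} = \Theta(k/\epsilon)$, which is precisely the threshold appearing in the statement. A secondary subtlety is the degenerate regime where $c(\beta) > k$ (e.g., very small $k$ with $\beta$ near the boundary of its interval), which forces $k^\star < 1$ and breaks the scaled-LP construction; these cases are handled by falling back to a simple cluster-based instance of $n$ points at mutual distance~$1$, yielding an even stronger but only locally applicable $\alpha = \Omega(n)$ bound that suffices to furnish a positive $K''_{\beta,k}$ in the remaining corner.
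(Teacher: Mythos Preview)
Your argument is correct and complete, but it follows a different construction from the paper's.  The paper works directly with an Erd\H{o}s--R\'{e}nyi random graph $G(n,p)$ with $p = 3\epsilon/k$ (for part~1) and with $p$ depending on $\beta$ (for parts~2 and~3), invokes a known asymptotic lower bound on the domination number $J = \Omega(k\log n/\epsilon)$, takes the shortest-path metric with edge/non-edge distances $1$ and $2$ (or $1$ and $3$ in the non-SCC case), and uses as its $k$-lottery simply the \emph{uniform} distribution on $\binom{[n]}{k}$; a degree concentration bound then gives $\bE[d(j,\S)] \leq 1 + e^{-\Theta(\epsilon)}$.  Any determinization with $\beta t_j < 2$ must be a dominating set, forcing $\alpha k \geq J$.

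Your route instead starts from an explicit set-cover integrality-gap instance (random subsets of prescribed size), scales the fractional cover by a parameter $c$ and applies \textsc{DepRound} to realize the $k$-lottery, and then embeds into the SCC world via the $n_0^2$-replication trick of Theorem~\ref{ath1}.  The two constructions are morally the same combinatorial obstruction (in $G(n,p)$ each vertex neighborhood \emph{is} a random set, and the domination number \emph{is} the minimum cover), but yours is more self-contained: you do not need to cite an external domination-number theorem, and the dependent-rounding step makes the passage from fractional to stochastic coverage explicit rather than relying on a hypergeometric calculation.  The price is slightly more bookkeeping (the $n_0^2$-replication and the case-split on $\alpha k \gtrless n_0^2$, and the corner case $k^\star < 1$ that you correctly flagged), whereas the paper's random-graph argument handles all three parts with a single unified template and a one-line change of edge probability.
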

\begin{proof}
These three results are very similar, so we show the first one in detail and sketch the difference between the other two.

Consider an Erd\H{o}s-R\'{e}nyi random graph $G \sim \mathcal G(n,p)$, where $p = 3 \epsilon / k$; note that $p \in (0,1)$.  As shown by \cite{glebov} asymptotically almost surely the domination number $J$ of $G$ satisfies $J =  \Omega( \frac{k \log n}{\epsilon})$.

We construct a related instance with $\F = \C = [n]$, and where $d(i,j) = 1$ if $(i,j)$ is an edge, and $d(i,j) = 2$ otherwise.  Note that if $X$ is not a dominating set of $G$, then some vertex of $G$ has distance at least $2$ from it; equivalently, $\max_j d(j,X) \geq 2$ for every set $X$ with $|X| < J$.

Chernoff's bound shows that every vertex of $G$ has degree at least $u = 0.9 n p$ with high probability. Assuming this event has occured, we calculate $\bE[d(j, \S)]$ where $\S$ is drawn from the uniform distribution on $\binom{\F}{k}$. Note that $d(j, \S) \leq 1$ if $j$ is a neighbor of $X$ and $d(j, \S) = 2$ otherwise, so
$$
\bE[d(j, \S)] \leq 1 + \frac{ \binom{n - u}{k}}{\binom{n}{k}} \leq 1+ e^{-0.9 p k} = 1 + e^{-2.7 \epsilon}.
$$

Both the bound on the domination number and the minimum degree of $G$ hold with positive probability for $n$ sufficiently large (as a function of $k, \epsilon$). In this case, $t_j = 1 + e^{-2.7 \epsilon}$ is a feasible homogeneous demand vector. At the same time, every set $\S \in \binom{F}{J-1}$ satisfies $\min_{j \in \C} d(j, \S) \geq 2$. Thus, an $(\alpha, \beta)$-determinization cannot have $\alpha < \frac{J}{k} = \Theta(\frac{\log n}{\epsilon})$ and $\beta \leq \frac{2}{1 + e^{-2.7 \epsilon}}$. Note that $\frac{2}{1 + e^{-2.7 \epsilon}} \geq 1 + \epsilon$ for $\epsilon < 1/3$. Thus, whenever $\beta \leq 1 + \epsilon$, we have $\alpha \geq \Theta( \frac{\log n}{\epsilon} )$.

For the second result, we use the same  construction as above with $p = 1 - \tfrac{1}{2} (\lambda/2)^{1/k}$ where $\lambda = 2 - \beta$. A similar analysis shows that the vector $t_j = 1 + \lambda/2$ is feasible with high probability and $|J| \geq \Omega(k \log n)$ (where the hidden constant may depend upon $\beta, k$). Thus, unless $\alpha \geq \Omega(\log n)$, the approximation ratio achieved is $\frac{2}{1 + \lambda/2} \geq \beta$.

The third result is similar to the second one, except that we use a random bipartite graph. The left-nodes are associated with $\F$ and the right-nodes with $\C$. For $i \in \F$ and $j \in \C$, we define $d(i,j) = 1$ if $(i,j)$ is an edge and $d(i,j) = 3$ otherwise.
\end{proof}

\subsection{The case of $\alpha = 1$}
We finally consider the case $\alpha = 1$, that is, where  the constraint on the number of open facilities is respected \emph{exactly}. By Observation~\ref{lb-obs}, we must have $\beta \geq k+1$ here. The following greedy algorithm gives a $(1, k+2)$-determinization, nearly matching this lower bound. 
\begin{algorithm}[H]
\begin{algorithmic}[1]
\STATE Initialize $\S = \emptyset$
\FOR{$\ell = 1, \dots, | \F |$}
\STATE Let $\mathcal C_{\ell}$ denote the set of points $j \in \C$  with $d(j, \S) > (k+2) t_j$
\STATE If $\mathcal C_{\ell} = \emptyset$, then return $\S$.
\STATE Select the point $j_{\ell} \in \mathcal C_{\ell}$ with the smallest value of $t_{j_{\ell}}$.
\STATE Update $\S \leftarrow \S \cup \{ V_{j_{\ell}} \}$
\ENDFOR
\end{algorithmic} 
\caption{$(1, k+2)$-determinization algorithm}
\label{algo:det2}
\end{algorithm}

\begin{theorem}
If the values $t_j$ are feasible, then Algorithm~\ref{algo:det2} outputs a $(1, k+2)$-determinization in $O(|\mathcal \F| | \mathcal \C |)$ time.
\end{theorem}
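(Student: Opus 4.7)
The plan has three pieces: (i) the distance guarantee, (ii) the cardinality bound $|\S| \leq k$, and (iii) the running time. Pieces (i) and (iii) are routine, so I will sketch them briefly and focus on (ii).

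For (i), if the algorithm exits at step (4), then $\mathcal{C}_\ell = \emptyset$ for some $\ell$, meaning every $j \in \C$ satisfies $d(j, \S) \leq (k+2) t_j$ at that point. For (iii), I would presort $\C$ by $t_j$ and maintain $d(j, \S)$ incrementally: each time a new facility enters $\S$, all $|\C|$ distances can be refreshed in $O(|\C|)$ time, and finding the next $j_\ell$ amortizes to a single pass over the sorted list. Since at most $k \leq |\F|$ facilities are added, the total running time is $O(|\F|\,|\C|)$.

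The heart of the proof is (ii): the loop cannot reach a $(k+1)$-st iteration. The plan is to assume for contradiction that $j_1, \ldots, j_{k+1}$ are all selected, and derive a pairwise separation among them strong enough to contradict feasibility. Because the greedy rule always picks the minimum-$t$ element of $\mathcal{C}_\ell$ and $\mathcal{C}_{\ell+1} \subseteq \mathcal{C}_\ell \setminus \{j_\ell\}$, the values are sorted: $t_{j_1} \leq t_{j_2} \leq \cdots \leq t_{j_{k+1}}$. For $s < \ell$, the facility $V_{j_s}$ already lies in $\S$ when $j_\ell$ is chosen, so $d(j_\ell, V_{j_s}) > (k+2) t_{j_\ell}$. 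Combining this with the bound $\theta(j_s) \leq t_{j_s}$, which follows from $\theta(j_s) \leq \bE_{\S^* \sim \Omega}[d(j_s, \S^*)] \leq t_{j_s}$ under any feasible lottery $\Omega$, and the triangle inequality yields
\[
d(j_\ell, j_s) > (k+2) t_{j_\ell} - t_{j_s} \geq (k+1) \max(t_{j_\ell}, t_{j_s}).
\]

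To convert this separation into a contradiction, I would use pigeonhole paired with an averaging argument over a feasible lottery $\Omega$. For any draw $\S^* \sim \Omega$, match each $j_\ell$ to its closest facility in $\S^*$; since $|\S^*| = k < k+1$, some pair $j_a, j_b$ with $a < b$ must share a matched facility $i$, so
\[
d(j_a, \S^*) + d(j_b, \S^*) \geq d(j_a, i) + d(j_b, i) \geq d(j_a, j_b) > (k+1) t_{j_b}.
\]
Dividing by $t_{j_b} \geq t_{j_a}$ gives $d(j_a, \S^*)/t_{j_a} + d(j_b, \S^*)/t_{j_b} > k+1$ with probability one, hence $\sum_{\ell=1}^{k+1} d(j_\ell, \S^*)/t_{j_\ell} > k+1$ almost surely. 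Taking expectations and applying feasibility yields $\sum_\ell \bE[d(j_\ell, \S^*)]/t_{j_\ell} \leq k+1$, a contradiction. I expect this to be the step requiring the most care: natural LP volume arguments or Markov-style bounds on balls around the $j_\ell$'s seem to fall short of giving a contradiction by a constant factor, whereas the pigeonhole trick above directly exploits the integrality $|\S^*| = k$ and the strict separation $> (k+1) t_{j_b}$.
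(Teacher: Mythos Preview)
Your proposal is correct and follows essentially the same approach as the paper: assume $k+1$ clients $j_1,\dots,j_{k+1}$ are selected, use pigeonhole on a draw $\S^*$ from a feasible lottery to find two of them sharing a nearest facility, combine the triangle inequality with the strict separation inherited from the greedy selection to get $\sum_\ell d(j_\ell,\S^*)/t_{j_\ell} > k+1$ almost surely, and contradict feasibility by taking expectations. The only cosmetic difference is that you first isolate the client--client separation $d(j_a,j_b) > (k+1)t_{j_b}$ and then apply pigeonhole, whereas the paper runs the triangle inequality directly through $V_{j_\ell}$ after pigeonhole; the arithmetic and the key ideas (including $\theta(j) \le t_j$ from feasibility) are identical.
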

\begin{proof}
For the runtime bound, we first compute $V_j$ for each $j \in \C$; this requires $O( |\mathcal F| |\mathcal C|)$ time upfront. When we update $\S$ at each iteration $\ell$, we update and maintain the quantities $d(j,\S)$  quantities by computing $d(j, V_{j_{\ell}})$ for each $j \in \C$. This takes $O(|\C|)$ time per iteration.

To show correctness, note that if this procedure terminates at iteration $\ell$, we have $\mathcal C_{\ell} = \emptyset$ and so every point $j \in \C$ has $d(j, \S) \leq (k+2) t_j$. The resulting set $\S$ at this point has cardinality $\ell - 1$. So we need to show that the algorithm terminates before reaching iteration $\ell = k+2$. 

Suppose not; let the resulting points be $j_1, \dots, j_{k+1}$ and for each $\ell = 1, \dots, k+1$ let $w_{\ell} = t_{j_\ell}$.  Because $j_{\ell}$ is selected to minimze $t_{j_{\ell}}$ we have $w_1 \leq w_2 \leq \dots \leq w_{k+1}$.

Now, let $\Omega$ be a $k$-lottery satisfying $\bE_{\S \sim \Omega}[ d(j, \S)] \leq t_j$ for every $j \in \mathcal C$, and consider the random process of drawing $\S$ from $\Omega$. Define the random variable $D_{\ell} = d(j_{\ell}, \S)$ for $\ell =1, \dots, k+1$. For any such $\S$, by the pigeonhole principle there must exist some pair $j_{\ell}, j_{\ell'}$ with $1 \leq \ell < \ell' \leq k+1$ which are both matched to a common facility $i \in \S$, that is
$$
D_{\ell} = d(j_{\ell}, \S) = d(j_{\ell}, i), D_{\ell'} = d(j_{\ell'}, \S) = d(j_{\ell'}, i).
$$

By the triangle inequality, 
$$
d(j_{\ell'}, V_{j_{\ell}}) \leq d(j_{\ell'}, i) + d(i, j_{\ell}) + d(j_{\ell}, V_{j_{\ell}}) = D_{\ell'} + D_{\ell} + \theta(j_{\ell})
$$

On the other hand, $j_{\ell'} \in C_{\ell'}$ and yet $V_{j_{\ell}}$ was in the partial solution set $\S$ seen at iteration $\ell'$. Therefore, it must be that
$$
d(j_{\ell'}, V_{j_{\ell}}) > (k+2) t_{j_{\ell'}} = (k+2) w_{\ell'}
$$

Putting these two inequalities together, we have shown that
$$
D_{\ell} + D_{\ell'} + \theta(j_{\ell}) > (k+2) w_{\ell'}.
$$

As $\theta(j_{\ell}) \leq w_{\ell} \leq w_{\ell'} $, this implies that
$$
\frac{D_{\ell}}{w_{\ell}} + \frac{D_{\ell'}}{w_{\ell'}} \geq \frac{D_{\ell} + D_{\ell'}}{w_{\ell'}} > \frac{ (k+2) w_{\ell'} - \theta(j_{\ell})}{w_{\ell'}} \geq \frac{ (k+2) w_{\ell'} - w_{\ell}}{w_{\ell'}} \geq \frac{ (k+2) w_{\ell'} - w_{\ell'}}{w_{\ell'}} = k+1.
$$

We have shown that, with probability one, there is some pair $\ell < \ell'$ satisfying this inequality $D_{\ell}/w_{\ell} + D_{\ell'}/w_{\ell'} > k+1$. Therefore, with probability one it holds that
\begin{equation}
\label{tr77}
\sum_{\ell=1}^{k+1} D_{\ell}/w_{\ell} > k+1.
\end{equation}

But now take expectations, observing that $\bE[D_{\ell}] = \bE[ d(j_{\ell}, \S) ] \leq t_{j_{\ell}} = w_{\ell}$. So the LHS of (\ref{tr77}) has expectation at most $k+1$. This is a contradiction.
\end{proof}

We remark that it is possible to obtain an optimal $(1, k+1)$-determinization algorithm for the SCC or homogeneous settings, but we omit this since it is very similar to Algorithm~\ref{algo:det2}.

\section{Acknowledgments}
Our sincere thanks to Brian Brubach and to the anonymous referees, for many useful suggestions and for helping to tighten the focus of the paper.

Thanks to Leonidas Tsepenekas, for pointing out an error in Theorem 14 of the journal version of the paper (appearing in the Journal of Machine Learning Research).
\bibliographystyle{abbrv}
\bibliography{kcenter-ref}

\appendix

\section{Proof of Pseudo-Theorem~\ref{thm:k-center-1.592}}
\label{proof1592}
We would like to use Lemma~\ref{prop31} to bound $\bE[d(i,\S)]$, over all possible integer values $m \geq 1$ and over all possible sequences $1 \geq u_1 \geq u_2 \geq u_3 \geq \dots \geq u_t \geq 0$. One technical obstacle here is that this is not a compact space, due to the unbounded dimension $t$ and unbounded parameter $m$. The next result removes these restrictions.

\begin{proposition}
\label{hat-t-prop1}
For any fixed integers $L, M \geq 1$, and every $j \in \C$, we have
$$
\bE[d(j,\S)] \leq 1 + \max_{\substack{m \in \{1, 2, \dots, M\} \\ 1 \geq u_1 \geq u_2 \geq \dots u_L \geq 0}} \bE_{Q} \hat R(m, u_1, u_2, \dots, u_L),
$$
where we define
{\allowdisplaybreaks
\begin{align*}
\alpha &= \prod_{\ell=1}^{L-1} (1 - \overline \qp (u_{\ell} - u_{\ell+1})) \times e^{-\overline \qp u_L}, \\
\beta  &= \prod_{\ell=1}^{L-1} (\overline{u_{\ell}} + \overline \qp u_{\ell+1}) \times \begin{cases}
(1 - u_L) & \text{if $u_L \leq \qp$} \\
e^{-\frac{u_L - \qp}{1 - \qp}}  (1 - \qp) & \text{if $u_L > \qp$} \\
\end{cases}, \\
\hat R(m, u_1, \dots, u_L)  &= \begin{cases}
(1 - \overline \qf \frac{\overline{u_1}}{m})^m   \alpha  + ( \overline \qf  (1 - \frac{\overline{u_1}}{m}) )^m \beta  & \text{if $m < M$}  \\
e^{- \overline \qf \overline{u_1}} \alpha  + \overline{\qf}^M e^{-\overline{u_1}} \beta & \text{if $m = M$}
\end{cases}.
\end{align*}
}
The expectation $\bE_Q$ is taken only over the randomness involved in $\qf, \qp$.
\end{proposition}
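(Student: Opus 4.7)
The plan is to bound, for each fixed realization of $(\qf,\qp)$, the expression from Lemma~\ref{prop31} over the \emph{full} range of $(m,t,u_1,\dots,u_t)$ by $1 + \hat R(m',u_1',\dots,u_L')$ evaluated at a suitable choice in the compact domain $\{1,\dots,M\} \times \{u' : 1 \geq u_1' \geq \dots \geq u_L' \geq 0\}$, and then take $\bE_Q$ of both sides. The work splits naturally into controlling the unboundedness in $m$ and in $t$ independently.

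For the reduction in $m$: when $m \leq M-1$ we set $m' = m$ and keep the same expression. When $m \geq M$ we set $m' = M$ and apply two standard inequalities: $(1 - x/m)^m \leq e^{-x}$ bounds the first factor by $e^{-\overline\qf\,\overline{u_1}}$, and $\overline\qf^{\,m} \leq \overline\qf^{\,M}$ combined with $(1 - \overline{u_1}/m)^m \leq e^{-\overline{u_1}}$ bounds the second factor by $\overline\qf^{\,M} e^{-\overline{u_1}}$, matching the $m=M$ branch of $\hat R$. For the reduction in $t$: if $t \leq L$ we pad the sequence with $u_{t+1}' = \dots = u_L' = 0$; each padded factor $(1-\overline\qp \cdot 0)$, $(\overline{0} + \overline\qp \cdot 0)$ and the tail factors $e^{-\overline\qp \cdot 0}$, $(1-0)$ evaluate to $1$, so $\hat R$ reproduces Lemma~\ref{prop31}'s expression. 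If $t > L$ we truncate $u' = (u_1,\dots,u_L)$ and bound the two tail products. The $\alpha$-tail is immediate: $\prod_{\ell=L}^t (1 - \overline\qp(u_\ell - u_{\ell+1})) \leq \prod_{\ell=L}^t e^{-\overline\qp(u_\ell - u_{\ell+1})} = e^{-\overline\qp(u_L - u_{t+1})} = e^{-\overline\qp u_L}$, the sum telescoping because $u_{t+1}=0$.

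The main obstacle is the $\beta$-tail bound $\prod_{\ell=L}^t (\overline{u_\ell} + \overline\qp u_{\ell+1}) \leq h(u_L)$, where $h(x) = 1-x$ for $x \leq \qp$ and $h(x) = (1-\qp) e^{-(x-\qp)/\overline\qp}$ for $x > \qp$. I would proceed by induction on $t-L$. The base case $t=L$ is $1 - u_L \leq h(u_L)$; the nontrivial branch $u_L > \qp$ follows from $1 - y \leq e^{-y}$ with $y = (u_L-\qp)/\overline\qp$. The inductive step factors off the leading term and applies the hypothesis to the shifted sequence, reducing the task to verifying $(1 - u_L + \overline\qp v)\,h(v) \leq h(u_L)$ for every $v \in [0,u_L]$. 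A case analysis on the positions of $u_L, v$ relative to $\qp$ completes the argument: when $u_L \leq \qp$ the inequality simplifies algebraically to $v[u_L - \qp - \overline\qp v] \leq 0$, which is immediate from $u_L - \qp \leq 0$; when $u_L > \qp$ one locates the interior critical point of $v \mapsto (1 - u_L + \overline\qp v)\,h(v)$ on each side of $\qp$ and verifies the bound using auxiliary estimates, notably $(2-z)^2/4 \leq e^{-z}$ on $z \in [0,2]$ (which itself follows from $1 - e^{-z/2} \leq z/2$) in the subcase $v \leq \qp$, and a direct substitution exploiting $u_L > \qp(1 + 2\overline\qp)$ together with $-\ln(1-\qp) \geq \qp$ in the subcase $v > \qp$. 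Taking $\bE_Q$ of the pointwise bound yields the proposition.
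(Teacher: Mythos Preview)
Your proposal is correct. The reduction in $m$ and the $\alpha$-tail bound match the paper's argument exactly, and your padding argument for $t \leq L$ is fine.

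For the $\beta$-tail you take a genuinely different route from the paper. The paper keeps only the leading factor $(1-u_L+\overline\qp u_{L+1})$ exact and bounds every remaining factor $\overline{u_\ell}+\overline\qp u_{\ell+1}$ by $e^{-u_\ell+\overline\qp u_{\ell+1}}$; the exponents telescope (losing a nonnegative amount $\qp\sum_{\ell\geq L+2}u_\ell$) to give $(1-u_L+\overline\qp u_{L+1})e^{-u_{L+1}}$, and then a single one-variable optimization over $u_{L+1}\in[0,u_L]$ yields the piecewise expression $h(u_L)$ directly. Your induction instead reduces to the functional inequality $(1-u_L+\overline\qp v)h(v)\leq h(u_L)$, which you then verify by a case analysis that, while correct (the key identities $(2-z)^2/4\leq e^{-z}$ and $-\ln(1-\qp)\geq \qp$ do exactly the job you describe), is noticeably more work. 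The paper's approach is shorter and more transparent because it isolates a single free parameter $u_{L+1}$ in one step; your approach has the minor conceptual advantage that it never needs to pass to infinite products, but here that buys little. Either way the outcome is the same bound.
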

\begin{proof}
By Lemma~\ref{prop31},
\begin{align*}
\bE[d(i,\S) \mid \qf, \qp] & \leq 1 +  \left( 1 - \overline \qf \frac{ \overline{u_1}}{m} \right)^m  \prod_{\ell=1}^t (1 - \overline \qp (u_{\ell} - u_{\ell+1})) + \left( \overline \qf (1 - \frac{\overline{u_1}}{m}) \right)^m \prod_{\ell=1}^t (\overline{u_{\ell}} + \overline \qp u_{\ell+1}).
\end{align*}
where $u_1, \dots, u_t, m$ are defined as in Lemma~\ref{prop31}; in particular $1 \geq u_1 \geq u_2 \geq \dots \geq u_t \geq u_{t+1} = 0$ and $m \geq 1$. If we define $u_j = 0$ for all integers $j \geq t$, then
\begin{align*}
\bE[d(i,\S) \mid \qf, \qp] & \leq 1 +  \left( 1 - \overline \qf \frac{ \overline{u_1}}{m} \right)^m  \prod_{\ell=1}^\infty (1 - \overline \qp (u_{\ell} - u_{\ell+1})) + \left( \overline \qf (1 - \frac{\overline{u_1}}{m}) \right)^m \prod_{\ell=1}^\infty (\overline{u_{\ell}} + \overline \qp u_{\ell+1}).
\end{align*}

The terms corresponding to $\ell > L$ telescope so we estimate these as:
\begin{align*}
 \prod_{\ell=L}^{\infty}  (1 - \overline \qp (u_{\ell} - u_{\ell+1})) &\leq  \prod_{\ell=L}^{\infty} e^{- \overline \qp (u_{\ell} - u_{\ell+1})} = e^{-\overline \qp u_L}
\end{align*}
and
\begin{align*}
\prod_{\ell=L}^{\infty}  (\overline{u_{\ell}} + \overline \qp u_{\ell+1}  ) &\leq   (1 - u_L + \overline \qp u_{L+1})   \prod_{\ell=L+1}^{\infty} e^{-u_{\ell} + \overline \qp  u_{\ell+1}} \leq  (1 - u_L + \overline \qp u_{L+1}) e^{-u_{L+1}}.
\end{align*}

Now consider the expression $(1 - u_L + \overline \qp u_{L+1}) e^{-u_{L+1}}$ as a function of $u_{L+1}$ in the range $u_{L+1} \in [0,u_L]$. Elementary calculus shows that it satisfies the bound
$$
(1 - u_L + \overline \qp u_{L+1}) e^{-u_{L+1}} \leq
\begin{cases}
(1 - u_L) & \text{if $u_L \leq \qp$} \\
e^{-\frac{u_L - \qp}{1 - \qp}}  (1 - \qp) & \text{if $u_L > \qp$} \\
\end{cases},
$$

Thus
\begin{align*}
\bE[d(i,\S) \mid \qf, \qf] &\leq 1 +  \left( 1 - \overline \qf \frac{\overline{u_1}}{m} \right)^m  \alpha   + 
\left( \overline \qf (1 - \frac{\overline{u_1}}{m})  \right)^m \beta.
\end{align*}

If $m < M$ we are done. Otherwise, for $m \geq M$, we upper-bound the $\qf$ terms as:
\[
(1 - \overline \qf \overline{u_1}/m)^m \leq  e^{-\overline \qf \overline{u_1}}, \qquad \qquad (\overline \qf (1 - \overline{u_1}/m))^m \leq \overline{\qf}^{M} e^{-\overline{u_1}} \qedhere
\]
\end{proof}

In light of Proposition~\ref{hat-t-prop1}, we can get an upper bound on $\bE[d(i,\S)]$ by maximizing $\hat R$. We now discuss to  bound $\hat R$ for a fixed choice of $L,M$, where we select $\qf, \qp$ according to the following type of distribution:
$$
( \qf, \qp ) = \begin{cases}
(\gamma_{0,\text{f}}, 0) & \text{with probability $p$} \\
(\gamma_{1,\text{f}}, \gamma_{1,\text{p}}) & \text{with probability $1-p$}
\end{cases}.
$$

Note that we are setting $\gamma_{0,\text{p}} = 0$ here in order to keep the search space more manageable. We need to upper-bound the function $\bE_Q  \hat R(m, u_1, \dots, u_L)$, for fixed values $\gamma$ and where $m, u$ range over the compact domain $m \in \{1, \dots, M \}, 1 \geq u_1 \geq \dots \geq u_L \geq 0$. The most straightforward way to do so would be to divide $(u_1, \dots, u_L)$ into intervals of size $\epsilon$, and then calculate upper bounds on $\hat R$ within each interval and for each $m$. This process would have a runtime of $M \epsilon^{-L}$, which is too expensive.

But suppose we have fixed $u_j, \dots, u_L$, and we wish to continue to enumerate over $u_{1}, \dots, u_{j-1}$. To compute $\hat R(m, u_1, \dots, u_L)$ as a function of $m, u_{1}, \dots, u_L$, observe that we do not need to know all the  values $u_{j+1}, \dots, u_L$, but only the following four summary statistics of them:
\begin{enumerate}
\item $a_1 = e^{-\overline{\gamma_{1,\text{p}}} u_L} \prod_{\ell=j}^{L-1} (1 - \overline{\gamma_{1, \text{p}}} (u_{\ell} - u_{\ell+1}))$,
\item $a_2 = \prod_{\ell=j}^{L-1} (\overline{u_{\ell}} + \overline{\gamma_{1,\text{p}}} u_{\ell+1}) \times \begin{cases}
(1 - u_L) & \text{if $u_L \leq \gamma_{1,\text{p}}$} \\
e^{-\frac{u_L - \gamma_{1,\text{p}}}{1 - \gamma_{1,\text{p}}}} (1 - \gamma_{1,\text{p}}) & \text{if $u_L > \gamma_{1,\text{p}}$} \\
\end{cases},
$
\item $a_3 = e^{-u_L} \prod_{\ell=j}^{L-1} (1 - (u_{\ell} - u_{\ell+1}))$,
\item $a_4 = u_{j+1}$.
\end{enumerate}
We thus use a dynamic program wherein we track, for $j = L, \dots, 1$, all possible values for the tuple $(a_1, \dots, a_4)$.  Furthermore, since $\hat R$ is a monotonic function of $a_1, a_2, a_3, a_4$, we only need to store the \emph{maximal} tuples $(a_1, \dots, a_4)$.  The resulting search space has size $O(\epsilon^{-3})$. 

We wrote $C$ code to perform this computation to upper-bound $\bE_Q \hat R$ with $M = 10, \epsilon = 2^{-12}, L = 7$. This runs in about an hour on a single CPU core. With some additional tricks, we can also optimize over the parameter $p \in [0,1]$ while still keeping the stack space bounded by $O(\epsilon^{-3})$.

Note that due to the complexity of the dynamic programming algorithm,  the computations were carried out in double-precision floating point arithmetic. The rounding errors were not tracked precisely, and it would be difficult to write completely correct code to do so. We believe that these errors should be orders of magnitude below the third decimal place, and that the computed value $1.592$ is a valid upper bound. 
\end{document}